\newtheorem{theorem}{Theorem}[section]
\newtheorem{lemma}[theorem]{Lemma}
\newtheorem{corollary}[theorem]{Corollary}
\begin{document}

\title{Notions of Adiabatic Drift in the Quantized Harper model}
\author{Alice C. Quillen}
\email{aquillen@ur.rochester.edu}
\affiliation{Department of Physics and Astronomy, University of Rochester,  Rochester, NY, 14627, USA}
\author{Nathan Skerrett} 
\affiliation{Department of Mathematics, University of Rochester,  Rochester, NY, 14627, USA}
\author{Damian R. Sowinski} 
\affiliation{Department of Physics and Astronomy, University of Rochester,  Rochester, NY, 14627, USA}
\author{Abobakar Sediq Miakhel} 
\affiliation{Department of Physics and Astronomy, University of Rochester,  Rochester, NY, 14627, USA}
\date{\today}

\begin{abstract}

We study a quantized, discrete and drifting version of the Harper Hamiltonian, also called the finite almost Mathieu operator, which resembles the pendulum Hamiltonian but in phase space is confined to a torus.  Spacing between pairs of eigenvalues of the operator spans many orders of magnitude,  with nearly degenerate pairs of states at energies that are associated with circulating orbits in the associated classical system.  When parameters of the system slowly vary, both adiabatic and  diabatic transitions can take place at drift rates that span many orders of magnitude.  Only under an extremely negligible drift rate would all transitions into superposition states be suppressed.  The wide range of energy level spacings could be a common property of quantum systems with non-local potentials that are related to resonant classical dynamical systems.  Notions for adiabatic drift are discussed for quantum systems that are associated with classical ones with divided phase space.  
 
\end{abstract}

\maketitle
\tableofcontents

\section{Introduction} \label{sec:intro}


In this study we contrast and explore notions of adiabatic behavior for a drifting classical and associated quantum Hamiltonian system on a torus.  
An advantage of studying a quantized system on the torus is that
it is finite dimensional, facilitating numerical calculations and 
potential applications in quantum computing.   
Finite dimensional quantum spaces are relevant for evolution of spin systems (e.g., \citealt{Bossion_2022})
that are studied in the context of chemistry and nuclear physics. 
The ability to estimate whether a system 
behaves adiabatically is relevant for implementing effective adiabatic quantum computation algorithms \citep{Albash_2018}  which can be used to solve satisfiability and other combinatorial search problems and some optimization problems  \citep{Santoro_2006}.   The system we study, the quantized Harper model,  
resembles the pendulum dynamical system and that model describes superconducting transmons \citep{Koch_2007,Blais_2021,Cohen_2023}, so 
our study is relevant for control of quantum computers that leverage these devices \citep{Champion_2024}. 
 
Complex classical dynamical systems, including multiple planet systems \citep{Quillen_2011} and 
particles in a plasma \citep{Chirikov_1979}, can exhibit resonant behavior that can 
be described with Hamiltonian models that resemble the pendulum.   
The classical pendulum Hamiltonian 
\begin{align}
H_{\rm pend}(p,\phi) = \frac{p^2}{2} - \epsilon \cos \phi  \label{eqn:pend}
\end{align}
where momentum $p$ and angle $\phi$ are canonical coordinates that are functions of time $t$ and 
constant parameter $\epsilon$ describing the resonance strength sets the oscillation frequency at the bottom of the cosine potential well.   
This system  
exhibits two types of dynamical behavior: {\it libration}, where the angle $\phi(t)$ oscillates   
about zero, and {\it circulation}, where $\phi$ continuously increases or continuously decreases in time. In phase space, the two types of behavior are separated by an orbit, known as the separatrix, that has an infinite period
and contains the hyperbolic fixed point located at $(p,\phi)=(0,\pi)$.   

A more general version of 
the pendulum model of equation \ref{eqn:pend}  depends upon 
three parameters $a, b, \epsilon$; 
\begin{align}
H_{\rm pend}(p,\phi)' = a\frac{(p-b)^2}{2} - \epsilon \cos \phi  \label{eqn:pend2}. 
\end{align}
The parameter $b$ shifts the locations of the fixed points. 
We consider the situation where the parameters
$a,b,\epsilon$ are slowly varying in time.   An example setting is the dynamics of 
a migrating moon or planet \citep{Borderies_1984}.  
In celestial mechanics, the unperturbed system is expanded to second order in 
an action variable (giving the quadratic kinetic energy term).  A gravitational perturbations from a planet 
is expanded in a Fourier series (giving $\phi$ dependent terms). 
A motivation for studying a system resembling the pendulum is that 
the Hamiltonian of a transmon in a resonator circuit QED setup takes the form of the 
Hamiltonian in equation \ref{eqn:pend2}  
with momentum $p = n$ equal to the charge operator,  the angle $\phi$ equal to the phase operator, 
the coefficient $a= 4E_C$ with $E_C$ equal to the charging energy, 
the coefficient $\epsilon = E_J$, the Josephson energy and $b= n_g$  an offset charge
\citep{Koch_2007,Blais_2021}.   

%
In classical Hamiltonian systems, Liouville's theorem implies that 
adiabatic drift is associated with near conservation of 
an action variable which depends upon a contour or surface in phase space that  
 encloses a constant phase space volume.  
However,  a particle that nears a separatrix orbit as the system varies must enter a non-adiabatic 
dynamical regime because the period of the separatrix is infinite due to the presence of a hyperbolic fixed point. 
Overcoming this difficulty, 
the Kruskal-Neishtadt-Henrard (KNH) theorem  \citep{Neishtadt_1975,Henrard_1982} 
relates the probability of transition between different phase space regions 
 to the rates that the enclosed phase space volumes vary. 
In celestial mechanics this process is called {\it resonance capture} (e.g., \cite{Borderies_1984,Yoder_1979}). 
The capture probability computed via the Kruskal-Neishtadt-Henrard theorem is accurate only 
 if the drift rate is slower than a dimensionally derived function of the resonance libration frequency 
 which also describes the rate that orbits diverge from unstable fixed points \citep{Quillen_2006}.   
Even though dynamics near the separatrix is not actually adiabatic, resonance 
capture is often described as an adiabatic process as it only occurs at a slow drift rate.  Classical 
Hamiltonian dynamics is deterministic, rather than probabilistic.  However the sensitivity of 
the dynamics near a hyperbolic fixed point contained in the separatrix orbit means that the asymptotic 
behavior of a distribution of orbits that crosses a separatrix is well described with a probability. 

Notions of adiabatic drift differ between classical and quantum systems. 
In a quantum system, adiabatic variation is often used to describe a slowly varying Hamiltonian operator
that drifts sufficiently slowly that a system initially in an eigenstate of the operator remains in an 
eigenstate of the Hamiltonian operator \citep{Born_1928}.    This principle is known as the {\it adiabatic theorem}. 
When the system drifts fast enough that a transition from one
eigenstate to another takes place, the transition is called {\it diabatic}. 
The Landau-Zener model \citep{Zener_1932,Wittig_2005,Watanabe_2021} for a drifting 2-state quantum system gives an expression for the probability of a diabatic transition as a function of the drift rate and the minimum energy difference between the two states at their closest approach.    

Using Bohr-Sommerfeld quantization and the WKBJ semi-classical limit, \citet{Stabel_2022}
showed that a quantum version of the Kruskal-Neishtadt-Henrard theorem holds for a drifting quantized double well potential. 
In this system, as the Hamiltonian operator slowly varies, there is a lattice of 
 of avoided energy level crossings in the vicinity of the separatrix of the associated classical system.  
Each energy level crossing can 
be approximated via the two-level Landau-Zener model, giving a connection between the probabilities of transition 
between quantum states and the growth rates of phase space areas \citep{Stabel_2022}.
 
We build upon the work by \citet{Stabel_2022}, however,  instead of a varying double well potential 
and using a semi-classical limit,
we examine a quantized system on the torus that resembles the pendulum, known as the Harper model,  
which is equivalent to a discrete version of the Mathieu operator (arising from
Schr\"odinger's equations for the quantum pendulum) called the finite almost Mathieu operator 
\citep{Strohmer_2021}.   An advantage of using a discrete or finite dimensional operator is that
we can explicitly calculate the energy differences between eigenstates.  
The classical Hamiltonian of the Harper model \citep{Harper_1955}, describing the motion
of electrons in a 2-dimensional lattice in the presence of a magnetic field,  is  
\begin{align}
H_{\rm Harper}(p,\phi) = a\cos p + \epsilon\cos \phi, 
\end{align}
with canonical coordinates $p, \phi \in [-\pi,\pi]$ in a doubly periodic domain and real coefficients $a, \epsilon$. 
For small momentum $p$,  a constant minus the kinetic term is $1-\cos p \approx p^2/2$ which 
resembles  kinetic energy, so 
at small $p$, the Harper model exhibits dynamics similar to the pendulum. 
An advantage of studying the Harper Hamiltonian is that it can be quantized in a finite dimensional 
complex vector space.   For this model, because it is finite dimensional and can be simply written 
in terms of clock and shift operators (see appendix \ref{ap:defs}), it is straight-forward to compute its eigenvalues to high precision and calculate the distance between eigenvalues 
during avoided crossings.  For the Harper model, quantization via Bohr-Sommerfeld quantization, 
by using two mutually unbiased bases related via Fourier transform to describe the operators $\hat p, \hat \phi$,  or with a point operator constructed from discrete coherent state analogs to carry out
Wigner-Weyl quantization is discussed by \citet{Quillen_2025}. 

As we did in equation \ref{eqn:pend2} for the pendulum, 
we modify the Harper model with an additional parameter $b$ so that the libration regions 
 can be shifted, giving 
\begin{align}
H_{\rm Harper}(p,\phi)' = a\cos (p-b) + \epsilon \cos \phi. \label{eqn:Harp}
\end{align} 
Figure \ref{fig:Harp} illustrates level curves of this Hamiltonian. The two panels
in this figure show how 
the center of the librating region is shifted by the parameter $b$.  The parameter $\epsilon$ controls the width of the librating regions.  If $|\epsilon| <|a|$ and $a,\epsilon \ne 0$, phase space is divided into three regions, 
two with orbits that librate about $\phi = 0$ or $\pi$, and a circulating region where 
orbits cover all possible values of $\phi$.   The energies of the separatrix orbits can be found by 
identifying the hyperbolic fixed points.  
The separatrix orbits have  
\begin{align}
E_{sep} = \pm (|a| - |\epsilon|)  \label{eqn:Esep}
\end{align}
and there are two of them if $|a| \ne |\epsilon|$ and $a, \epsilon \ne 0$.   
If $|a| = |\epsilon| \ne 0 $, the circulating region vanishes and there is only one separatrix energy which separates two librating regions. 

\begin{figure}[htbp]\centering
\includegraphics[width=1.67 truein,trim = {0 0 0 0}, clip]{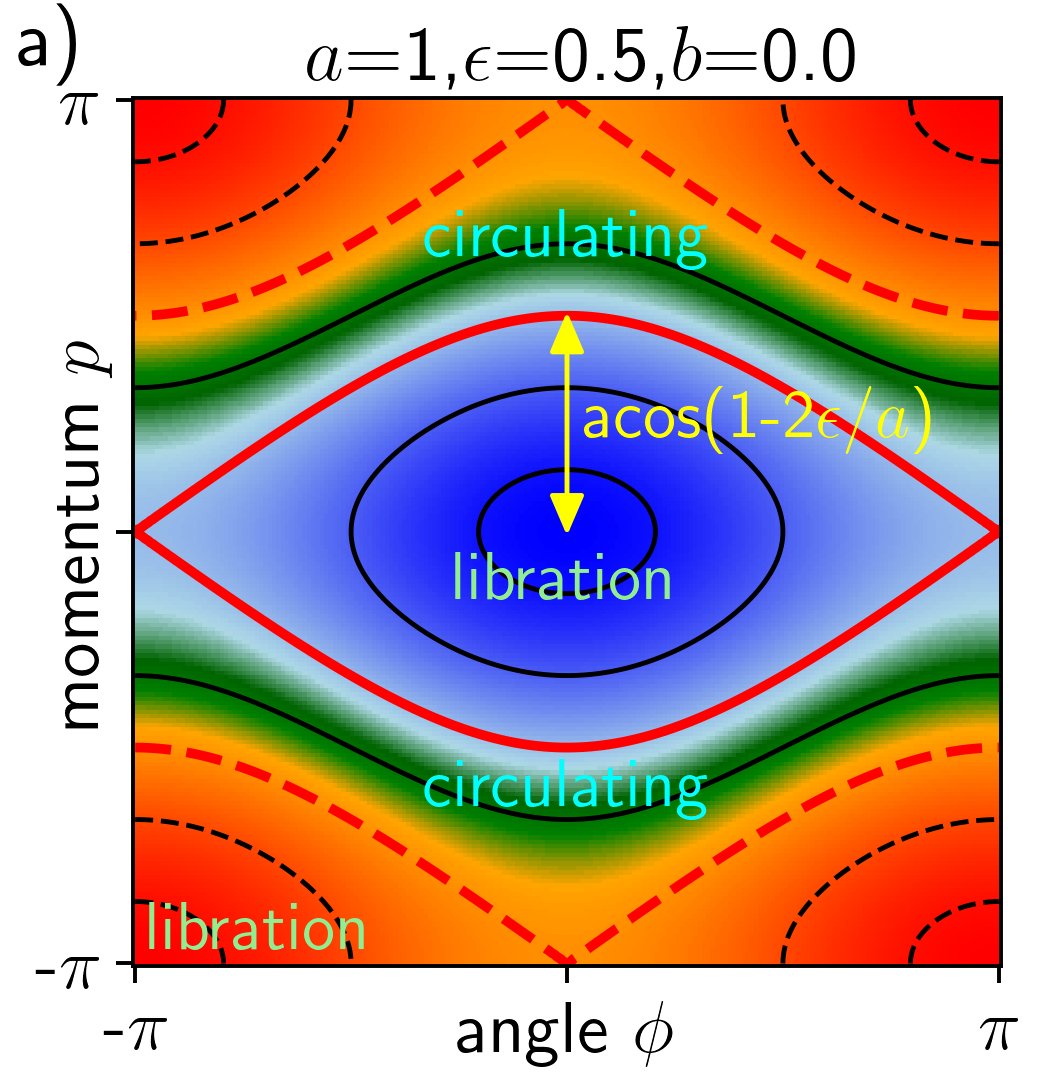}
\includegraphics[width=1.67 truein,trim = {0 0 0 0}, clip]{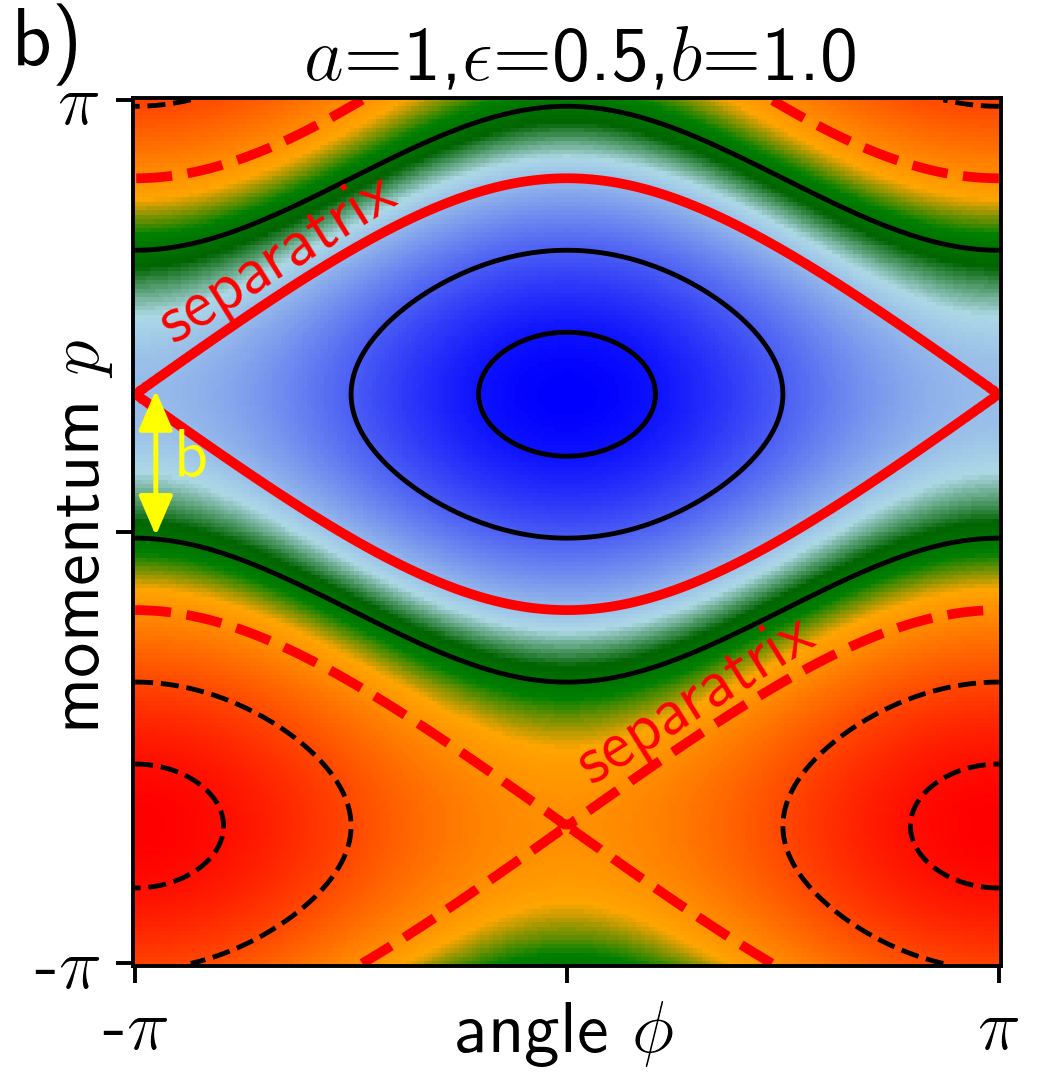}
\caption{Level curves in phase space of the classical Harper Hamiltonian of  
 equation \ref{eqn:Harp}. Thick red level curves show the two separatrix orbits.  
Negative energies are shown with dotted contours.  
A comparison between left and right panels illustrates how the parameter $b$ shifts the center of the librating regions. 
Parameters for the model are shown on the top of each panel. Librating and circulating regions are annotated on the left panel.  The parameter $\epsilon$ sets the width of the librating region. 
\label{fig:Harp}}
\end{figure}

For the quantized version of the Harper model, we work in an $N$ dimensional complex vector space that has an inner product. 
The quantized version of equation \ref{eqn:Harp} is the Hermitian operator 
\begin{align}
\hat h = a \cos (\hat p -b) + \epsilon \cos \hat \phi  \label{eqn:h0}
\end{align}
with operators $\hat p$ and $\hat \phi$ defined in section \ref{ap:defs} (also see \cite{Quillen_2025}). 

\section{The eigenvalues of the Harper operator}

In this section we discuss how the set of eigenvalues or spectrum of the operator 
of equation \ref{eqn:h0}  depends on the parameters $b,\epsilon$.  
For $a=1$, $b=0$,  and $\epsilon=0$,  the Hamiltonian operator is equal to $ \cos \hat p$ 
so its eigenvalues are $\cos \left(\frac{2 \pi j}{N}\right) $ with index $j \in \{0, 1, .... , N-1 \} $. 
For most values of index $j$, eigenvalues have multiplicity 2. 
With dimension $N$ odd there is a single non-degenerate eigenvalue but for $N$ even there are two. 
In Figure \ref{fig:cc1} for different values of $\epsilon$ we compute and plot the spectrum of $\hat h$. 
For $\epsilon>0$, the eigenvalue degeneracy is broken, as can be seen on the left side of Figure \ref{fig:cc1} 
where pairs of eigenvalues split as $\epsilon$ increases. 

\begin{figure}[htbp] \centering 
\includegraphics[width=3truein,trim = 0 0 30 25, clip]{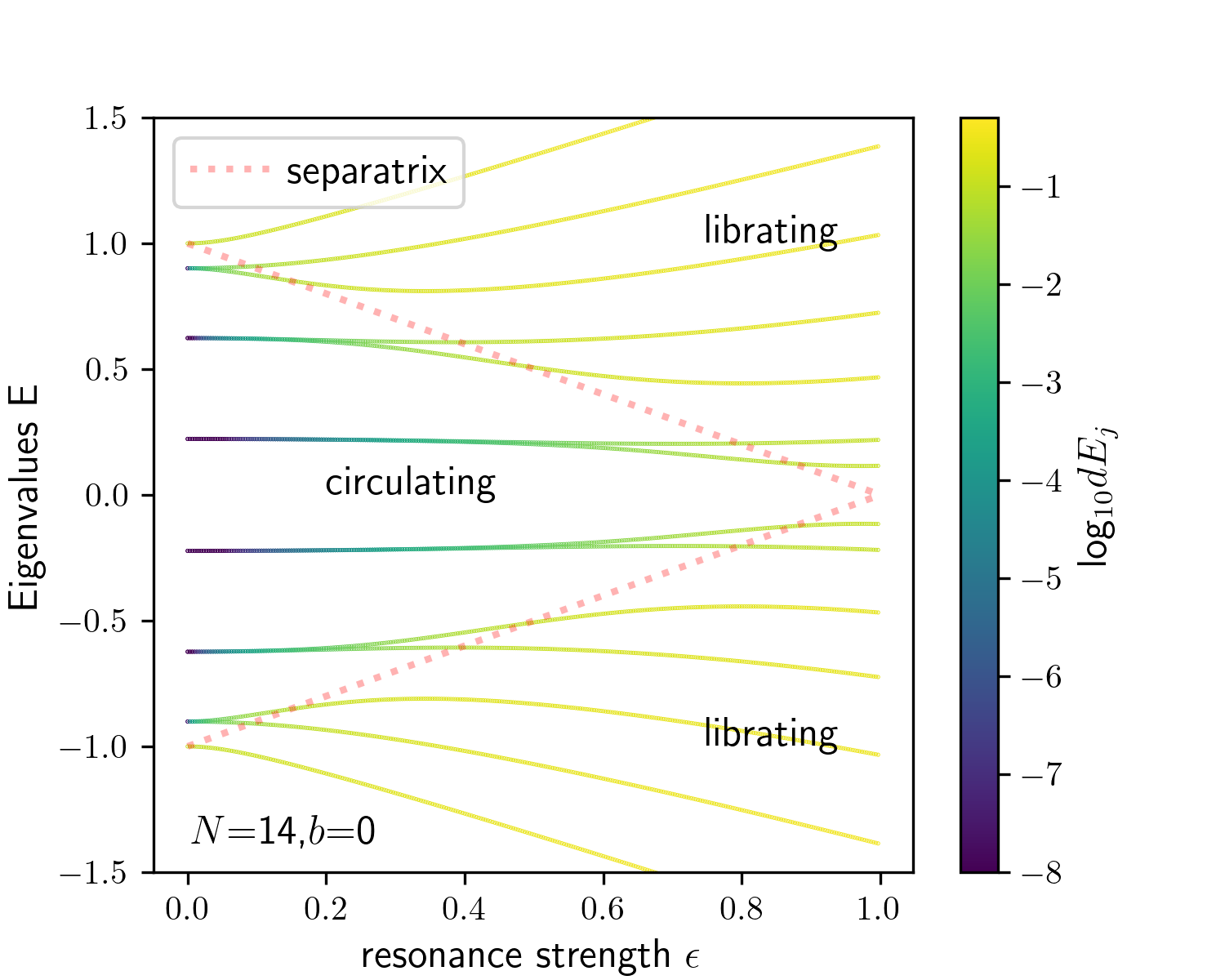}
\caption{Eigenvalues of the Hamiltonian operator in equation \ref{eqn:h0} 
for different values of resonance strength $\epsilon$.  
Parameters $b=0$, $a=1$ remain fixed and the dimension $N=14$.  
For $\epsilon>0$, 
eigenvalues are in nearly degenerate pairs within the circulating region but well 
separated in the region associated with classically librating orbits.
The red dotted line shows the separatrix energy values for the associated classical model 
(equation \ref{eqn:Esep}).
As the resonance strength $\epsilon$ increases, the area within the separatrix orbit increases
and more eigenstates are in the region associated with librating orbits.
The colorbar gives an estimate for the log of the distance to the nearest eigenvalue for each 
eigenvalue and at each value 
of $\epsilon$, however the colorbar is truncated at -8 so the spacings  
can be smaller than shown with navy blue points.
\label{fig:cc1}
}
\end{figure}

At the bottom of the cosine potential well, and at low energy, 
the spectrum of the Harper operator resembles that
of a harmonic oscillator which has evenly spaced eigenvalues.   
For $N$ even there is a reflective symmetry in the spectrum of $\hat h$ about an energy of 0, (see appendix  \ref{ap:even}). 
 In Figure \ref{fig:cc1} we have drawn the energies of the separatrix 
for the associated classical system (given in equation \ref{eqn:Esep}) with dotted red lines.  
The spectrum consists of nearly degenerate eigenvalue pairs in the circulating region and the eigenvalues are well separated in the librating regions.   

The lower half of Figure \ref{fig:cc1} resembles Figure 1 by  \citet{Doncheski_2003} showing the spectrum of 
the quantum pendulum (equivalently, the eigenvalues of the Mathieu function) also as a function of strength parameter (also see \cite{NIST:DLMF}, \url{https://dlmf.nist.gov/28.2}). 
The spectrum of the quantum pendulum also exhibits nearly degenerate pairs of states above 
its separatrix and distinct and well separated energy states within its potential well \citep{Doncheski_2003}.
The dichotomy of energy level spacing in the Harper operator is not caused by confinement to a torus as it is also present in the quantized pendulum with momentum $p \in [-\infty, \infty]$.   Similarity between the 
spectrum of the Harper operator and the quantized pendulum is discussed in more detail in appendix 
\ref{ap:mathieu}. 

The spectrum of the operator $\hat h$ in dimension $N$ at particular values of the parameters $a,b,\epsilon$
is the set of eigenvalues $\{ \lambda_j \}$ indexed by $ j \in \{0, ... , N-1\}$. 
For each spectrum computed with a specific value of parameters $a,b,\epsilon$ and for each
eigenvalue $\lambda_j$ we define the distance in energy to the nearest eigenvalue;  
\begin{align}
dE_j &= {\rm min}_{k\ne j} |\lambda_j - \lambda_k| 
. 
\label{eqn:dEj}
\end{align}
The log$_{10}(dE_j)$ of the eigenvalue (or energy) spacings of the Hamiltonian operator in equation \ref{eqn:h0}   
are shown in color in Figure \ref{fig:cc1} and with numerical values corresponding to the colorbar on the right.   The colorbar does not extend below $dE_j = 10^{-8}$, but the spacings decrease to zero on the left side as $\epsilon \to 0$ where pairs of eigenvalues become degenerate. 
Figure \ref{fig:cc1} shows that the Harper  operator exhibits an extremely wide range of spacings between its eigenvalues.  

\begin{figure}[htbp] \centering 
\includegraphics[width=3truein,trim = 0 0 30 0, clip]{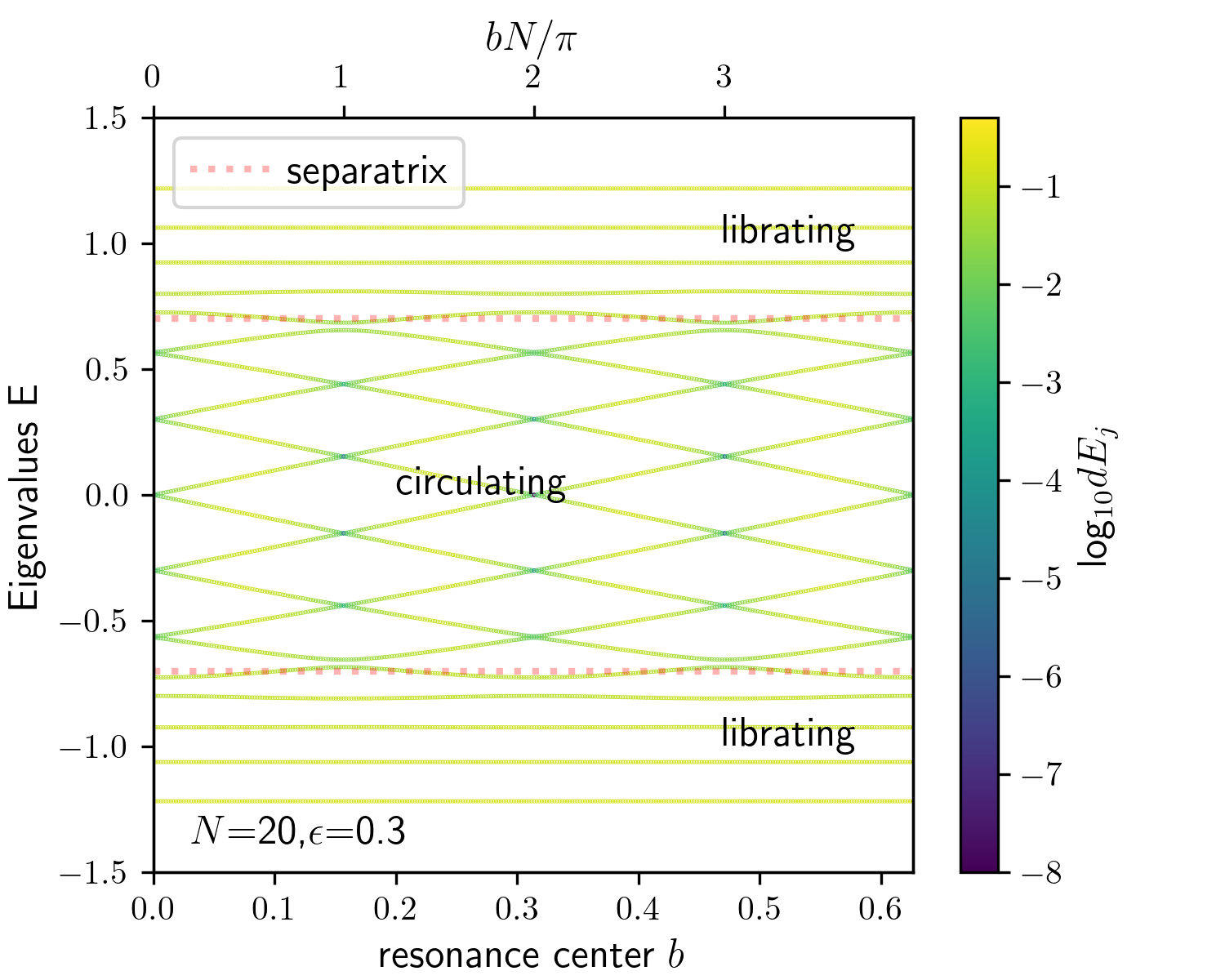}
\caption{
Eigenvalues of the Hamiltonian operator in equation \ref{eqn:h0}   
for different values of parameter $b$.  This figure is similar to Figure \ref{fig:cc1} except
$b$ varies instead of $\epsilon$,  and dimension $N=20$.  
The parameter $\epsilon =0.3$ is fixed.   Energy levels in the librating regions do not vary quickly as $b$ varies.
However, energy levels in the circulating regions exhibit a lattice of avoided crossings. 
The axis on the top shows $b$ in units of $\pi/N$.  Avoided crossings in the circulating region occur
where $b$ is a multiple of $\pi/N$.   A heuristic explanation for the dichotomy is given in section 
\ref{sec:heu}.
\label{fig:cc2}
}
\end{figure}

Figure \ref{fig:cc2} also shows the spectrum of the Hamiltonian operator in equation \ref{eqn:h0}, 
however here we vary the $b$ parameter while the resonance strength $\epsilon$ remains fixed. 
Figure \ref{fig:cc2}  shows that the energy levels in the librating regions remain separated and 
do not vary very much as $b$ is increased.  However, energy levels in the circulating region show a lattice of crossings, 
that are avoided, as we discuss below.  By avoided we mean that the two energy levels don't actually cross, 
rather they approach each other and then diverge from each other.   The avoided crossings have
extrema  
at $b$ equal to a multiple of $\pi/N$, and for these values, the energy levels in the circulating region are nearly degenerate. 
The lattice of avoided crossings was also seen near a separatrix in the 
 time-dependent double well potential model studied by \citet{Stabel_2022}. 
 
The lattice of avoided crossings seen in the circulating region in Figure \ref{fig:cc2} implies that the operator $\hat h(a, b, \epsilon)$ (equation \ref{eqn:h0}) obeys symmetries that are described in more detail in appendix \ref{ap:spectrum} and which we now summarize.  With $a, \epsilon$ fixed,  
the spectrum is identical if $b$ is shifted by $2\pi/N$ (theorem \ref{th:h0r}). 
The spectrum has reflective symmetry about $b$ for $b$ a multiple of $\pi/N$ 
 (theorems \ref{th:bmb} and \ref{th:hhalf}).   
For $N$ even, as shown in Figures \ref{fig:cc1} and \ref{fig:cc2}, 
 the spectrum itself is symmetrical about an energy of zero (theorem \ref{th:even}). 
Degeneracy of eigenvalues is explored in appendix \ref{ap:degen}. 
The eigenvalues are distinct if $b$ is not a multiple of $\pi/N$ (theorem \ref{th:distinct}).  
Numerical calculations show that the spacing between nearly degenerate pairs of eigenvalues is smallest for $b$ a multiple of $\pi/N$ and that eigenvalues are distinction $a,\epsilon \ne 0$ as long as 
$N$ is not a multiple of 4. There are two zero eigenvalues in the special case 
of $N$ a multiple of 4 (lemma \ref{th:zeros}).  
The near approaches of pairs of eigenvalues at $b$ multiples of $\pi/N$ could be related to two symmetries that
are present for $b$ a multiple of $\pi/N$ (see commutators in equations \ref{eqn:Pz2k} 
 and \ref{eqn:Pz1k} of theorems  \ref{th:sym1}, \ref{th:sym2}). 
The fact that spacings are a minimum for specific values of $b$ aids in computing 
 the minimum distance between eigenvalues for drifting systems and is another advantage of 
exploring a simple system such as the Harper operator.   
Many of the symmetries associated with the lattice of avoided crossings are obeyed 
 by operators that have potentials that are more general than the $\cos \hat \phi$ potential 
 in the Harper operator, as discussed in appendix \ref{ap:other}. 
 
With both $b$ and $\epsilon$ increasing, there is both a lattice of avoided crossings in the circulating
region and an increase in the number of eigenstates in the libration region, as shown in Figure \ref{fig:cc3}. 
In the associated classical system, the area in phase space of the librating region grows while that in the circulating region shrinks.  At an appropriate drift rate, 
a system begun in an circulating eigenstate would experience diabatic crossings within the
 circulating region, and then when approaching the separatrix could adiabatically remain in an eigenstate 
 that starts to librate within the potential well, as described for the time dependent double well potential system by \citet{Stabel_2022}.   
This process is the quantum equivalent of the classical process known as {\it resonance capture}. 

\begin{figure}[htbp] \centering 
\includegraphics[width=3truein,trim = 0 0 30 0, clip]{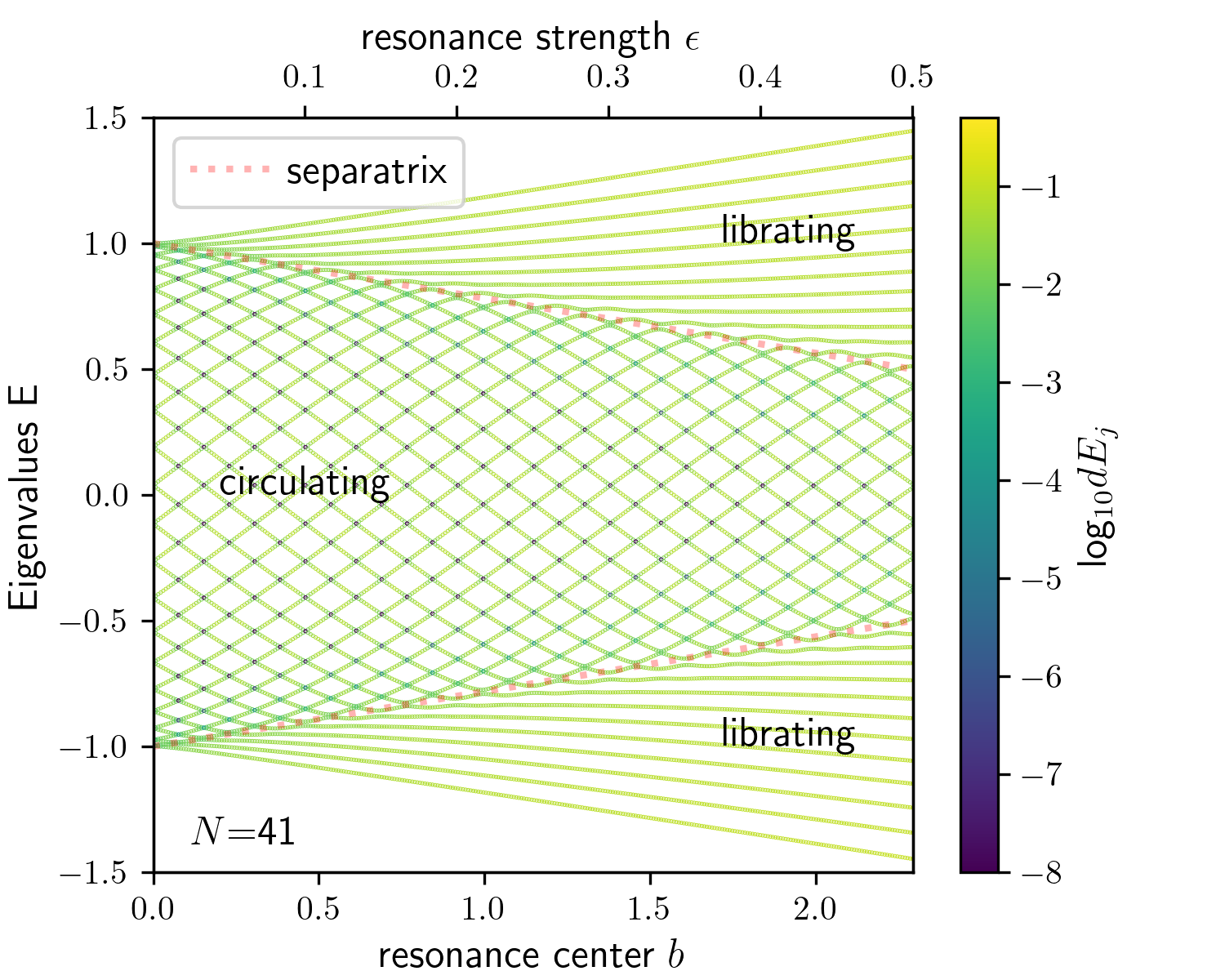}
\caption{Eigenvalues of the Hamiltonian operator in equation \ref{eqn:h0}  
for different values of parameters $\epsilon, b$ 
where both $\epsilon, b$ together increase linearly with values shown on the top and bottom axes. 
This Figure is similar to Figure \ref{fig:cc1} except the parameters $b, \epsilon$ both vary  
and dimension $N= 41$. 
As the resonance strength (set by $\epsilon$) increases, 
more eigenstates are associated with librating orbits.
 \label{fig:cc3}
}
\end{figure}

\subsection{Spacing between eigenvalues}  \label{sec:spacing}

The Landau-Zener two-level model shows that during an avoided crossing the 
probability of a diabatic transition is a function of the drift rate and the minimum energy difference between the two states at their closest approach.     When the parameters describing the Harper operator are 
time dependent, then the probability of diabatic transitions depends on the distances between 
energy levels.  In this section we examine in more detail the near degeneracies seen in 
Figures \ref{fig:cc1} -- \ref{fig:cc3}. 

To compute the eigenvalues in Figures \ref{fig:cc1} -- \ref{fig:cc3} we used 
routines available in the  \texttt{numpy} package within \texttt{Python} which uses the LAPACK linear algebra library. 
The distances between pairs of eigenvalues 
were so small that we were concerned about the accuracy 
of the calculation with the LAPACK linear algebra library.   
To check the accuracy of the calculation, 
we computed eigenvalues values to a higher level of precision using the 
Python library for real and complex floating-point arithmetic with arbitrary precision \texttt{mpmath} \cite{mpmath}. 
In Figure \ref{fig:eee} we show the smallest distance between eigenvalues for $\hat h$ 
(equation \ref{eqn:h0}) with $a=1$ for  fixed values of $\epsilon=0.5$, $b=0$ but 
as a function of $N$ and computed to 50 digits of precision using the \texttt{mpmath} library. 
This is about 20 digits more accurate than computed with the conventional LAPACK library. 

Figure \ref{fig:eee} confirms that the distances between pairs of eigenvalues are non-zero and small within 
the circulating region and smallest in the center of the circulating region where the eigenstate energy 
is near zero.  With the higher level of
precision we find that 
the eigenvalues are not degenerate except in the case of $N$ a multiple of 4, in which 
case there is a pair of zero eigenvalues.   That there is a pair of zero eigenvalues for $N$ 
a multiple of 4 is shown using the determinant of the operator computed in appendix \ref{ap:det} and with lemma \ref{th:zeros}).  
Within the circulating region, the distance between the pairs of eigenvalues decreases over 
orders of magnitude.   The range of spacings between eigenvalues is remarkably vast
and increases with increasing dimension $N$. 
%
In the high energy limit of the quantized pendulum which resembles the quantum rotor, and with increasing distance from the separatrix,  the distance in energy between pairs of eigenvalues also approaches zero
(e.g.,  \cite{Doncheski_2003}, also see appendix \ref{ap:mathieu}). 
For our system, because the circulating region 
is bounded on both sides by potential wells, the minimum spacing between eigenvalues is
found in the center of the circulating region.   

\begin{figure}[htbp]\centering   
\includegraphics[width=3truein,trim = 0 0 0 0, clip]{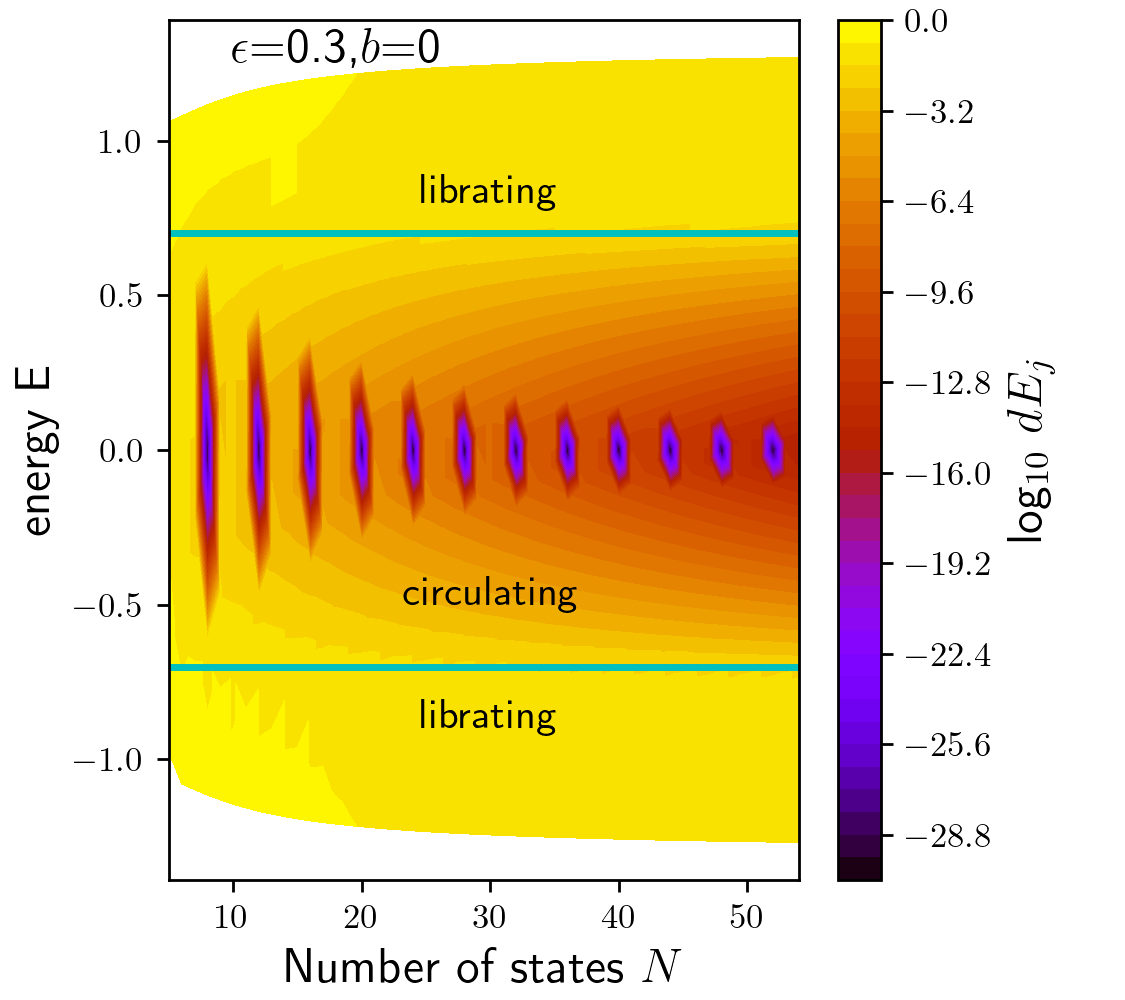}
\caption{We show the smallest distance, in color and with colorbar shown on the right, 
between neighboring eigenvalues as a function of eigenvalue (on the $y$-axis) and as a function of the number of states $N$ in the Hilbert space  on the $x$-axis 
for the Hamiltonian operator of equation \ref{eqn:h0} with $a=1,b=0$.  
The parameter $\epsilon$ is fixed and printed on the plot. 
The cyan thick lines show the energies of the separatrix classical orbits.  
For energies between the separatrix orbits 
there are near degeneracies between pairs of eigenstates.  
The pairs of eigenvalues are not the same except for $N$ a multiple of 4 
and in that case only for a pair of zero eigenvalues. 
\label{fig:eee}
}
\end{figure}

The spacing between eigenvalues at different values of $\epsilon$ for two different values of $N$, 
are shown in Figure \ref{fig:fff}. This figure also illustrates that the minimum distance between pairs of 
eigenvalues is found near an energy of 0 in the circulating region.  The colorbar is in a log 
scale.  Level contours within the circulating region are nearly linear suggesting that the energy level
spacing in the circulating region is approximately a power of $\epsilon$. 

\begin{figure}[htbp]\centering   
\includegraphics[height=1.596truein,trim = 3 0 18 0, clip]{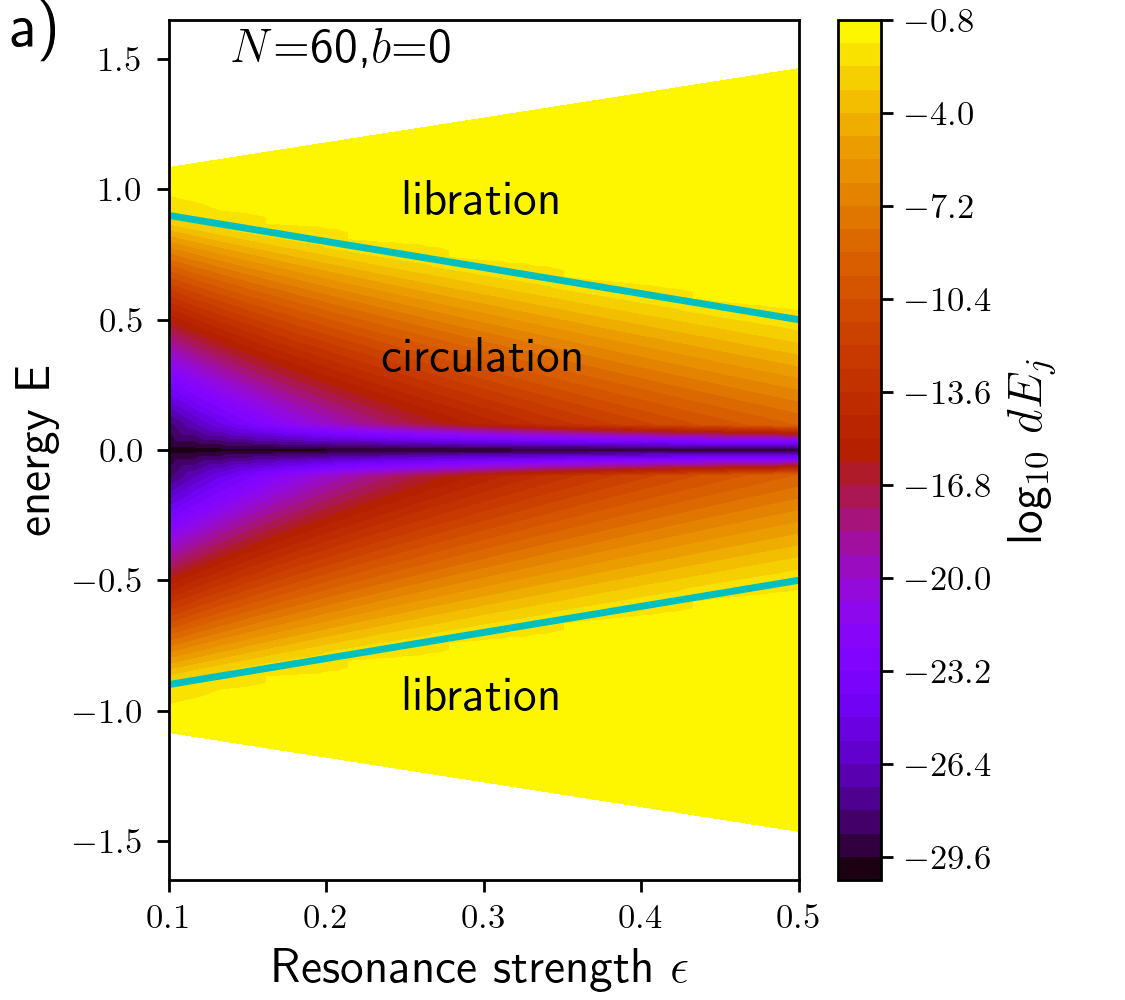}
\includegraphics[height=1.596truein,trim = 3 0 18 0, clip]{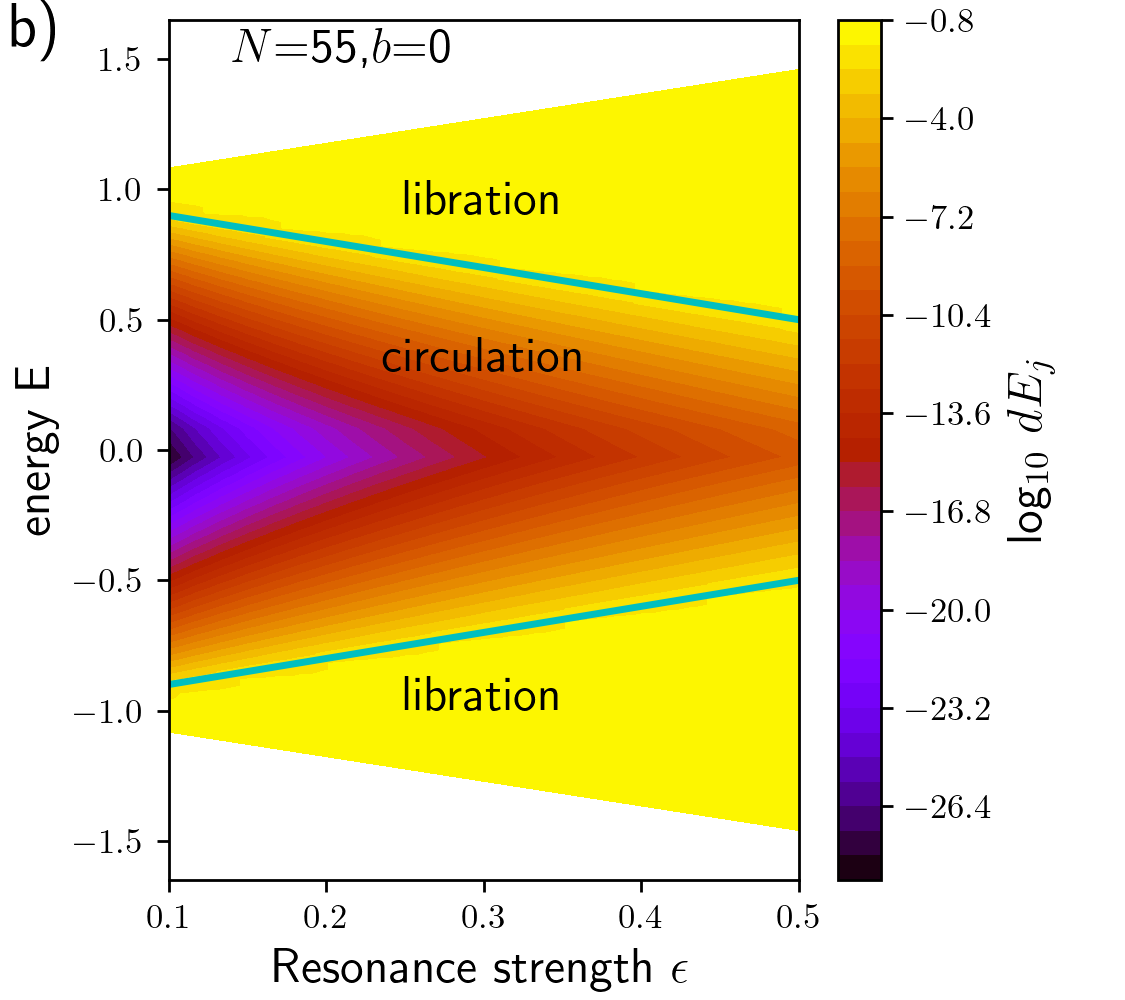}
\caption{
a) We show the spacing,  in color and with colorbar shown on the right, 
between neighboring eigenvalues as a function of eigenvalue (on the $y$ axis) and as a function of 
the parameter $\epsilon$ 
(on the $x$-axis) for the Hamiltonian of equation \ref{eqn:h0} with $a=1$, $b=0$.   
The number of states is shown on the top left.  For $N$ a multiple of 4, there are two zero energy
eigenvalues. 
b) Similar to a) but for an odd dimension $N$. 
The spacing between eigenvalues increases as $\epsilon$ increases 
and for each $\epsilon$ is smallest  in the center of the circulating region. 
The distance between neighboring eigenvalues spans many orders of magnitude. 
\label{fig:fff}
 }
\end{figure}

How the distance between nearby energy levels depends on eigenvalue index is shown in 
Figure \ref{fig:slice} where we plot the eigenvalue spacing $dE_j$  (defined in equation \ref{eqn:dEj}), 
  as a function of eigenvalue index, with index $j$ in order 
of increasing eigenvalue.  Figure \ref{fig:slice} illustrates that the distance between the pairs
of nearly degenerate eigenvalues drops rapidly in the circulating region.  
The drop is nearly linear on a log plot implying 
 that the spacing depends on a power of the index, with exponent 
dependent upon dimension $N$ and $\epsilon$. 
In Figure \ref{fig:slice} we show eigenvalue spacing 
$N=50$ even but not a multiple of 4, and $N=52$ which is a multiple of
4.  The spacings are similar except at $j/N \sim 0.5$ where in the $N=52$ (divisible by 4) case there 
is a zero eigenvalue of multiplicity 2 giving $dE_j = 0$. 

\begin{figure}[htbp]\centering 
\includegraphics[width=1.66truein,trim = 14 0 1 0, clip]{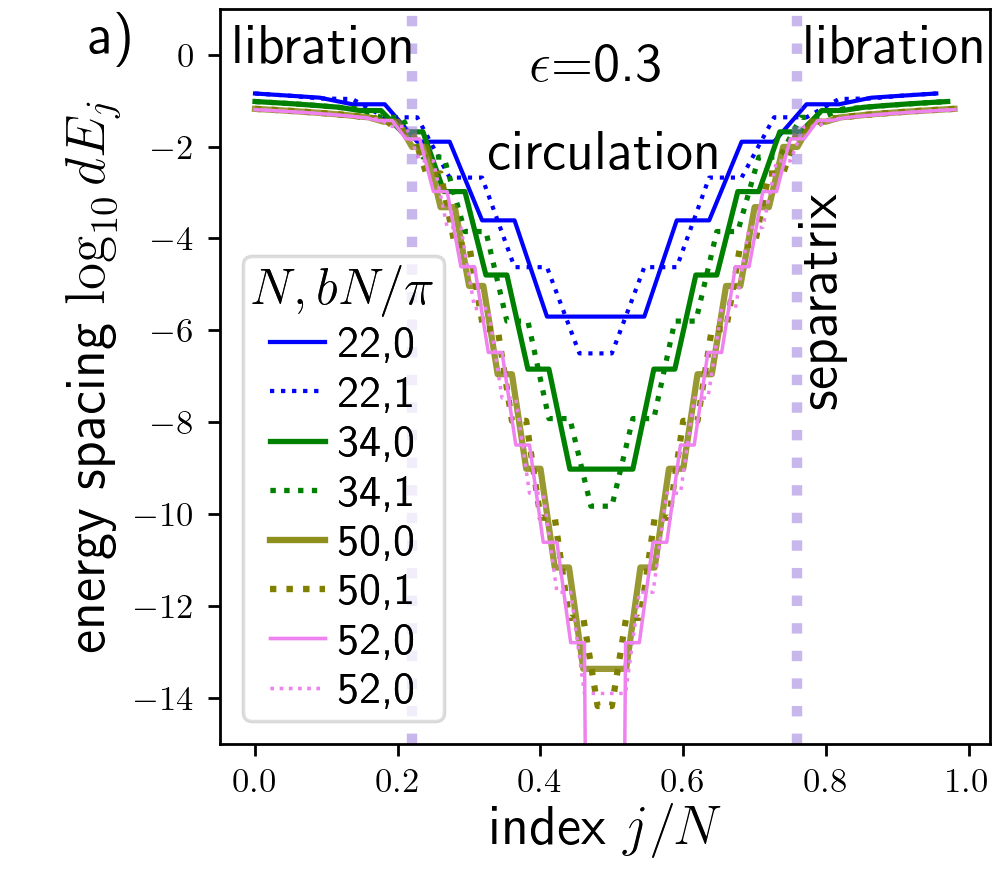}
\includegraphics[width=1.66truein,trim = 14 0 1 0, clip]{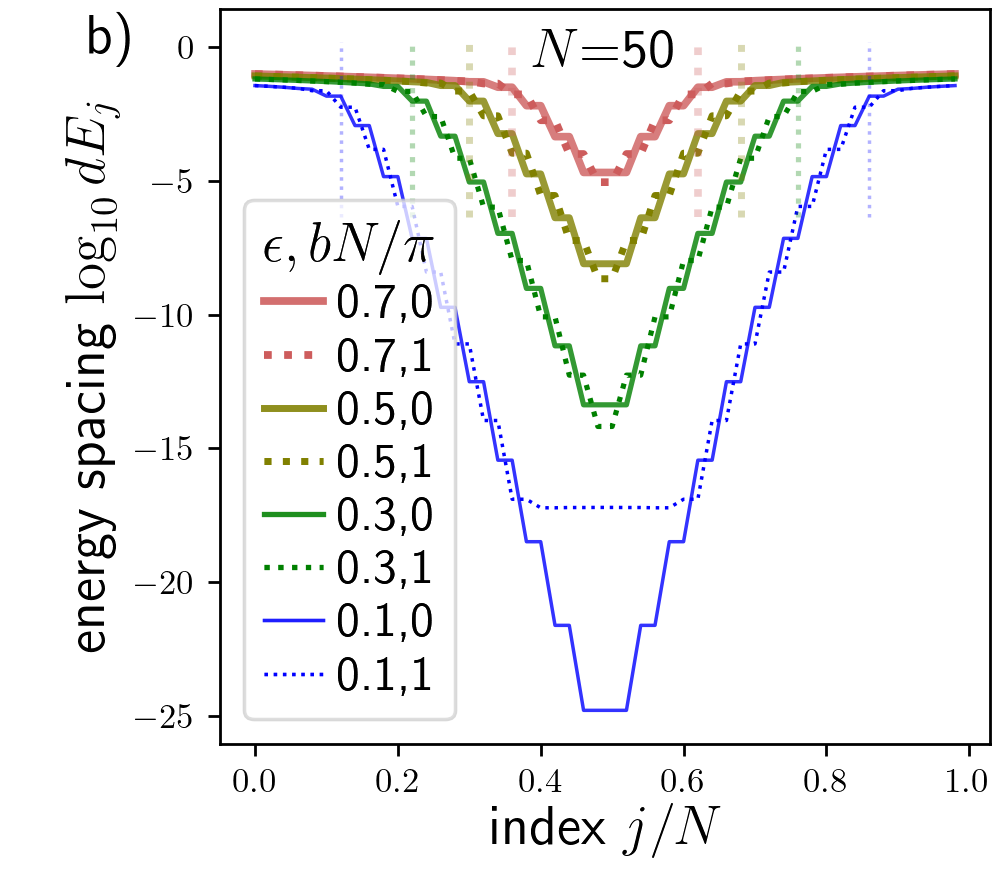}
\includegraphics[width=1.66truein,trim = 14 0 1 0, clip]{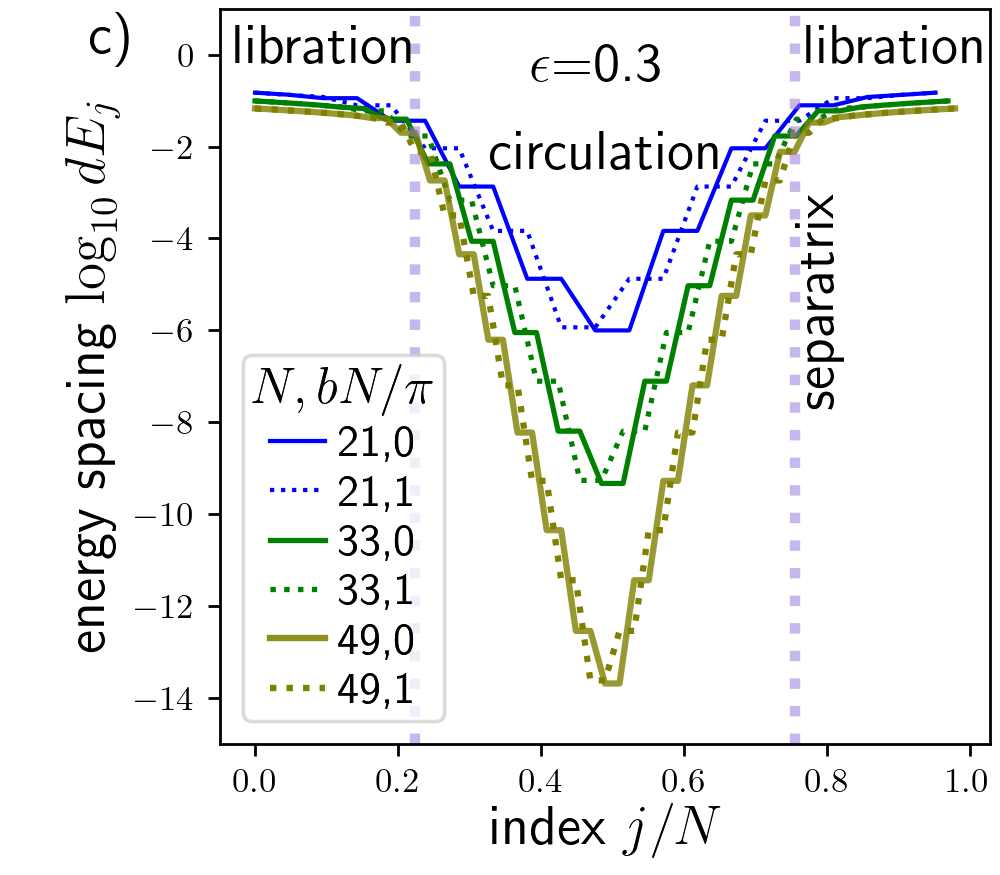}
\includegraphics[width=1.66truein,trim = 14 0 1 0, clip]{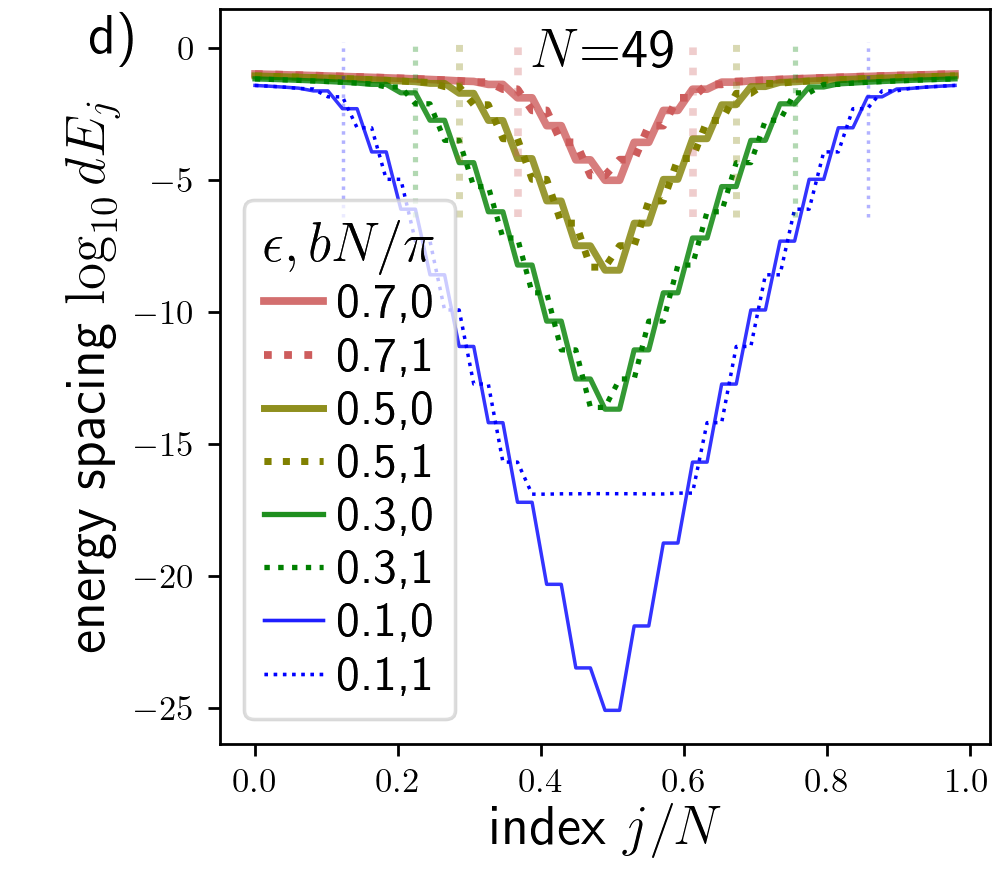}
\caption{a)  The log$_{10}$ of spacings $dE_j$ between neighboring energy levels 
 is shown as a function 
of index $j$ of ordered energy levels for  different values of dimension $N$, for $N$ even, and at 
resonance strength $\epsilon=0.3$ for the operator $\hat h(a,b,\epsilon)$ with $a=1$. 
The solid and dotted lines show 
spacings $dE_j$ computed for $b=0$ and $b=\pi/N$, respectively.  The indices 
that have energy near that of the separatrices
are shown with vertical dotted lines. 
b) Similar to a) except we show spacings for even $N=50$ for 4 different values of $\epsilon$. 
c) Similar to a) except for three different values of $N$ odd. 
d) Similar to b) except we show spacing for odd $N=49$. 
These figures illustrate that the spacing between pair of eigenvalues drops  
within the circulating region approximately as a power law. 
\label{fig:slice}
}
\end{figure}

The minimum distance between any pair of 
eigenvalues is reached at the center of the circulating region.  
Figure \ref{fig:mins} shows the minimum distance between any pair eigenvalues as a function 
of $N$ and $\epsilon$ for $b=0$ and $b=\pi/N$ and for $N$ even but not a multiple 
of 4 and $N$ odd. 
We find that the minimum distance is approximately given by 
\begin{align}
dE_{min} \sim \begin{cases} 
 \epsilon^\frac{(N-1)}{2}\frac{3}{N}  & \text{if } N \text{ odd }\\
 \epsilon^\frac{(N-2)}{2}\frac{5}{N}  & \text{if } N \text{ mod } 4 = 2\\
 0 & \text{if } N \text{ mod } 4 = 0
 \end{cases} . \label{eqn:min_dist}
\end{align}
These functions are plotted on Figure \ref{fig:mins} which shows that they 
 approximately describe the minimum spacing computed numerically.  
 Low order perturbative expansion techniques, which we use in the next section,   
  are often used to estimate
 energy level spacings (e.g., \cite{Zakrzewski_2023}).  However, 
 the power of $\epsilon$ in equation  \ref{eqn:min_dist} 
  is high, so a low order perturbative expansion would 
fail to estimate the minimum spacing  between energy levels in the Harper operator. 
Using the characteristic polynomial of the operator $\hat h(1,0,\epsilon)$, we give  
a rough derivation of equation \ref{eqn:min_dist} in appendix \ref{ap:min_spacing}. 
 
\begin{figure}[htbp]\centering
\includegraphics[width=1.65truein]{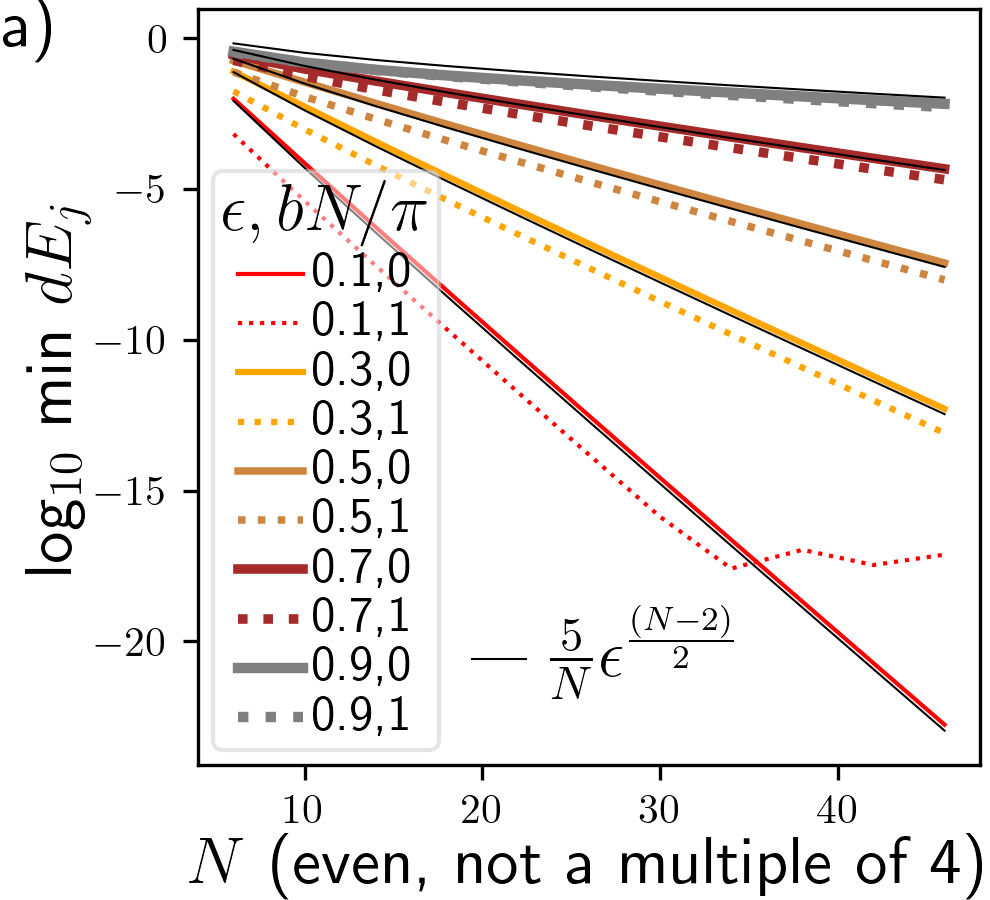}
\includegraphics[width=1.65truein]{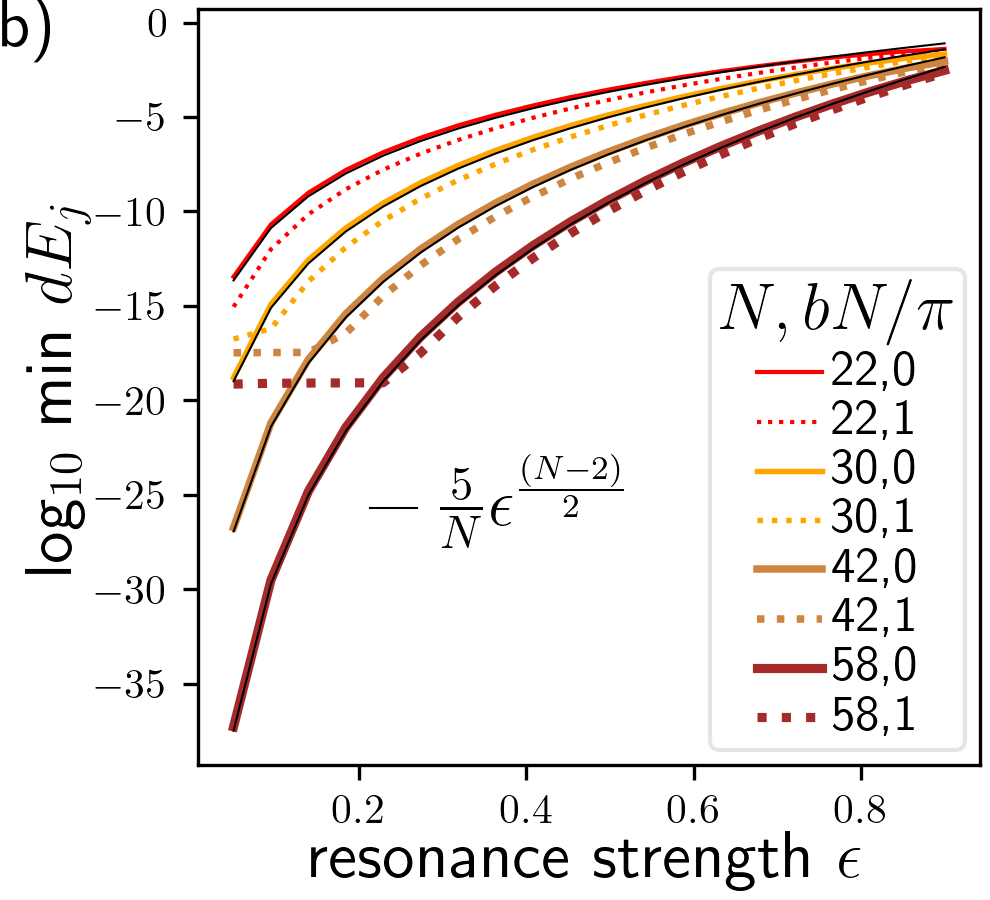}
\includegraphics[width=1.65truein]{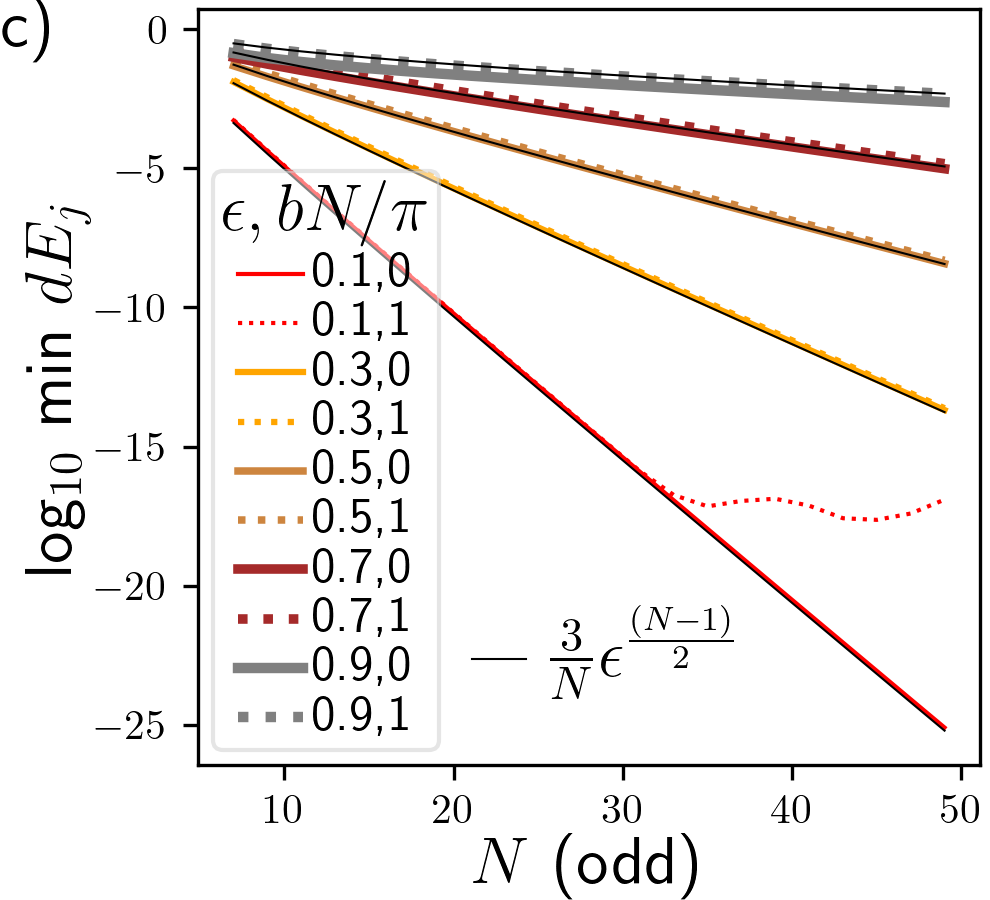}
\includegraphics[width=1.65truein]{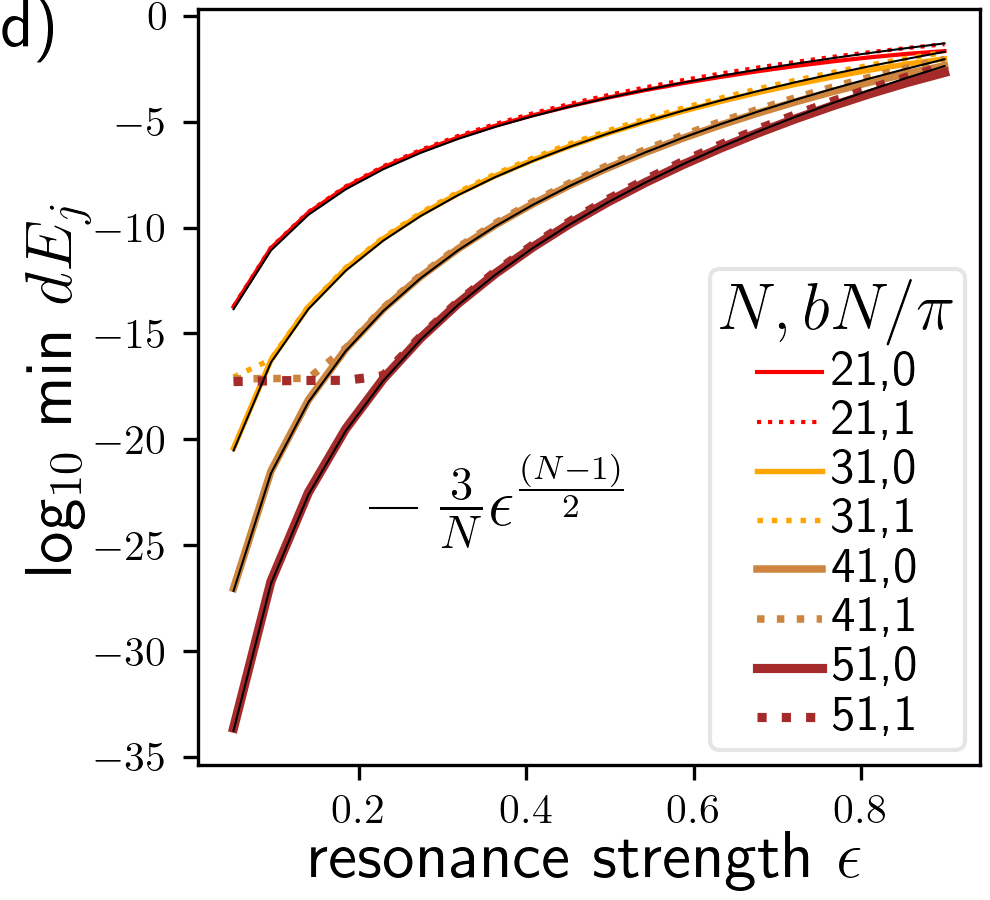}
\caption{a) The log$_{10}$ of the minimum spacing between eigenvalues (min $dE_j$) 
for the operator $\hat h(a,b,\epsilon)$, with $a=1$,  as a function of $N$ 
for $N$ even but not a multiple of 4.  
We show the minimum space  for different values of resonance strength $\epsilon$. 
Solid lines have $b=0$ and dotted lines  $b=\pi/N$. 
The thin black lines show the estimate for the minimum gap given by equation \ref{eqn:min_dist}.   
b) Similar to a) except as a function of resonance strength $\epsilon$.  
c) Similar to a) except for $N$ odd.    
d) Similar to b) except for $N$ odd.   
\label{fig:mins}}
\end{figure}

\subsection{Heuristic analogies for the sensitivity of the energy levels to the parameter $b$}
\label{sec:heu}


In our finite dimensional space and for small resonant strength $\epsilon$, an energy level of 0
is in the circulating region and is above the top of the cosine potential well. 
We approximate the Hamiltonian of equation \ref{eqn:h0} (with $a=1$) as 
\begin{align}
\hat h(1,b,\epsilon) \approx \cos (\hat p - b)\end{align} 
which has eigenvalues 
\begin{align}
E_k(b) = \cos \left( \frac{2 \pi k}{N} - b \right) 
 \end{align}
 for $k \in \{0, 1, \ldots, N-1 \}$.   
Equivalently we can take $k \in -(N/2-1), \dots,  N/2 $  if $N$ is even 
 or $k \in \{ -(N-1)/2, \ldots, 0, \ldots (N-1)/2 \}$ if $N$ is odd. 
For $b=0$ most energy levels have multiplicity 2 as $E_k(0) =E_{-k}(0)$. 
For $b \ne 0$, $k>0$, energy levels are split by 
\begin{align}
E_k(b) - E_{-k}(b) = 2 \sin  \left( \frac{2 \pi k}{N} \right ) \sin b.
\end{align}
For small $b$ the splitting between energy levels is first order in $b$. 
The energy levels diverge as $b$ increases. 

In contrast, at the bottom of the potential well in the Harper model the Hamiltonian can be approximated
by a harmonic oscillator with Hamiltonian 
$\hat h = \hat a^\dagger \hat a + {\rm constant}$ where $\hat a^\dagger, \hat a$ 
are raising and lowering operators.  The energy spectrum is a non degenerate ladder spectrum 
$E_n = n + {\rm constant}$ with $n$ a non negative integer. 
The eigenstates $\ket{n}$ obey $a^\dagger a \ket{n} = n\ket{n}$. 
A shift associated with parameter $b$ 
can be modeled with a perturbation 
$\hat V = b \hat p = \frac{b}{\sqrt{2}i}(\hat a - \hat a^\dagger) $. 
 First order perturbations vanish as $\bra{n} \hat V \ket{n} = 0$.  The perturbation gives 
 second order perturbations to the energy levels 
 \begin{align}
 E_n(b) &\approx n + \frac{b^2}{2} \Bigg( \frac{ | \bra{n+1} (a - a^\dagger)\ket{n} |^2}{-1}  \nonumber \\
& \qquad +\frac{ | \bra{n-1} (a - a^\dagger)\ket{n} |^2}{1} \Bigg) \nonumber \\
 & \approx n - \frac{b^2}{2} + {\cal O }(b^3).
 \end{align}
 The result is a shift in the spectrum that to second order in $b$ does not affect the 
 energy spacing between eigenstates. 
 
These examples heuristically illustrate why the energy levels do not strongly vary in the librating regions 
as $b$ varies but are quite sensitive to $b$ in the circulating regions. 

\begin{figure*}[htbp]\centering 
\includegraphics[width=7truein]{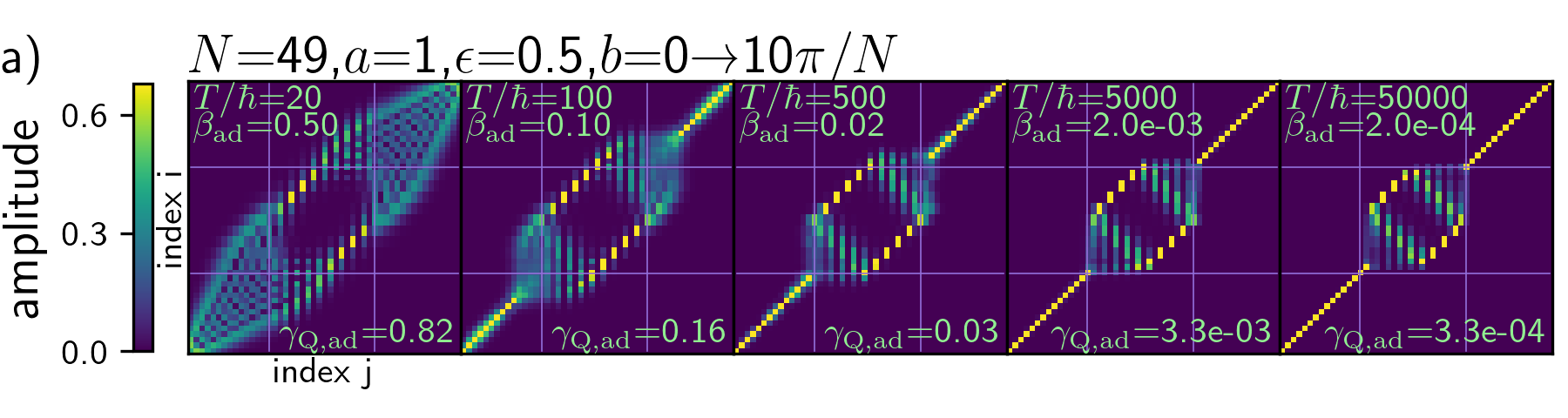}
\includegraphics[width=2truein]{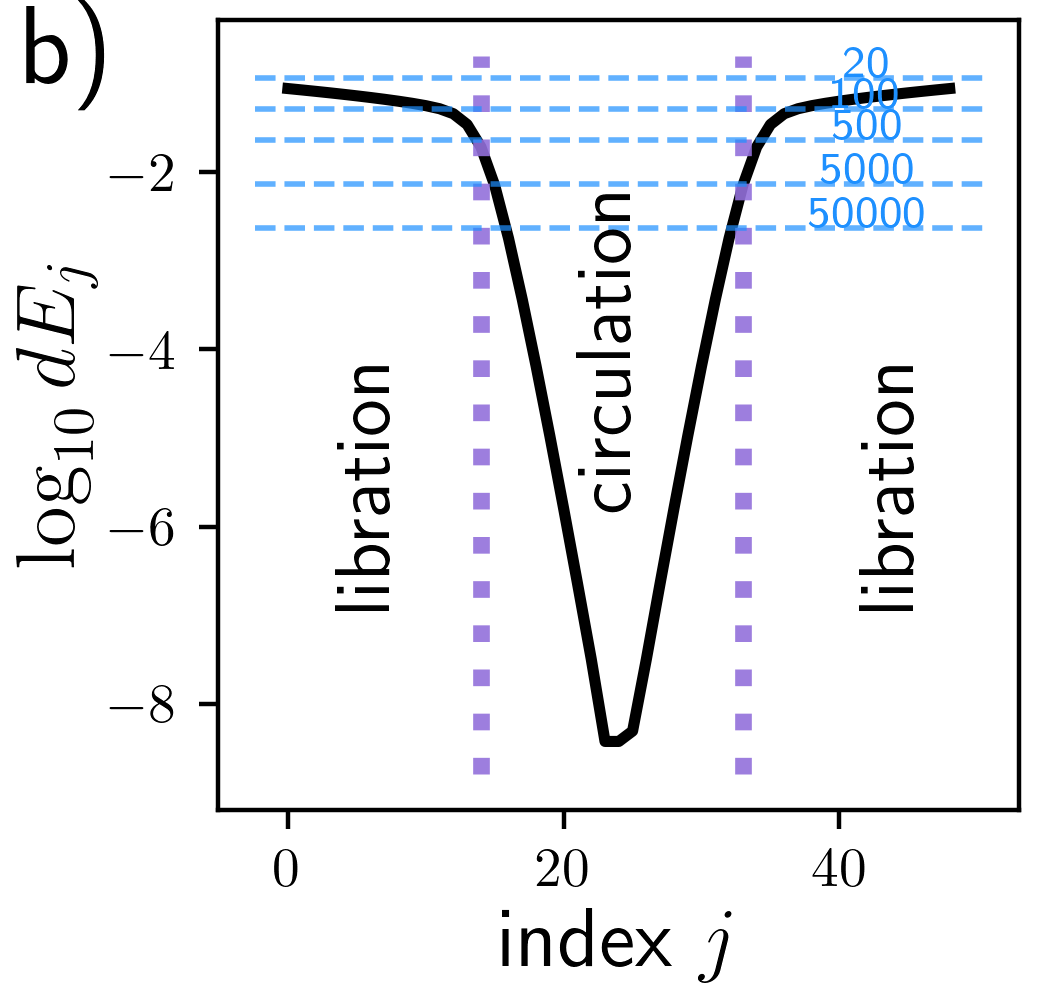}
\caption{
a) We show the amplitudes of the transition matrix $A_{ij} = |\bra{w_j} U(0, T) \ket{v_i}|$  as an image. 
The amplitudes are computed for different durations $T$ using time dependent Hamiltonian operator $\hat h(a,b(t),\epsilon)$ (in the form of equation \ref{eqn:h0}) with $a, \epsilon$ fixed but with $b$ drifting linearly in time, initially at $b= 0$ 
and reaching a value of $10\pi/N$. The  
dimension $N = 49$.   Indices in the transition matrix are ordered so that eigenvalues are increasing to the 
right and upward. 
The set of eigenstates $\ket{v_i}$ are those of the operator $\hat h$ 
initially and the set of eigenstates $\ket{w_j}$ are those of the operator at the final time. 
Each panel shows the transition matrix amplitude 
computed for different durations of drift but for the same total change in system parameters. 
The transition matrix contains a 1 if the system remains in an eigenstate.  If 
the amplitude is between 0 and 1 then transitions can occur.  
In the leftmost panel, the drift rate is sufficiently fast that transitions take place in the librating region 
but transitions are diabatic within the circulating region near the center of the image. 
Horizontal and vertical violet lines show the indices of the separatrix.  Indices that correspond 
to transitions between circulating eigenstates are in the center, whereas those that 
correspond to transitions between librating states are on the lower left and upper right. 
b) The minimum distance between energy levels $dE_j$ (defined in equation \ref{eqn:dEj}) for 
the operator $\hat h$ 
 is shown on the right. The indices with eigenvalues nearest the separatrices are shown with dotted purple vertical lines.  For each drift rate shown in panel a), horizontal blue dashed lines show the spacing
 that would give a probability
of about 1/2 for a diabatic transition estimated using the Landau-Zener model (using equation 
\ref{eqn:de_min_half}). 
This figure illustrates that the system undergoes transitions for all the drift rates we used, and 
would not be completely adiabatic (and transition-less) unless 
the drift duration was 5 orders of magnitude longer than our longest and slowest integration.  
\label{fig:drift}
}
\end{figure*}

\section{Drifting the quantized Harper model}

The near degeneracy of pairs of eigenstates for the Harper operator affects the adiabatic behavior
of the drifting quantized system. 
A slowly varying quantum system with Hamiltonian operator $\hat h(t)$ a function of time $t$ 
 is described with the unitary transformation, called a propagator, 
\begin{align}
\hat U(t, t_0) = {\cal T} e^{- \frac{i}{\hbar} \int_{t_0}^t \hat h(t) dt } \label{eqn:UT}
\end{align}
where $\cal T$ denotes time ordering for each portion of the integral and $\hbar$ is Planck's constant. 
A system initially at time $t_0$ in quantum state $\ket{\psi_0}$ would be in state
$\hat U(t, t_0) \ket{\psi_0}$ at a later time $t$. 

We consider  the time dependent Hamiltonian operator
$\hat h(t) = \hat h(a,b(t), \epsilon(t))$ with operator $\hat h(a,b,\epsilon)$ equal
to the Harper operator (equation \ref{eqn:h0}). 
We allow parameters $b$ and $\epsilon$ to drift linearly in time, and fix $a =1$.  

If the system drifts adiabatically, then a system begun 
in an eigenstate remains in an eigenstate. We denote  
 $\ket{v_i}$ to be the eigenstates of the initial Hamiltonian $\hat h(t_0)$ 
and $\ket{w_j}$ to be eigenstates of the final Hamiltonian $\hat h(t_0+T)$ where $T$ is the duration 
of the drift.  With the two sets of eigenstates,  we compute a matrix  of transition amplitudes 
\begin{align}
A_{ij} = |\bra{w_j} \hat U(t_0,t_0+T) \ket{v_i}|. \label{eqn:trans}
\end{align}
If the system drifts sufficiently adiabatically,  the transition matrix $A_{ij}$ would only contain 1s and zeros
as a system begun in an eigenstate would remain in an eigenstate. 

\subsection{Varying the resonance center}

Figure \ref{fig:drift}a shows transition matrices computed for a drifting system with $\hat h(a,b(t),\epsilon)$ 
(defined in equation \ref{eqn:h0})
with parameters $a,\epsilon$ fixed and $b$ varying linearly with time ($\frac{d b}{dt}$ is constant) 
and initial $b(t=0) = 0$. 
We compute the propagator $U(0,T)$ (equation \ref{eqn:UT}) 
with drift duration $T$ and for different drift durations.    
The total change in the $b$ parameter is $\Delta b = 10 \frac{\pi}{N}$  and it  
is the same for each computed propagator but the duration $T$ of the drift differs in each panel in Figure \ref{fig:drift}a. 
To construct the transition matrix, eigenstates are sorted in order of increasing energy. 
The transition matrix is shown as a color image with the color of each pixel indexed by $i,j$,
corresponding to the specific value of $A_{ij}$.   The vertical axis gives the index $i$ of the initial
eigenstates and the horizontal axis is the index of $j$ the final eigenstates. 
The indices of the states nearest the separatrix energies are shown with thin violet horizontal and 
vertical lines. 

Figure \ref{fig:drift}b shows the minimum spacing between pairs of eigenvalues 
reached as a function of index during the entire duration of the drift.   Pairs of eigenvalues 
undergo different close approaches at $b$ an even multiple of $\pi/N$ and at $b$ an odd 
multiple of $\pi/N$, so the smallest distance shown is the minimum taking into account both types of
avoided crossings. 
The indices of energies nearest the separatrix energy are shown on the plot with light-purple dotted lines.  As expected, 
 the minimum distance between eigenvalues is smallest in the center of the circulating region. 
Horizontal lines on this plot show an estimate (derived in 
appendix \ref{ap:LZ}) based on the Landau-Zener model
 for the energy spacing between two states that would 
have a diabatic transition probability of 1/2 (computed with equation \ref{eqn:de_min_half}).  
A horizontal line is shown for each of the drift durations $T$ shown in Figure \ref{fig:drift}a
and the duration $T$ values are labelled on top of each line.  

Figure \ref{fig:drift}a shows that at rapid drift rates (with short duration for the drift), transitions 
take place between eigenstates in the librating region, whereas in the inner part of the
 circulating region (in the central region of the image panels) there are diabatic transitions. 
At intermediate drift rates ($T/\hbar = 500, 5000$), transitions are suppressed in the librating regions where the dynamical behavior is 
adiabatic and some transitions can take place in the outer parts of the circulating region.  
With decreasing drift rate, the diabatic transition region in the center of the circulating region shrinks.  However, Figure \ref{fig:drift}b
shows that the drift rate would need to be about 5 orders of magnitude lower than our
longest integration to ensure that
no transitions take place and that the system drifts adiabatically for any initial state. 

We compare the drift rates for $b$ shown in Figure \ref{fig:drift} to the adiabatic limit for resonance capture estimated for the classical pendulum (following \citet{Quillen_2006}). 
The KNH theorem holds and the classical system is said to be drifting adiabatically if $\dot b \ll \omega_0^2 $ \citep{Quillen_2006} where $\omega_0$ is the frequency of libration and also the timescale of exponential divergence from the hyperbolic fixed point contained in the separatrix.    
This limit is consistent with the requirement 
that the time to cross the resonance width exceeds the libration period within resonance. 
For the classical pendulum in the form of equation  \ref{eqn:pend2}, the frequency of libration 
\begin{align}
\omega_0  = \sqrt{a\epsilon}  \label{eqn:omega_0} 
\end{align}
which is the same as that of the Harper classical Hamiltonian at the bottom of its potential well. 
We define a dimensionless ratio  
\begin{align}
\beta_{\rm ad} \equiv \frac{\dot b }{\omega_0^2} =\frac{\Delta b}{T} \frac{1}{a \epsilon} . \label{eqn:beta}
\end{align}
where $\Delta b$ is the change in $b$ during a time $T$.   
In the classical setting, 
drift would be considered adiabatic if 
\begin{align}
\beta_{\rm ad} \ll 1.
\end{align}
 
Each panel in Figure \ref{fig:drift}a, shows the $\beta_{\rm ad}$ parameter computed 
 for the parameters of the integration.   These are computed 
 using $\hbar = 2 \pi/N$ resulting from quantization \citep{Quillen_2025}. 
For the shortest duration drift (the leftmost panel in Figure \ref{fig:drift}) with $T/\hbar =20$, the ratio  
$\beta_{\rm ad} = 0.5$ and as this is near 1, the drift is near the adiabatic limit.  
However for $T/\hbar=500$, the ratio $\beta_{\rm ad} = 0.02$ and is below the adiabatic limit for classical resonance capture. 
Figure \ref{fig:drift} illustrates that at drift rates that are well below the classically estimated adiabatic limit 
for resonance capture, both adiabatic and diabatic transitions are likely. 

\begin{figure*}[htbp]\centering 
\includegraphics[width=7truein]{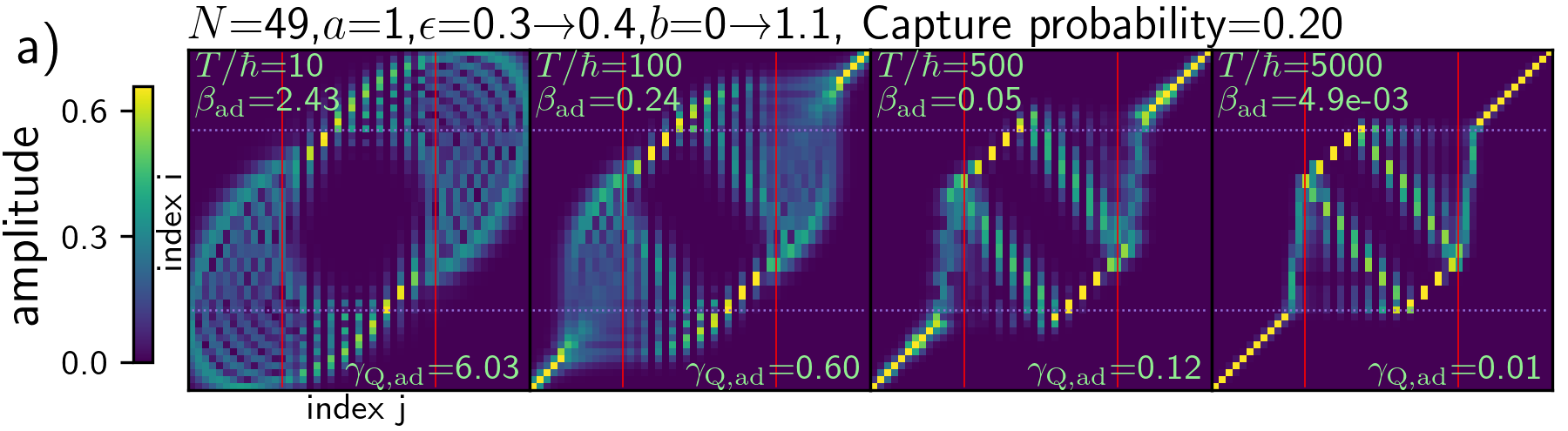}
\includegraphics[width=7truein]{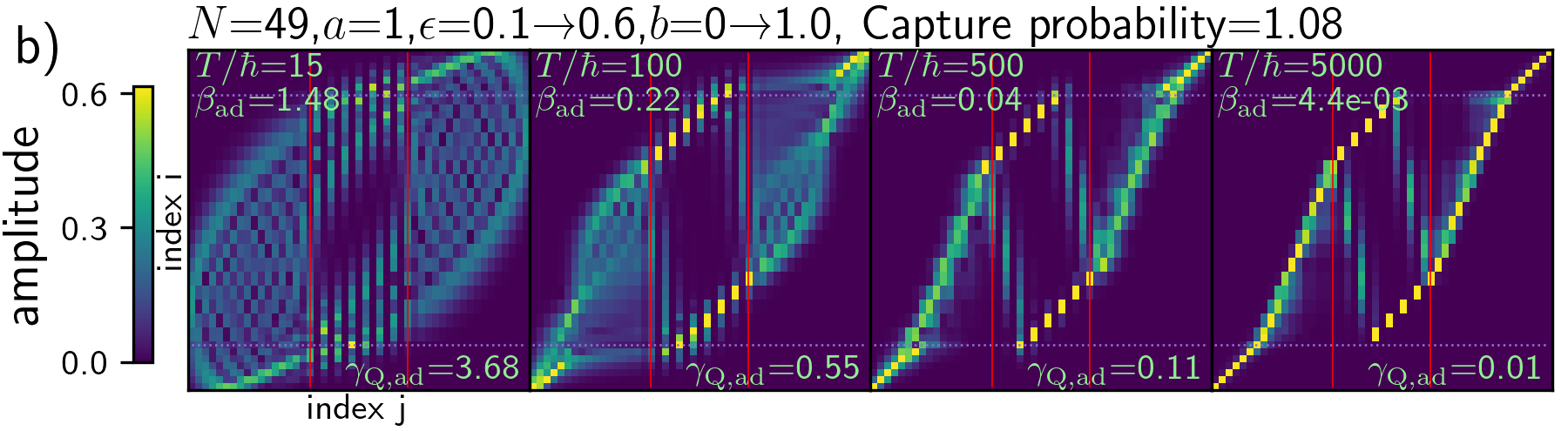}
\caption{
a) Similar to Figure \ref{fig:drift}a except we show transition matrices for the operator 
$\hat h(a, b(t), \epsilon(t)$ 
 with  both $b$ and $\epsilon$ varying.  The probability of capture 
into the librating region is computed with equation \ref{eqn:pcap} (via the KNH theorem) and is printed
on the top of the image.  For this series of 
integrations, the probability of capture is low.   The dimensionless ratio $\beta_{\rm ad}$  
which 
determines whether resonance capture is adiabatic for the associated classical system
(computed via equation \ref{eqn:beta}) is printed on each panel. 
The dimensionless ratio $\gamma_{Q,ad}$ (equation \ref{eqn:gamma_ad})  
describes whether drift is likely to cause Landau-Zener diabatic 
transitions in the librating regions is also printed on each panel. 
The red thin solid vertical lines are at  the indices of the separatrices at the end of the simulation.
The dotted thin horizontal purple lines are at the indices of initial location of the separatrices.  
b)  Similar to a) except $\dot \epsilon$ and $\dot b$ are chosen 
so that the probability of capture into resonance is high. 
\label{fig:drift2}
}
\end{figure*}

\begin{figure}[htbp]\centering 
\includegraphics[width=3.5truein]{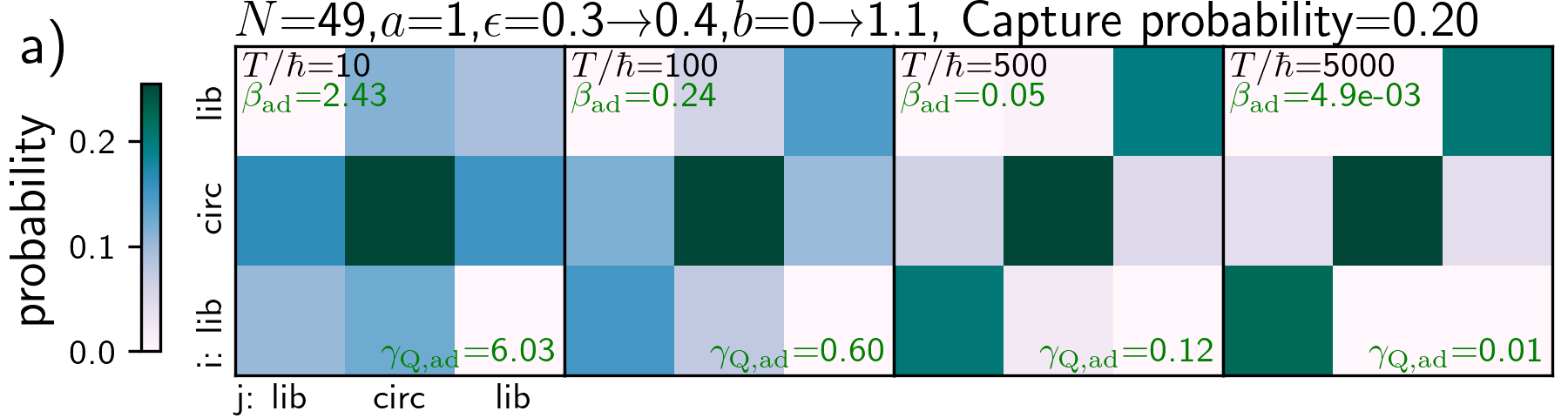}
\includegraphics[width=3.5truein]{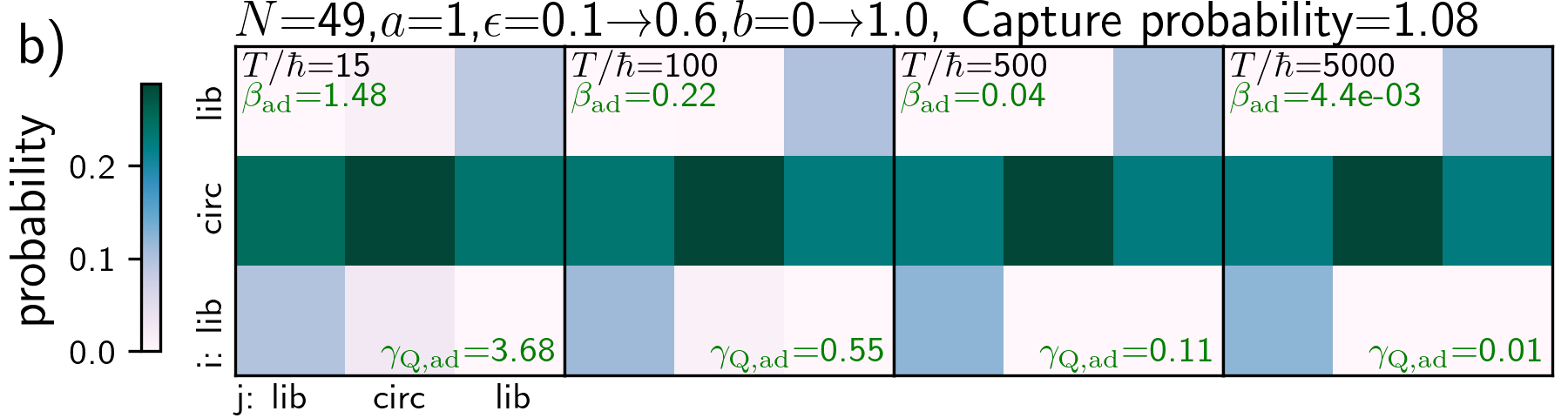}
\caption{a) For the same integrations shown in Figure \ref{fig:drift2}a, we show the probability 
of transition between an initial libration or circulation region and a final libration or circulation region.  
At slow drift rates (on the right) states begun in a librating region remain there and states
begun in a circulating region remain there because the classically estimated probability 
of capture into resonance is low.   At faster and non-adiabatic drift rates (on the left),
 transitions take place between states in circulating and librating regions. 
b) Similar to a) except for the integrations shown in  Figure \ref{fig:drift2}b. 
Because the resonance grows in width and the classically estimated probability of resonance capture is high, there are transitions between states initially in a circulating region 
and those in a librating region.   
At the two slowest drift rates (the rightmost two panels in subfigures a and b) the probabilities of transition between different regions are similar, indicating that drift is both classically adiabatic, $\beta_{\rm ad} <1$ 
and the quantum version (derived by \citet{Stabel_2022}) of the KNH theorem holds. 
\label{fig:probs}
}
\end{figure}

\subsection{Varying both resonance strength and location}

In this section we vary both resonance strength, set by $\epsilon$ and the $b$ parameter that sets 
the location of the resonance.  For different drift rates, 
we compare the transition matrix (equation \ref{eqn:trans}) computed from the propagator  (equation \ref{eqn:UT}) formed from  the time dependent Hamiltonian operator $\hat h(a,b(t), \epsilon(t))$.  
In the classical system, a particle that starts in the circulating region and is later within a 
librating region is said to have been captured into resonance. 
In this section, we compare the transition matrices for parameters that would give a low probability 
resonance capture in the associated classical system to one that would give a high 
probability of capture. 

As shown by \citet{Henrard_1982,Yoder_1979}, Liouville's theorem implies that the probability of capture 
into resonance (provided drift is sufficiently adiabatic; $\beta_{\rm ad} \ll 1$) 
depends on a ratio of the rate that two volumes in phase space vary.  
For a pendulum with Hamiltonian 
$H(p,\phi) = a \frac{p^2}{2} - \epsilon \cos \phi$, 
 the phase space area within the resonance can be computed by integrating momentum as a function of angle 
 inside the separatrix contour.   At the separatrix energy 
 $E_{sep} = a\frac{p^2}{2} - \epsilon \cos \phi  =  \epsilon $ giving 
 $p(\phi) = \sqrt{\frac{2 \epsilon}{a} } \sqrt{1 + \cos \phi}$.   The area within the separatrix orbit 
 \begin{align}
V_{\rm res} = 2\sqrt{\frac{2\epsilon}{a}} \int_0^{2 \pi} \sqrt{1 + \cos \phi }\ d \phi  =  16 \sqrt{\frac{\epsilon}{a}}.
\end{align}
We use the pendulum to estimate the area inside the separatrix 
as it is much easier to integrate than  the Harper model.  For $a$ fixed, 
the rate that the area of phase space within the resonance varies is
 \begin{align}
\dot V_{\rm res} = \frac{8 \dot \epsilon}{\sqrt{\epsilon a }} =  \frac{8 \dot \epsilon}{\omega_0}. \end{align}  
The rate that the upper separatrix sweeps up volume in phase space,  
is $\dot V_+ = 2 \pi \dot b$. 
The probability of capture into resonance (or capture into a libration region from a circulating region) 
based on the KNH theorem is 
\begin{align}
P_c \approx \frac{ \dot V_{\rm res}}{\dot V_+}  
\approx  \frac{\dot \epsilon}{\dot b} \frac{ 4 }{\pi \sqrt{\epsilon a} }  . \label{eqn:pcap}
\end{align}

In Figure \ref{fig:drift2} we show transition matrices computed with 
both $\epsilon$ and $b$ varying linearly in time.  The total distance in $b$ and $\epsilon$ that vary during 
the integration are shown on the top of each subfigure. 
The probability of capture, $P_c$,  
into the librating region is computed via equation \ref{eqn:pcap} using the value of $\epsilon$ 
midway through the integration and is printed on the top of each plot. 
For Figure \ref{fig:drift2}a, the ratio $\dot b/\dot \epsilon$ is sufficiently high 
that the probability of capture into the libration region is low. 
The opposite is true in Figure \ref{fig:drift2}b where the probability of 
 capture into the librating region is high.  
 The dimensionless number $\beta_{\rm ad}$   (equation \ref{eqn:beta}), 
 characterizing whether the drift rate of the associated classical system would be 
 considered adiabatic, is written on each panel.  
 
For Figure \ref{fig:drift2}a the total drift in $b$ is about 1/3 of the distance across the $2 \pi$ range for momentum $\hat p$.  Because the probability of capture is low, most states that are initially within the circulation region 
remain in that region, with a high probability, unless the drift rate is high enough that numerous transitions 
take place.     At low drift rates  (on the right in Figure Figure \ref{fig:drift2}a) states initially with 
the circulating region transition to other states in the same region, whereas states initially within the libration 
region remain in that region. 
In Figure \ref{fig:drift2}b, because the libration region grows in size, states initially in the circulating region 
are likely to transition into the librating region.

The dimensionless ratio $\Gamma$ of the Landau-Zener model is conceptually similar  to 
the dimensionless parameter $\beta_{\rm ad}$ of equation \ref{eqn:beta}. 
The dimensionless quantity $\beta_{\rm ad}$ is approximately a ratio consisting of the time it takes the resonance to drift the distance of a resonance width divided by  the libration period.  The parameter $\Gamma$ 
is approximately  the time it takes to drift the eigenvalues a distance equal to the energy difference of the avoided crossing divided by the period of phase oscillations in the avoided crossing (which  depends on the minimum energy difference and
$\hbar$).  Dimensionless ratios $\beta_{\rm ad}$ and $\Gamma$ are both likely to be important
 in a resonant quantum system, as 
 $\beta_{\rm ad} \ll 1 $ is required for transition probability integrated over 
different regions in phase space to match those predicted by the KNH theorem, 
whereas $\Gamma \ll 1$ is required for suppression of diabatic transitions between states. 

We construct a dimensionless parameter in the spirit of the Landau-Zener model but dependent upon
the energy level spacing in the bottom of the cosine potential well.  For the Harper operator, 
the spacing between energy levels at the bottom of the potential well 
is $\Delta E = \hbar \omega_0 $ with oscillation frequency  
$\omega_0 = \sqrt{a \epsilon}$.  
The $\Gamma$ factor of the Landau-Zener model is the ratio of the square of an energy difference to $\hbar$ times an energy drift rate.   It can be described as the ratio of the time to cross the energy difference 
at a specified energy drift rate 
and the period of phase oscillations caused by this energy difference.    
We construct a similar dimensionless quantity   
 $\Gamma_{\rm ad} = \frac{(\hbar \omega_0)^2} {\hbar \alpha}$ where $\alpha$ 
 is an energy drift rate.  If $b$ drifts by $2 \pi$ then energy varies between its lowest and highest
 values, which for $|\epsilon/a| <1$ is a range of about 2a, giving  $\alpha \sim {a\dot b}/{\pi}$
 (equations \ref{eqn:DotE} and \ref{eqn:alpha} in appendix \ref{ap:LZ}). 
So that we have a parameter that is small when the system drifts adiabatically, we 
define a dimensionless ratio that is the inverse of $\Gamma_{\rm ad} $; 
 \begin{align}
 \gamma_{\rm Q,ad} & = \Gamma_{\rm ad}^{-1}   
 \sim  \frac{\hbar a\dot b}{\pi ( \hbar\omega_0)^2 } 
 \sim \frac{1}{\pi \hbar \epsilon }\frac{\Delta b} {T}  
  \sim \frac{N}{2\pi^2 \epsilon }\frac{\Delta b} {T} ,  \label{eqn:gamma_ad}
 \end{align}
 where we have used $\hbar =  \frac{2 \pi}{N}$ as the 
 effective value of $\hbar$ for the quantized Harper operator \citep{Quillen_2025}. 
 The values of the dimensionless ratio $\gamma_{\rm Q, ad}$ are printed on each panel 
 in Figure \ref{fig:drift2} a, b.  
 
 In Figure \ref{fig:drift2} we confirm that when $\gamma_{\rm Q,ad} \ll 1$ transitions between 
 states in the librating region are unlikely. 
 While $\gamma_{\rm Q, ad}$ is estimated using the energy spacing in the librating region, 
 the difference in energy levels near the separatrix is only a few times smaller (see Figure 
 \ref{fig:drift}b) than that in at the bottom of the potential well in the libration region. 
 Thus $\gamma_{\rm Q, ad}$ can be used to estimate whether transitions are adiabatic 
 or diabatic near the separatrices.   As the probability of capture into resonance depends upon 
 the nature of transitions in the vicinity of the separatrix, $\gamma_{\rm Q, ad}$ could give a
quantum based estimate for whether transitions (or tunneling) between states would cause a deviation 
from the prediction of the KNH theorem. 
 
Due to transitions into superposition states (sometimes described as quantum tunneling), the classically estimated probability of capture 
into resonance (estimated via the KNH theorem) can underestimate the probability of capture
 into resonance. 
\citet{Stabel_2022} found that by summing over probabilities of Landau-Zener transitions, 
the KNH theorem could be modified to take into account both diabatic and adiabatic transition probabilities.  
In Figure \ref{fig:probs} we compute the probabilities of transition between initial libration and circulation 
and final libration and circulation regions for the same integrations shown in Figure \ref{fig:drift2}. 
The transition probabilities are estimated by summing the square of the transition matrix elements 
in regions of the transition matrix that are bounded by the indices of states with energies closest
to those of the separatrix orbits.  
This Figure illustrates that a sum of transition probabilities between regions is sensitive to the drift 
rate, though as both dimensionless parameters $\beta_{\rm ad}$ and $\gamma_{\rm Q, ad}$ decrease as the drift rate decreases, 
it is difficult to separate between classical and quantum non-adiabatic behavior.  
We infer that the KNH theorem holds in the adiabatic limit of both $\beta_{\rm ad} \ll 1$ and 
$\gamma_{\rm Q, ad} \ll 1$ as the probabilities of transitions between 
regions of phase space approach values that are independent 
of drift rate in this limit. 
This inference is consistent with the picture described by \citet{Stabel_2022} generalizing the 
KNH theorem in the quantum setting.  Note that even at the low drift rates on the right hand side
of Figure \ref{fig:drift3} transitions between eigenstates take place within the circulation region 
where the energy levels are nearly degenerate.  Thus the KNH theorem is likely obeyed in the quantum 
system even when transitions into superposition states take place. 
 
 \begin{figure*}[htbp]\centering 
\includegraphics[width=7truein]{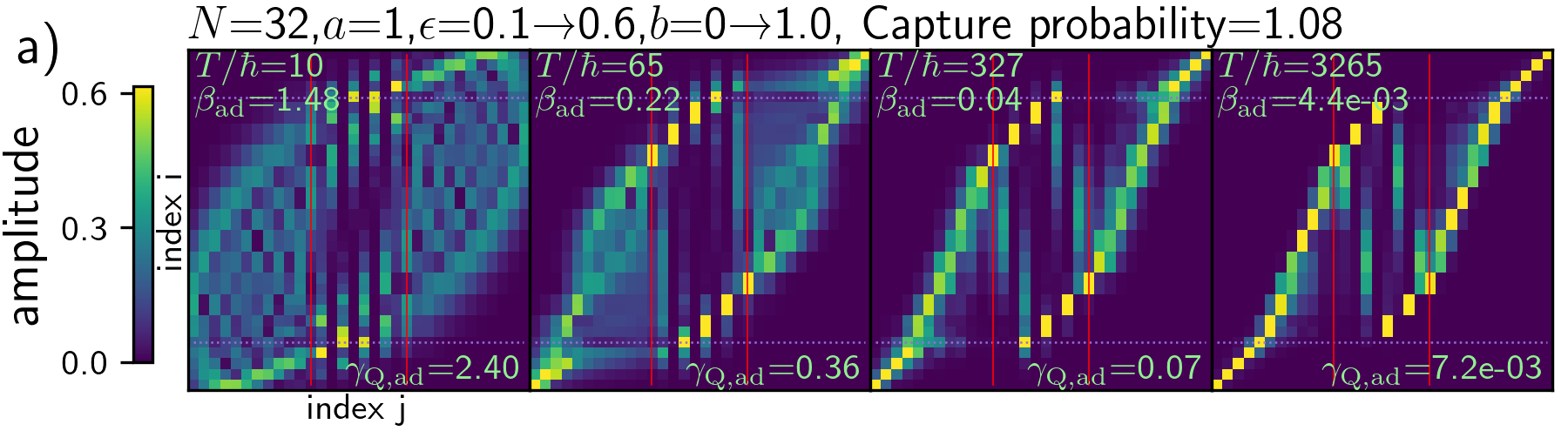}
\includegraphics[width=7truein]{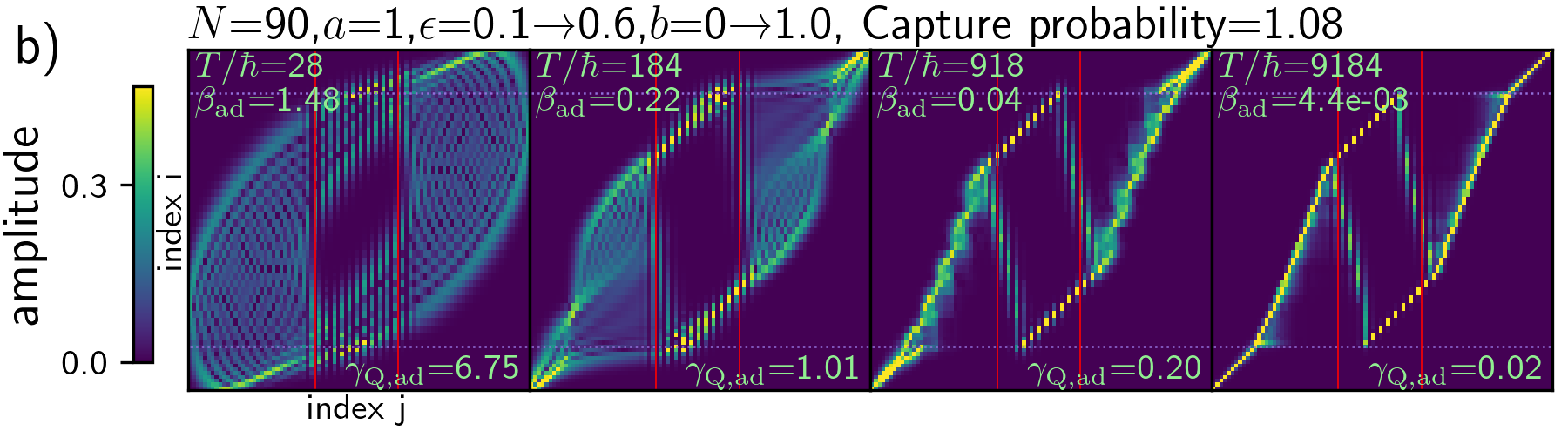}
\caption{ a) Similar to Figure \ref{fig:drift2}b except the dimension of the quantum system $N=32$ differs and is lower than in Figure \ref{fig:drift2}b. The durations $T$ of the drifts are the same as in Figure \ref{fig:drift2}b. 
b) Similar to a) except the dimension $N =90$ is higher than that of a).  Classical quantities $\beta_{\rm ad}$,  characterizing the adiabatic limit, and capture probability $P_{cap}$ are the same in a) and b) but because the effective value of $\hbar$
differs between the two operators, the parameter $\gamma_{\rm Q, ad}$, describing 
the likelihood of Landau-Zener transitions in the librating regions, differs between a) and b). 
The regions of phase space that undergo transitions are similar in the two models and governed
by the classical adiabatic parameter $\beta_{\rm ad}$.   
\label{fig:drift3}
}
\end{figure*}

\begin{figure}[htbp]\centering 
\includegraphics[width=3.5truein]{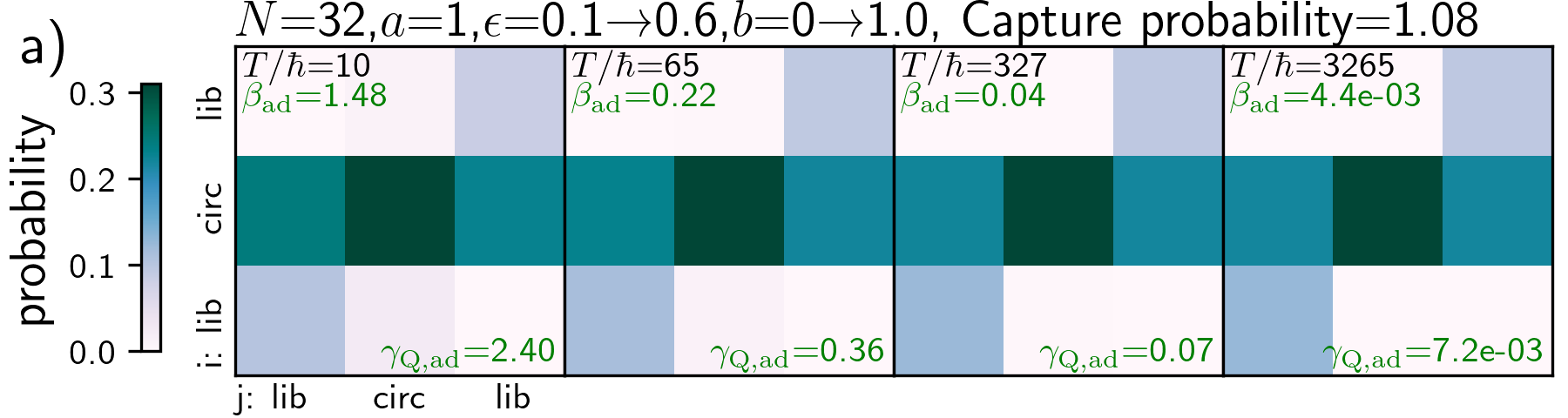}
\includegraphics[width=3.5truein]{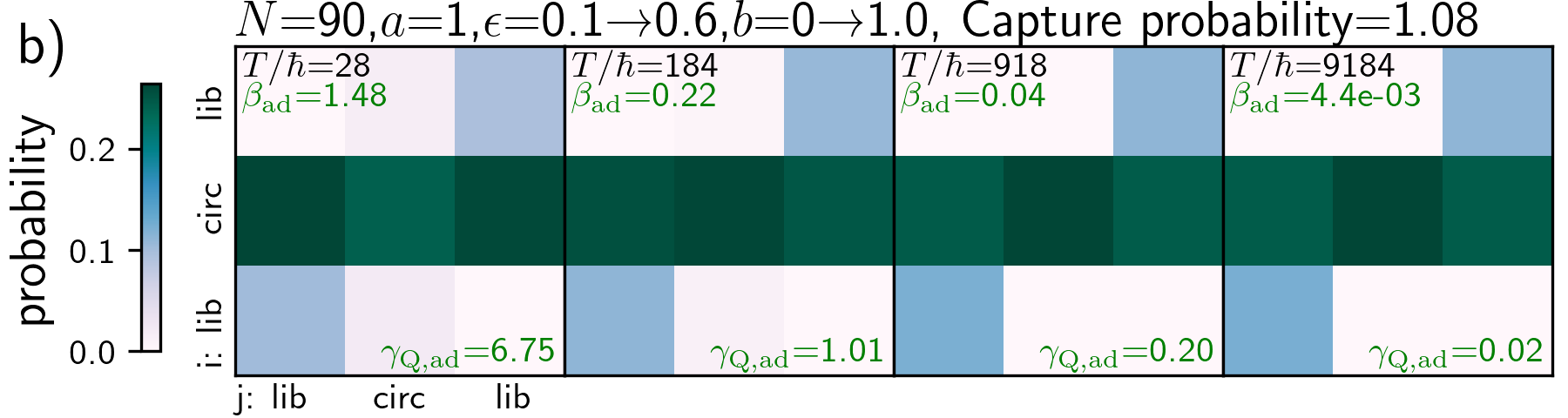}
\caption{a) For the same integrations shown in Figure \ref{fig:drift3}a, we show the probability 
of transition between an initial libration or circulation region and a final libration or circulation region.  
b) Similar to a) except for the integrations shown in  Figure \ref{fig:drift3}b. 
\label{fig:probs3}
}
\end{figure}

Whereas dimensionless parameters $\beta_{\rm ad}$ (characterizing the adiabatic limit) 
and $P_{cap}$ (the probability of capture) are quantities describing the drifting classical model, 
$\gamma_{\rm Q, ad}$  depends on  Planck's constant $\hbar$.   In Figure \ref{fig:drift3} we compare 
transition matrices for the same set of drift durations as shown in Figure \ref{fig:drift2}b that has a high probability of capture into resonance, but with different dimensions $N$ and associated effective values of $\hbar$. 
Figure \ref{fig:drift3} shows that the probabilities of transitions between different regions of phase 
space is insensitive to drift rate as long as both $\gamma_{\rm Q, ad} \ll 1 $ and $\beta_{\rm ad} \ll 1$.  
The transition matrices shown in Figure \ref{fig:drift3} have similar morphology 
at the two different $N$ values, so we attribute to the small differences in transition probabilities
 in Figure \ref{fig:drift3}a, b to  differences in the fraction of states in the different regions at low $N$. 
 
The morphology of the transition matrices in Figure \ref{fig:drift3}a, b are remarkably similar, so we suspect 
that  the classical adiabatic parameter $\beta_{\rm ad}$  governs the 
probability of transitions between regions of phase space, though the Landau-Zener model would 
determine whether diabatic or adiabatic transitions take place between nearby (in energy) quantum states. 

If the volume in phase space remains fixed while the number
of quantum states $N$ is increased,
$\hbar \to 0$ corresponding to the semi-classical limit. 
With large $N$, the parameters of the associated 
classical model remain fixed but the number of bound states within the cosine potential well of the quantized model increases.  For a drifting model, the capture probability and 
 classical adiabatic parameter $\beta_{\rm ad}$ are independent of the number of quantum states $N$, 
 but the quantum adiabatic parameter $\gamma_{\rm Q,ad}$ depends on $\hbar^{-1} \propto N$. 
In the semi-classical limit 
the parameter $\gamma_{\rm  Q, ad} \to 0$.   This implies that transitions between states would be 
increasingly likely, even at low drift rates, but as the energy difference between states decreases, 
transitions would mostly be between states that have similar energy 
 and the dynamics would increasingly resemble that of the associated classical system.  

In contrast, the momentum scale of the torus can be increased as $N$ increases, while maintaining 
the number of bound quantum states within the cosine potential well.  
In appendix  \ref{ap:mathieu} we compare the spectrum of the Mathieu equation to 
that of the Harper model and find that the Harper model spectrum is a good approximation 
for energy levels that are below the center of the circulating region.  
In this limit $\epsilon$ is decreased as $N$ increases (fixing $q$ in equation \ref{eqn:scale_q}, 
as discussed  in appendix  \ref{ap:mathieu}). 
 
\section{Summary and Discussion}

The finite dimensional Harper operator is a simple quantum system on the phase space of a torus that exhibits 
complex behavior when time dependent.  Even though it is remarkably simple 
when written in terms of shift and clock operators (see appendix \ref{ap:defs}), and is related to an integrable 
(non-chaotic) classical system (see Figure \ref{fig:Harp}), the notion of adiabatic drift is non-trivial in the quantum system because of the presence of separatrix orbits in the associated classical system 
that divide phase space into librating and circulating regions. 
Because the operator is finite dimensional, it is straight forward to numerically evaluate its spectrum 
and because it is simply written in terms of operators, a number of symmetries can be exploited 
to aid in studying its spectrum (see appendix \ref{ap:spectrum}). 
%

We find that the nature of spectrum of the quantized system depends upon 
 whether eigenstate energies are in the circulating or librating region of the associated classical system.  
Within the librating region, energy levels are 
well separated, however, in the circulating region, pairs of eigenvalues are nearly 
degenerate.  
When the center of the resonance is drifted, via increasing or decreasing the $b$ parameter
which sets the center of resonance, the spectrum in the circulating region exhibits a lattice of 
avoided energy level crossings (Figures \ref{fig:cc2}, and \ref{fig:cc3}).  In the librating region, 
the energy levels are shifted and the spacing varies to a lesser extent. 
The difference in behavior is explained heuristically in section 
\ref{sec:heu} with a perturbed harmonic oscillator (for the librating region) and rotor (for the circulating region).    
Symmetries of the Harper operator imply that near degeneracies in the spectrum occur at $b$ multiples of $\pi/N$  (see appendixes \ref{ap:bshift}, \ref{ap:derivs}, \ref{ap:degen} and Figure \ref{fig:cc2}). 

Despite the simplicity of the Harper operator, it exhibits an extremely wide range in 
its energy level spacings which affects evolution of the quantum system
when the Hamiltonian operator is time dependent. 
For parameter $b$ (which sets the resonance center) slowly varying, the energy levels of eigenstates in the librating region are not much affected,  
however energy levels within the circulating region 
 undergo a series of close approaches with minimum distance between pairs of eigenvalues 
 that span many orders of magnitude (e.g., see Figure \ref{fig:slice}).  
 The minimum spacing is in the middle of circulating region and depends on a high power 
 of the resonance strength parameter (equation \ref{eqn:min_dist}, appendix \ref{ap:min_spacing}).
 This high power would be difficult to predict with conventional perturbation theory.  
 We computed the propagator of the time dependent Hamiltonian operator for systems that vary the same total amount in $b$ but over different durations. 
For a quantum state initialized in an eigenstate,  transitions take place over a wide range in  drift rates
(see Figure \ref{fig:drift}). 
Only at drift rates many orders of magnitude below the classical adiabatic limit for resonance capture would a system initially begun in any eigenstate remain in one. 

In quantum systems, one notion of adiabatic drift is that drift 
is sufficiently slow that a system initialized  
in an eigenstate state of a time dependent Hamiltonian operator would remain in an eigenstate
of the Hamiltonian.   This notion is consistent with the two level Landau-Zener model. 
In this sense, only at negligible drift rates would the time dependent Harper operator 
behave adiabatically for all possible initial conditions.     This is not inconsistent with the fact pointed out
by \citet{Berry_1984}; adiabatic and semiclassical limits 
give opposite results when two levels in a slowly-changing system pass a near-degeneracy. 

In a classical system containing a separatrix orbit, an alternate notion of adiabatic behavior is 
whether the probability 
for resonance capture is well described by a probability computed via the KNH theorem which is based on 
conservation of volume in phase space (Liouville's theorem). 
At sufficiently slow drift rates, by taking account all transition probabilities,  \citet{Stabel_2022}
showed using a semi-classical limit that a quantum system 
approaches the same resonance capture probability as predicted from the KNH theorem. 
We support, confirm and expand on the work by \citet{Stabel_2022} 
with a complimentary and finite dimensional Hamiltonian model. 
A classical drifting resonance can be described via two dimensionless parameters, 
a capture probability (computed via the KNH theorem) and a dimensionless parameter that describes 
whether the drift is sufficiently slowly that it is considered adiabatic \citep{Quillen_2006} (here the $\beta_{\rm ad}$ parameter  of equation \ref{eqn:beta}). 
For a resonant quantum system we estimate an additional dimensionless parameter $\gamma_{\rm Q, ad}$ (equation \ref{eqn:gamma_ad}), similar to
the $\Gamma$ parameter of the Landau-Zener model, to describe whether states in 
the resonance libration region are sufficiently separated (in energy) that the drift rate would not cause diabatic transitions near the separatrix energy.   
We find that probabilities of transition between different regions of 
phase space seems primarily dependent
upon the classical parameter, with probability approaching a constant 
value for $\beta_{\rm ad} \ll 1 $.  However, transitions in 
the vicinity of the separatrix cease for $\gamma_{\rm Q, ad} \ll 1$.  Because of
the wide range in energy level spacings that are present in a quantized system associated
with a classical one that contains a separatrix orbit, 
transitions between eigenstates take place in the circulation region even if the drift rate
 is sufficiently slow that the KNH theorem should hold. 
In other words, drift can be sufficiently adiabatic that 
 the KNH theorem would be obeyed in a quantum system, taking into 
account multiple transitions near the separatrix region. Yet a system begun in a eigenstate would 
not necessarily remain in one particularly if there is a large range in separations between neighboring eigenvalues.  
The notion of adiabatic behavior in which both quantized and classical systems obey the KNH theorem 
for resonance capture  is consistent with but differs from the notion of adiabatic behavior in which a system begun in an eigenstate remains in an eigenstate of a time dependent Hamiltonian operator.   
In the vicinity of a separatrix in a classical system, the dynamics is not strictly adiabatic. 
The associated quantum manifestation of non-adiabatic behavior near the separatrix energy 
could be the existence of diabatic transitions. 

Expansion of a non-local or long range 
perturbation, such as the gravitational force from a planet, gives a series
 of cosine terms, each associated with an orbital resonance (e.q., \cite{Quillen_2011}). 
Hence cosine potential terms can be ubiquitous in complex classical Hamiltonian systems. 
Many of the symmetries obeyed by the quantized Harper model would also be obeyed by 
finite dimensional quantum 
operators with additional Fourier terms (see appendix \ref{ap:other}). 
There is similarity between 
the spectrum of the finite dimensional Harper operator and the Mathieu equation (appendix \ref{ap:mathieu}). 
This implies that the finite dimensional Harper model and its variants could be used to approximate 
infinite dimensional quantum systems with cosine or more complex non-local potentials.   
Future studies could attempt to determine if there is a type of universality associated 
with resonant quantum systems that is illustrated via the deceptively simple looking Harper operator. 

The quantum pendulum has nearly degenerate eigenvalues with spacing that decreases as index 
$m \to \infty$.  Consequently for any drift rate, no matter how slow,  there would be 
some energy above which diabatic transitions would occur (see appendix \ref{ap:mathieu}). 
Despite the fact that the distance between pairs of energy levels increases with energy ($\propto m^2$ with eigenstate index $m$), 
the energy difference between the states in each pair shrinks rapidly.  The difference drops so rapidly  (via a power law) 
that near degeneracy can be reached at a moderate energy.  
In this sense,  it is amusing to think of the 
the quantized pendulum as an example of a quantum system that has no formal adiabatic limit.  
An implication is that if ionization (or transitions to the circulating region)  can occur  (e.g., 
in transmons \cite{Dumas_2024}) then 
non-adiabatic phenomena could be present over a wide range of possible drift or driving frequencies. 

While there are a number of techniques for placing limits on eigenvalues of an operator using perturbation theory or  with 
the Cauchy interlacing theorem or Weyl's inequalities, (e.g., \cite{Parlett_1998,Bhatia_2007}), it is more challenging 
to place limits on the spacing between eigenvalues (though see \cite{Mignotte_1982,Haviv_1984,Movassagh_2017,Zakrzewski_2023}). 
This motivates studying specific systems, such as the Harper model,
 which could represent a class of resonant models, and for 
 developing additional tools in linear algebra to aid in estimating distances between 
eigenvalues. 

Adiabatic and diabatic phenomena are relevant for design of counter-diabadic protocols or transitionless quantum driving  \citep{Demirplak_2008,Chen_2010,Sels_2017}, and adiabatic computational algorithms.  
Design of varying Hamiltonians that purposely contain 
a range of gap sizes could be used for the opposite effect, giving enhanced numbers of transitions  
for fast effective thermalization.   The system we have studied here is based on an integrable classical system, and 
more complex phenomena is likely to be present in drifting systems that are associated with chaotic classical 
systems (e.g., \cite{Quillen_2025}).

\vskip 1 truein

{\bf Acknowledgements}

We thank Sreedev Manikoth  
%
for helpful  discussions. 

Figures in this manuscript are generated with python notebooks available 
at \url{https://github.com/aquillen/H0drift}.  

\bibliographystyle{elsarticle-harv}
\bibliography{Qchaos}

\appendix 

\section{Properties of the spectrum of the Harper operator}
\label{ap:spectrum}

\subsection{The discrete and shifted Harper/finite almost Mathieu operator}
\label{ap:defs}

Our quantum space is 
 an $N$ dimensional complex vector space, typically with $N>3$.  
In an orthonormal basis denoted $\{ \ket{j}\}: j \in \{0, 1, \ldots ,N-1 \} $, 
the discrete Fourier transform  is 
\begin{align}
 \hat Q_{FT} &= \frac{1}{\sqrt{N}} \sum_{j,k=0}^{N-1} \omega^{jk}\ket{j}\bra{k},  
\end{align}
with complex root of unity
\begin{align}
\omega &\equiv e^{2 \pi i/N}. 
\end{align}
Using the discrete Fourier transform, we construct an orthonormal basis consisting of states 
\begin{align}
\ket{m}_F  = \hat Q_{FT} \ket{m} 
= \frac{1}{\sqrt{N}} \sum_{j=0}^{N-1} \omega^{mj} \ket{j}  \label{eqn:fbasis}
\end{align}
with $m \in \{0, 1, \ldots, N-1\}$. 

Angle and momentum operators are defined using the conventional and Fourier bases (defined in 
equation \ref{eqn:fbasis}), 
\begin{align}
\hat \phi  &= \sum_{j=0}^{N-1} \frac{2 \pi j}{N} \ket{j}\bra{j}\nonumber \\
\hat p & = \sum_{k=0}^{N-1} \frac{2 \pi k}{N} \ket{k}_F\bra{k}_F . \label{eqn:phi_p}
\end{align}
Clock and shift operators  \citep{Schwinger_1960} are defined as 
\begin{align}
\hat Z &= \sum_{j=0}^{N-1} \omega^j \ket{j}\bra{j} = \sum_{k=0}^{N-1} \ket{(k+1)\text{ mod } N}_F \bra{k}_F \nonumber \\
\hat X &= \sum_{k=0}^{N-1} \omega^{-k} \ket{k}_F \bra{k}_F = \sum_{j=0}^{N-1} \ket{(j+1) \text{ mod } N} \bra{j} . \label{eqn:clock_shift}
\end{align}
The clock and shift operators are also called generalized Pauli matrices or Weyl-Heisenberg matrices \citep{Appleby_2005}. 
They obey
 \begin{align}
\hat Z \hat X &= \omega \hat X \hat Z, \qquad   
\hat Z \hat X^\dagger  = \omega^{-1} \hat X^\dagger \hat Z   \label{eqn:ZflipX}
\end{align} and 
\begin{align}
\hat Z& = \hat Q_{FT} \hat X \hat Q_{FT}^\dagger, 
\qquad \hat Z^\dagger = \hat Q_{FT}^\dagger \hat X \hat Q_{FT} \nonumber \\
\hat Z^N &= \hat X^N = \hat I   \label{eqn:QFTstuff}
\end{align}
where $\hat I$ is the identity operator. 

The parity operator 
\begin{align}
\hat P = \sum_{n=0}^{N-1} \ket{n }\bra{ - n \ {\rm mod} \ N} 
= \sum_{k=0}^{N-1} \ket{k }_F \bra{ - k \ {\rm mod} \ N}_F.  \label{eqn:parity}
\end{align}
Trigonometric functions are particularly simple in terms of the clock and shift operators, 
\begin{align}
 \cos \hat  \phi & = \frac{1}{2} (\hat Z + \hat Z^\dagger) \nonumber \\
 \sin \hat \phi & =  \frac{1}{2i} (\hat Z -  \hat Z^\dagger) \nonumber \\
 \cos \hat p & = \frac{1}{2} (\hat X + \hat X^\dagger) \nonumber  \\
 \sin \hat p & = \frac{1}{2i} (- \hat X + \hat X^\dagger).
\end{align}

We are interested in the spectrum and the dynamics of the $N$-dimensional Hermitian operator 
of equation \ref{eqn:h0} which we restate here, 
\begin{align}
\hat h\left(a,b,\epsilon\right) &=  a\cos (\hat p - b) + \epsilon \cos \hat \phi , \label{eqn:h0abe}
\end{align}
and with $\hat p, \hat \phi$ operators defined in equation \ref{eqn:phi_p}. 
The coefficients $a, \epsilon, b$ are real numbers, and typically $a>0$ and $| \epsilon| \le 1$. 
With $b=0$, this operator is equivalent to 
 the finite almost Mathieu operator studied by \citet{Strohmer_2021}.
The parameter $b$ allows us to shift the momentum or kinetic operator term as illustrated
for the classical system in Figure \ref{fig:Harp}. 
Except for its sign and a constant term proportional to the identity operator, the operator $\hat h$ is 
 similar to the operator we studied previously \citep{Quillen_2025} (called $\hat h_0$) 
 that was derived from 
 the classical Harper Hamiltonian \citep{Harper_1955}. 
The operator is traceless; $\tr \hat h (a,b,\epsilon) = 0$.  

In terms of clock and shift operators, the Hermitian operator of equation \ref{eqn:h0abe}
\begin{align}
\hat h \left(a,b,\epsilon\right)
& =  \frac{a}{2} \left( \hat X + \hat X^\dagger\right) \cos b + \frac{a}{2i} \left(- \hat X + \hat X^\dagger\right)\sin b \nonumber  \\
& \ \ \ \ + \frac{\epsilon}{2}\left(\hat Z + \hat Z^\dagger\right)  \nonumber  \\
&=  \frac{a}{2} \hat X e^{ib} + \frac{a}{2} \hat X^\dagger e^{-ib} 
+ \frac{\epsilon}{2} \left(\hat Z + \hat Z^\dagger\right) .  \label{eqn:h0_short}
\end{align}
With $b=0$, $a=\epsilon$, the operator commutes with the discrete Fourier transform operator, 
\begin{align}
 \left[\hat Q_{FT},\hat h\left(a,0,a\right) \right] = 0
\end{align} 
 so the finite almost Mathieu operator helps classify 
eigenstates of the discrete Fourier transform \citep{Dickinson_1982}.

The Hermitian operator $\hat h$ is triagonal with the addition of two additional terms on the top right and lower
left corners and in the form  
\begin{align}
\begin{pmatrix}
\alpha_0 & \beta^*       & 0              &  \cdots & 0 & \beta \\
\beta    & \alpha_1 &  \beta^*       &  \ddots & \vdots & 0 \\
0             & \beta     & \alpha_2  & \ddots & \vdots  & \vdots \\
\vdots     &  \ddots     & \ddots  &    \ddots & \ddots & \vdots  \\
0             &  \cdots     &    \ddots   & \beta & \alpha_{N\!-\!2}  &  \beta^* \\
\beta^*     &  0 &   \cdots  & 0 &  \beta  &  \alpha_{N\!-\!1}
\end{pmatrix}. \label{eqn:tridiag_ab}
\end{align}
In the conventional basis and for $j \in \{0, 1, \ldots, N-1\}$,  the operator 
$\hat h(a,b,\epsilon)$ is a Hermitian  matrix in the form of equation \ref{eqn:tridiag_ab} with coefficients 
\begin{align}
\alpha_j & = \epsilon \cos \left( \frac{2 \pi j}{N} \right) \nonumber \\
\beta & = \frac{a}{2} e^{ib} .  \label{eqn:cofs_j}
\end{align}
In the Fourier basis (defined in equation \ref{eqn:fbasis}) and for $k \in \{0, 1, \ldots, N-1\}$, the operator 
$\hat h(a,b,\epsilon)$  is a  symmetric real matrix in the form of \ref{eqn:tridiag_ab} with coefficients 
\begin{align}
\alpha_k & = a \cos\left( \frac{2 \pi k - b}{N} \right) \nonumber \\
\beta & = \frac{\epsilon}{2} .\label{eqn:cofs_k}
\end{align}

\subsection{How shifts in parameter $b$ affect the spectrum} 
\label{ap:bshift}

From shift, clock and parity operators, we can try to construct an invertible operator $\hat V$ that obeys 
  $\hat V \hat h(a, b, \epsilon) \hat V^{-1} = \hat h(a',b,'\epsilon')$ where $a', b', \epsilon'$ are not
 necessarily the same parameters as $a,b,\epsilon$. 
 If we can find such an operator $\hat V$, 
 then the spectrum of $h(a, b, \epsilon) $ is the same as the spectrum of $\hat h(a',b',\epsilon')$.

\begin{theorem} \label{th:A1}
For the operator $\hat h(a,b,\epsilon)$  in equation \ref{eqn:h0abe}
and $k \in {\mathbb Z}$, the spectrum or set of eigenvalues obeys 
\begin{align}
{\rm spectrum}\left[ \hat h\left(a,   \dfrac{2 \pi k}{N},\epsilon \right)  \right] &= 
{\rm spectrum}\left[ \hat h\left(a, 0,\epsilon \right)  \right] 
.
\end{align} 
\end{theorem}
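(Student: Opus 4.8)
The plan is to exhibit a single unitary operator that conjugates $\hat h(a,0,\epsilon)$ into $\hat h(a,2\pi k/N,\epsilon)$; because conjugation by a unitary leaves the spectrum invariant, this establishes the claimed equality at once. I would work from the clock--shift representation of equation \ref{eqn:h0_short}, in which the entire dependence on $b$ is carried by the phases $e^{\pm ib}$ multiplying $\hat X$ and $\hat X^\dagger$, while the potential term $\tfrac{\epsilon}{2}(\hat Z + \hat Z^\dagger)$ is independent of $b$. Setting $b = 2\pi k/N$ makes $e^{ib} = \omega^k$, so I am looking for a similarity transformation that multiplies $\hat X$ by $\omega^k$ and $\hat X^\dagger$ by $\omega^{-k}$ while fixing the potential term.

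The natural candidate is $\hat V = \hat Z^k$. First I would invoke the Weyl commutation relations of equation \ref{eqn:ZflipX}: iterating $\hat Z \hat X = \omega \hat X \hat Z$ gives $\hat Z^k \hat X \hat Z^{-k} = \omega^k \hat X$, and iterating $\hat Z \hat X^\dagger = \omega^{-1}\hat X^\dagger \hat Z$ gives $\hat Z^k \hat X^\dagger \hat Z^{-k} = \omega^{-k}\hat X^\dagger$. Since $\hat Z$ commutes with both $\hat Z$ and $\hat Z^\dagger$, the potential term is untouched by the conjugation. Assembling the three pieces term by term then yields
\begin{align}
\hat Z^k\, \hat h(a,0,\epsilon)\, \hat Z^{-k}
&= \frac{a}{2}\omega^k \hat X + \frac{a}{2}\omega^{-k}\hat X^\dagger + \frac{\epsilon}{2}\left(\hat Z + \hat Z^\dagger\right) \nonumber \\
&= \hat h\!\left(a,\tfrac{2\pi k}{N},\epsilon\right),
\end{align}
where the last equality uses $\omega^k = e^{i\,2\pi k/N}$ together with equation \ref{eqn:h0_short}.

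Finally I would note that $\hat Z$ is unitary: its matrix is diagonal with roots of unity on the diagonal by equation \ref{eqn:clock_shift}, and $\hat Z^N = \hat I$ by equation \ref{eqn:QFTstuff}. Hence $\hat Z^k$ is unitary and invertible for every integer $k$, and unitary conjugation preserves eigenvalues, so the two operators share a spectrum. There is no genuinely hard step here: the only points demanding care are getting the direction of the phase correct when iterating the commutation relation, and observing that $\hat Z^N = \hat I$ forces the construction to depend only on $k \bmod N$---the quantum counterpart of the fact that shifting $b$ by $2\pi$ returns the classical Hamiltonian of equation \ref{eqn:Harp} to itself.
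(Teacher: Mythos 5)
Your proof is correct and follows essentially the same route as the paper: both conjugate by $\hat Z^{k}$ (the paper writes the identity as $\hat Z^{-k}\,\hat h(a,\tfrac{2\pi k}{N},\epsilon)\,\hat Z^{k}=\hat h(a,0,\epsilon)$, which is the same relation read in the other direction), using the Weyl relations of equation \ref{eqn:ZflipX} and the unitarity of $\hat Z^{k}$ to conclude that the spectra coincide. No gaps.
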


\begin{proof}
We use short-hand 
 $e^{\frac{2 \pi i k}{N}} = \omega^k,$  and equation \ref{eqn:h0_short} to find 
\begin{align}
\hat h(a, \frac{2\pi k}{N}, \epsilon) & =  \frac{a}{2} \hat X \omega^k + \frac{a}{2} \hat X^\dagger \omega^{-k} 
+ \frac{\epsilon}{2} \left(\hat Z + \hat Z^\dagger\right).
\end{align}
For $k \in {\mathbb Z} $, 
using the relations in equations \ref{eqn:ZflipX} 
\begin{align}
\hat Z^k \hat X \hat Z^{-k} & = \omega^k \hat X \nonumber \\
\hat Z^k \hat X^\dagger \hat  Z^{-k} & = \omega^{-k} \hat X^\dagger.  \label{eqn:ZXZk}
\end{align}
These relations help us compute   
\begin{align}
\hat Z^{-k}  \hat h\left(a,\frac{2 \pi k}{N},\epsilon\right) \hat  Z^{k}  = \hat h \left(a, 0, \epsilon\right) .
 \label{eqn:ZZ}
\end{align}
The operator $\hat h$ is Hermitian, and so it is diagonalizable. 
Since $\hat Z^{k}$ is an invertible operator (that is also unitary),  
equation \ref{eqn:ZZ} implies that the two 
operators in equation \ref{eqn:ZZ} have the same set of eigenvalues or spectrum. 
\end{proof}

\begin{corollary}
If $\ket{n}$ is an eigenstate of $\hat h(a, 0,\epsilon)$ with eigenvalue $\lambda_n$,  
then for $k \in {\mathbb Z}$, 
$Z^k \ket{n}$ is an eigenstate of $\hat h(a, \frac{2\pi k}{N}, \epsilon )$ with the same eigenvalue. 
\end{corollary}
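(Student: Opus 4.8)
The plan is to read the eigenvector statement straight off the similarity relation already established in the proof of Theorem~\ref{th:A1}, rather than re-deriving anything from the clock-and-shift expressions. The entire content of the corollary is that the unitary $\hat Z^k$ which conjugates $\hat h(a,\frac{2\pi k}{N},\epsilon)$ into $\hat h(a,0,\epsilon)$ (equation~\ref{eqn:ZZ}) does not merely preserve the spectrum but actively transports eigenvectors; the theorem recorded only the consequence for the set of eigenvalues, so here I would simply keep track of the vectors explicitly.

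First I would rearrange equation~\ref{eqn:ZZ} by multiplying on the left by $\hat Z^{k}$ and on the right by $\hat Z^{-k}$, using that $\hat Z$ is unitary so that $\hat Z^{k}\hat Z^{-k}=\hat I$. This yields the conjugation identity
\begin{align}
\hat h\!\left(a,\frac{2\pi k}{N},\epsilon\right) = \hat Z^{k}\,\hat h(a,0,\epsilon)\,\hat Z^{-k}.
\end{align}
Second, I would apply this operator to the candidate vector $\hat Z^{k}\ket{n}$ and substitute the hypothesis $\hat h(a,0,\epsilon)\ket{n}=\lambda_n\ket{n}$:
\begin{align}
\hat h\!\left(a,\frac{2\pi k}{N},\epsilon\right)\hat Z^{k}\ket{n}
&= \hat Z^{k}\,\hat h(a,0,\epsilon)\,\hat Z^{-k}\hat Z^{k}\ket{n} \nonumber\\
&= \hat Z^{k}\,\hat h(a,0,\epsilon)\ket{n} \nonumber\\
&= \lambda_n\,\hat Z^{k}\ket{n},
\end{align}
where the middle equality collapses $\hat Z^{-k}\hat Z^{k}=\hat I$ and the final equality pulls the scalar $\lambda_n$ through the linear operator $\hat Z^{k}$.

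The last step I would carry out is to confirm that $\hat Z^{k}\ket{n}$ is a genuine (nonzero) eigenvector: since $\hat Z^{k}$ is invertible and $\ket{n}\neq 0$, its image is nonzero, so $\hat Z^{k}\ket{n}$ is an eigenstate of $\hat h(a,\frac{2\pi k}{N},\epsilon)$ with eigenvalue $\lambda_n$. I do not anticipate any real obstacle; the only point that requires even a moment's care is verifying that $\hat Z^{k}$ does not annihilate $\ket{n}$, which is immediate from unitarity, and everything else is a single substitution into the conjugation identity inherited from Theorem~\ref{th:A1}.
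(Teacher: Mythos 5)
Your proof is correct and is essentially the same argument as the paper's: both rest on the conjugation identity of equation \ref{eqn:ZZ} from Theorem \ref{th:A1} and the invertibility of $\hat Z^{k}$, differing only in whether the identity is first rearranged to $\hat h(a,\tfrac{2\pi k}{N},\epsilon)=\hat Z^{k}\hat h(a,0,\epsilon)\hat Z^{-k}$ or applied directly to $\ket{n}$ and then multiplied through by $\hat Z^{k}$. Your explicit remark that $\hat Z^{k}\ket{n}\neq 0$ is a small tidiness the paper leaves implicit, but the substance is identical.
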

\begin{proof}
Using equation \ref{eqn:ZZ}
\begin{align}
\hat Z^{-k}  \hat h\left(a,\frac{2 \pi k}{N},\epsilon\right) \hat  Z^{k} \ket{n} &=  \hat h \left(a, 0, \epsilon\right) \ket{n} 
= \lambda_n \ket{n} \nonumber \\
 \hat h\left(a,\dfrac{2 \pi k}{N},\epsilon\right) \hat  Z^{k} \ket{n} &= \lambda_n \hat Z^{k} \ket{n}. 
\end{align}
This implies that  $\lambda_n$ is an eigenvalue of $\hat h(a,\frac{2 \pi k}{N},\epsilon)$  
with eigenstate $\hat Z^{k} \ket{n}$.  
\end{proof}

\begin{theorem} \label{th:h0r}
For the operator $\hat h(a,b,\epsilon)$ defined in equation \ref{eqn:h0abe}, 
  $k \in {\mathbb Z} $, and $0\le r < \frac{2 \pi}{N}$, 
\begin{align}
{\rm spectrum}\left[ \hat h\left(a,  r+  \dfrac{2 \pi k}{N},\epsilon \right)  \right] &= 
{\rm spectrum}\left[ \hat h\left(a, r,\epsilon \right)  \right].
\end{align} 
\end{theorem}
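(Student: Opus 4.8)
The plan is to reduce this to essentially the same computation that proved Theorem \ref{th:A1}, now carrying the extra parameter $r$ along for free. The guiding observation is that conjugation by a power of $\hat Z$ shifts the $b$-parameter by exactly $2\pi k/N$ while leaving everything else untouched, so it does not matter whether the starting value of $b$ is $0$ (as in Theorem \ref{th:A1}) or an arbitrary $r$.

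First I would write out $\hat h\left(a, r + \frac{2\pi k}{N}, \epsilon\right)$ using the short form of the operator in equation \ref{eqn:h0_short}, factoring the phase as $e^{i(r + 2\pi k/N)} = e^{ir}\omega^k$. This gives
\begin{align}
\hat h\left(a, r + \tfrac{2\pi k}{N}, \epsilon\right) = \tfrac{a}{2}\hat X e^{ir}\omega^{k} + \tfrac{a}{2}\hat X^\dagger e^{-ir}\omega^{-k} + \tfrac{\epsilon}{2}\left(\hat Z + \hat Z^\dagger\right).
\end{align}
Next I would conjugate by $\hat Z^{-k}$ and apply the relations in equation \ref{eqn:ZXZk} (with $k$ replaced by $-k$), namely $\hat Z^{-k}\hat X \hat Z^{k} = \omega^{-k}\hat X$ and $\hat Z^{-k}\hat X^\dagger \hat Z^{k} = \omega^{k}\hat X^\dagger$. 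The decisive point is that the potential term $\tfrac{\epsilon}{2}(\hat Z + \hat Z^\dagger)$ is built from powers of $\hat Z$ and therefore commutes with $\hat Z^{k}$, so it passes through the conjugation unchanged. The two $\hat X$-terms each pick up a compensating phase $\omega^{\mp k}$ that exactly cancels the $\omega^{\pm k}$ introduced by the shift, leaving $\tfrac{a}{2}\hat X e^{ir} + \tfrac{a}{2}\hat X^\dagger e^{-ir}$. Collecting terms yields
\begin{align}
\hat Z^{-k}\,\hat h\left(a, r + \tfrac{2\pi k}{N}, \epsilon\right)\hat Z^{k} = \hat h\left(a, r, \epsilon\right).
\end{align}

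Finally, since $\hat Z$ is unitary by equation \ref{eqn:QFTstuff} and hence $\hat Z^{k}$ is invertible, this is a similarity transformation, so the two operators share the same spectrum, establishing the claim. I do not expect a genuine obstacle here: the argument is the verbatim analogue of Theorem \ref{th:A1}, and the only place requiring care is bookkeeping of the phase factors $\omega^{\pm k}$ to confirm the cancellation is exact and that the invariant $\hat Z$-term is correctly identified as commuting with $\hat Z^{k}$. Because $r$ contributes only the inert scalar phases $e^{\pm ir}$, which are unaffected by conjugation, it plays no essential role beyond being reproduced on the right-hand side.
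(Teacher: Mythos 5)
Your proposal is correct and follows essentially the same route as the paper: conjugation by $\hat Z^{-k}$ using equation \ref{eqn:ZXZk} to cancel the $\omega^{\pm k}$ phases on the $\hat X$-terms while the $\hat Z$-terms pass through, then invoking unitarity of $\hat Z^k$ to conclude the spectra agree. Your bookkeeping of the scalar phase as $e^{ir}\omega^{k}$ is in fact slightly cleaner than the paper's, which writes the residual phase as $\omega^{r}$ rather than $e^{ir}$.
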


\begin{proof}
Equation \ref{eqn:h0_short} and equations \ref{eqn:ZXZk} give 
\begin{align} 
\hat Z^{-k}  \hat h\left(a, \dfrac{2 \pi k}{N}  + r ,\epsilon\right) \hat  Z^{k} 
& =  \frac{a}{2} \left( \hat X \omega^{r} + \hat X^\dagger \omega^{-r}\right)  + \epsilon \cos \hat \phi
\nonumber \\
& =  \hat h\left(a,r,\epsilon\right). \label{eqn:ZZ2}
\end{align}
As $Z^k$ is a unitary operator which is invertible, this implies that both operators in equation \ref{eqn:ZZ2} have the same spectrum. 
\end{proof}

\begin{corollary}
With $0 \le r < \frac{2 \pi}{N}$, 
if $\ket{n}$ is an eigenstate of $\hat h(a, r,\epsilon)$ with 
with eigenvalue $\lambda_n$,  then 
$Z^k \ket{n}$ is an eigenstate of $\hat h(a, r + \frac{2\pi k}{N}, \epsilon )$ with the same eigenvalue. 
\end{corollary}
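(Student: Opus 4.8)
The plan is to reuse the conjugation identity already established inside the proof of Theorem \ref{th:h0r}, namely equation \ref{eqn:ZZ2}, which reads $\hat Z^{-k} \hat h(a, r + \frac{2\pi k}{N}, \epsilon) \hat Z^{k} = \hat h(a, r, \epsilon)$. All of the algebraic effort of commuting $\hat Z^{k}$ past the shift and clock operators (via equations \ref{eqn:ZXZk}) has already been done there, so the corollary should fall out as a short eigenvector-level refinement of the eigenvalue-level statement proved in that theorem. This also parallels exactly the corollary to Theorem \ref{th:A1}, which is the special case $r = 0$.

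First I would take the hypothesis $\hat h(a, r, \epsilon)\ket{n} = \lambda_n \ket{n}$ and replace the operator on the left by the right-hand side of equation \ref{eqn:ZZ2}, giving $\hat Z^{-k} \hat h(a, r + \frac{2\pi k}{N}, \epsilon) \hat Z^{k} \ket{n} = \lambda_n \ket{n}$. Next I would act on the left with $\hat Z^{k}$ and cancel the leading $\hat Z^{k}\hat Z^{-k} = \hat I$ using $\hat Z^{N} = \hat I$ from equation \ref{eqn:QFTstuff} (so $\hat Z$ is unitary and invertible), which leaves $\hat h(a, r + \frac{2\pi k}{N}, \epsilon)\,(\hat Z^{k}\ket{n}) = \lambda_n\,(\hat Z^{k}\ket{n})$. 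This is precisely the assertion that $\hat Z^{k}\ket{n}$ is an eigenstate of the shifted operator with the same eigenvalue $\lambda_n$.

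The only point I would take care to verify is that $\hat Z^{k}\ket{n}$ is genuinely nonzero, so that it qualifies as an eigenstate rather than the trivial vector; this is immediate since $\hat Z^{k}$ is unitary and hence norm-preserving, whence $\lVert \hat Z^{k}\ket{n}\rVert = \lVert \ket{n}\rVert \neq 0$. I do not expect any real obstacle here: the entire mathematical content is inherited from the similarity transformation of Theorem \ref{th:h0r}, and the corollary merely transports the eigenvalue equality to the level of eigenvectors by conjugation. The argument is short enough that the main care is bookkeeping — keeping the factors of $\hat Z^{k}$ and $\hat Z^{-k}$ on the correct sides and tracking that the parameter $r + \frac{2\pi k}{N}$ matches the one appearing in equation \ref{eqn:ZZ2}.
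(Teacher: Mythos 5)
Your proposal is correct and follows exactly the route the paper intends: the paper's proof simply states that the corollary follows directly from equation \ref{eqn:ZZ2}, and your spelled-out version (apply the conjugation identity to $\ket{n}$, multiply on the left by $\hat Z^{k}$, and note that $\hat Z^{k}\ket{n}\neq 0$ since $\hat Z$ is unitary) is the same argument the paper gives explicitly for the analogous corollary to Theorem \ref{th:A1}.
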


This corollary follows directly from equation \ref{eqn:ZZ2}. 

\begin{theorem} \label{th:bmb}
For $\hat h$ defined  in equation \ref{eqn:h0abe}
\begin{align}
{\rm spectrum}\left[ \hat h(a,  b,\epsilon )  \right] &= 
{\rm spectrum}\left[ \hat h(a, -b,\epsilon )  \right]  .
\end{align}
\end{theorem}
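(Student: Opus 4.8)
The plan is to exhibit a single unitary operator that conjugates $\hat h(a,b,\epsilon)$ into $\hat h(a,-b,\epsilon)$, mirroring the strategy of Theorems \ref{th:A1} and \ref{th:h0r}, where conjugation by $\hat Z^k$ produced the desired spectral equality. The natural candidate here is the parity operator $\hat P$ of equation \ref{eqn:parity}: flipping the sign of $b$ in the short form \ref{eqn:h0_short} simply interchanges the coefficients $e^{ib}$ and $e^{-ib}$ attached to $\hat X$ and $\hat X^\dagger$, and parity is precisely the operation that swaps $\hat X$ with $\hat X^\dagger$.

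First I would record the two conjugation identities $\hat P \hat X \hat P^{-1} = \hat X^\dagger$ and $\hat P \hat Z \hat P^{-1} = \hat Z^\dagger$. These follow directly from the action on the conventional basis, where $\hat P\ket{j} = \ket{-j \ \mathrm{mod}\ N}$, $\hat X\ket{j} = \ket{(j+1)\ \mathrm{mod}\ N}$, and $\hat Z\ket{j} = \omega^j \ket{j}$: a one-line computation gives $\hat P \hat X \hat P \ket{j} = \ket{(j-1)\ \mathrm{mod}\ N} = \hat X^\dagger \ket{j}$ and $\hat P \hat Z \hat P\ket{j} = \omega^{-j}\ket{j} = \hat Z^\dagger\ket{j}$. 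Since $\hat P$ is Hermitian with $\hat P^2 = \hat I$, it is unitary and its own inverse, so taking adjoints of the first identity also yields $\hat P \hat X^\dagger \hat P^{-1} = \hat X$.

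Next I would apply these identities term by term to equation \ref{eqn:h0_short}, obtaining
\begin{align}
\hat P \hat h(a,b,\epsilon)\hat P^{-1}
&= \frac{a}{2} e^{ib}\hat X^\dagger + \frac{a}{2} e^{-ib}\hat X + \frac{\epsilon}{2}\left(\hat Z^\dagger + \hat Z\right),
\end{align}
which, upon comparison with equation \ref{eqn:h0_short} evaluated at $-b$, is exactly $\hat h(a,-b,\epsilon)$. Because $\hat P$ is unitary, $\hat h(a,b,\epsilon)$ and $\hat h(a,-b,\epsilon)$ are unitarily similar and hence share the same spectrum, which is the claim.

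I do not anticipate a genuine obstacle; the only step needing care is the bookkeeping of sign conventions, so that the $e^{\pm ib}$ factors land on the correct operators after conjugation. As an independent check one may instead work from the tridiagonal form of equation \ref{eqn:tridiag_ab} with coefficients \ref{eqn:cofs_j}: sending $b \to -b$ maps $\beta = \tfrac{a}{2}e^{ib}$ to its complex conjugate while fixing the real diagonal entries $\alpha_j$, so the matrix at $-b$ is the entrywise complex conjugate of the matrix at $b$. Since $\hat h$ is Hermitian its eigenvalues are real and therefore invariant under complex conjugation, giving the same conclusion.
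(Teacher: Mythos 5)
Your proof is correct and follows essentially the same route as the paper's: both conjugate $\hat h(a,b,\epsilon)$ by the parity operator $\hat P$, using $\hat P \hat X \hat P = \hat X^\dagger$ and $\hat P \hat Z \hat P = \hat Z^\dagger$ to show $\hat P \hat h(a,b,\epsilon)\hat P = \hat h(a,-b,\epsilon)$ and then invoke unitary similarity. The only cosmetic difference is that you work from the clock/shift form of equation \ref{eqn:h0_short} while the paper expands $\cos(\hat p - b)$ trigonometrically; your supplementary check via complex conjugation of the tridiagonal matrix is a nice independent confirmation but not needed.
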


\begin{proof}
With parity operator $\hat P$ defined in equation \ref{eqn:parity} we find that 
\begin{align}
\hat P \hat Z \hat P &= \hat Z^\dagger \nonumber \\
\hat P \hat X \hat P &= \hat X^\dagger .  \label{eqn:Par_com}
\end{align}
This implies that 
\begin{align}
\hat P \cos \hat  p\  \hat P &= \cos \hat p \nonumber  \\
\hat P \cos \hat  \phi\   \hat P&= \cos \hat  \phi \nonumber \\ 
\hat P \sin \hat  p\  \hat P &= - \sin \hat p \nonumber  \\
\hat P \sin \hat  \phi\   \hat P&=  - \sin \hat  \phi  \label{eqn:PsP}
\end{align}
As the parity operator commutes with both $\cos \hat p$ and $\cos \hat \phi$, 
when $b=0$, 
the Hamiltonian operator $\hat h$ commutes with the parity operator 
\begin{align}
\left[\hat P ,\hat h(a,0,\epsilon)\right] =  0 .  \label{eqn:Ph0_com}
\end{align}

For $b \ne 0 $ 
\begin{align}
\hat P \hat h(a,b,\epsilon)\hat P &=  a \cos \hat p \cos b - a \sin \hat p \sin b + \epsilon \hat \cos \phi
\nonumber \\
& = \hat h (a, -b, \epsilon) . \label{eqn:Pars}
\end{align}
Because the parity operator $\hat P$ is its own inverse, equation \ref{eqn:Pars} implies 
that the two operators in equation \ref{eqn:Pars} have the same spectrum. 
\end{proof}

\begin{corollary} \label{th:A6}
For $k \in {\mathbb Z}$, and $0\le r < \frac{2 \pi}{N}$ 
\begin{align}
{\rm spectrum}\left[ \hat h\left(a,  r + \frac{2 \pi}{N} ,\epsilon \right)  \right] &= 
{\rm spectrum}\left[ \hat h(a, \pm r,\epsilon )  \right]. \label{eqn:combo1}
\end{align}
\end{corollary}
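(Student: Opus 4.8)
The plan is to obtain the claim as a direct composition of the two symmetry results already established, namely Theorem~\ref{th:h0r} (periodicity of the spectrum under shifts of $b$ by multiples of $2\pi/N$) and Theorem~\ref{th:bmb} (invariance of the spectrum under $b \mapsto -b$). No new operator construction should be needed: the conjugating operators $\hat Z^{k}$ and $\hat P$ used in those proofs already supply the required isospectral similarities, and I would simply track their combined action rather than re-deriving anything.

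First I would invoke Theorem~\ref{th:h0r} with $k=1$. Since $0 \le r < \tfrac{2\pi}{N}$, this immediately gives
\begin{align}
{\rm spectrum}\left[ \hat h\left(a, r + \tfrac{2\pi}{N}, \epsilon\right)\right]
= {\rm spectrum}\left[ \hat h\left(a, r, \epsilon\right)\right],
\end{align}
which settles the ``$+r$'' half of the statement. Concretely this is the similarity $\hat Z^{-1}\hat h(a, r+\tfrac{2\pi}{N},\epsilon)\hat Z = \hat h(a,r,\epsilon)$ of equation~\ref{eqn:ZZ2}, so the two operators are conjugate and hence isospectral.

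Next I would apply Theorem~\ref{th:bmb} with $b=r$, which yields
\begin{align}
{\rm spectrum}\left[ \hat h\left(a, r, \epsilon\right)\right]
= {\rm spectrum}\left[ \hat h\left(a, -r, \epsilon\right)\right],
\end{align}
realized by the parity conjugation $\hat P \hat h(a,r,\epsilon)\hat P = \hat h(a,-r,\epsilon)$ of equation~\ref{eqn:Pars}. Chaining the two displayed equalities then shows that the spectrum of $\hat h(a, r+\tfrac{2\pi}{N},\epsilon)$ coincides simultaneously with the spectrum of $\hat h(a,r,\epsilon)$ and with that of $\hat h(a,-r,\epsilon)$, which is exactly the compact ``$\pm r$'' assertion of equation~\ref{eqn:combo1}.

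I expect no genuine obstacle here: the result is a routine transitivity argument built from two isospectral similarities, and the only care required is bookkeeping. I would make sure the restriction $0 \le r < \tfrac{2\pi}{N}$ is kept explicit, so that the shift $+\tfrac{2\pi}{N}$ is the single generator in play and $r$, $-r$ sit in adjacent fundamental domains, and I would read the ``$\pm$'' notation as the compact statement that both $\hat h(a,r,\epsilon)$ and $\hat h(a,-r,\epsilon)$ are isospectral to the left-hand operator, rather than as two independent claims demanding separate treatment.
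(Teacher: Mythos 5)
Your proposal is correct and matches the paper's own proof, which likewise obtains the result by consecutively applying Theorem~\ref{th:h0r} (the $2\pi/N$ shift) and Theorem~\ref{th:bmb} (the $b\mapsto -b$ reflection) and chaining the two isospectral similarities. The only difference is the order in which you cite the two theorems, which is immaterial.
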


\begin{proof}
A consecutive use of 
theorems \ref{th:bmb} and \ref{th:h0r} together imply that the two operators in equation \ref{eqn:combo1}
have the same spectrum. 
\end{proof}

\begin{corollary}
For $\ket{n}$ an eigenstate of $\hat h(a,b,\epsilon)$ (defined in equation \ref{eqn:h0abe}), 
the state $\hat P \ket{n}$ (where $\hat P$ is the parity operator) 
is an eigenstate of $\hat h(a,-b,\epsilon)$ with the same eigenvalue. 
\end{corollary}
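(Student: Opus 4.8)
The plan is to reuse the conjugation identity already established within the proof of Theorem~\ref{th:bmb}, namely equation~\ref{eqn:Pars}, which states $\hat P \hat h(a,b,\epsilon) \hat P = \hat h(a,-b,\epsilon)$, together with the involutive property of the parity operator. The corollary is a direct transport statement: the same similarity transformation that maps the operator $\hat h(a,b,\epsilon)$ to $\hat h(a,-b,\epsilon)$ also maps the eigenvectors of the former to eigenvectors of the latter.

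First I would record that the parity operator squares to the identity. From its definition in equation~\ref{eqn:parity}, applying the index map $n \mapsto -n \bmod N$ twice returns $n$, so $\hat P^2 = \hat I$ and $\hat P$ is its own inverse. This is the only structural fact about $\hat P$ that the argument needs beyond equation~\ref{eqn:Pars}.

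Next I would start from the eigenvalue equation $\hat h(a,b,\epsilon)\ket{n} = \lambda_n \ket{n}$ and act on the left with $\hat P$, obtaining $\hat P \hat h(a,b,\epsilon)\ket{n} = \lambda_n \hat P \ket{n}$. Inserting $\hat P \hat P = \hat I$ immediately to the right of the Hamiltonian and regrouping gives $(\hat P \hat h(a,b,\epsilon) \hat P)(\hat P \ket{n}) = \lambda_n \hat P \ket{n}$. Substituting equation~\ref{eqn:Pars} then yields $\hat h(a,-b,\epsilon)(\hat P \ket{n}) = \lambda_n (\hat P \ket{n})$, which is exactly the assertion that $\hat P \ket{n}$ is an eigenstate of $\hat h(a,-b,\epsilon)$ with eigenvalue $\lambda_n$.

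There is essentially no obstacle here: the entire content is inherited from equation~\ref{eqn:Pars}, and the only point requiring a word of justification is that $\hat P \ket{n}$ is nonzero, so that it genuinely qualifies as an eigenstate rather than the trivial zero vector. Since $\hat P$ is invertible—indeed unitary and involutive—it maps the nonzero vector $\ket{n}$ to a nonzero vector, so $\hat P \ket{n}$ is a legitimate eigenstate. I would close by noting that this corollary is the eigenvector-level counterpart of Theorem~\ref{th:bmb}, upgrading the equality of spectra to an explicit correspondence between the eigenstates at $b$ and at $-b$.
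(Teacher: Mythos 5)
Your proposal is correct and follows essentially the same route as the paper: both rest on the conjugation identity of equation~\ref{eqn:Pars} together with $\hat P^2 = \hat I$, differing only in whether the identity is applied directly or after substituting $b \to -b$. The extra remark that $\hat P \ket{n}$ is nonzero because $\hat P$ is invertible is a harmless refinement the paper leaves implicit.
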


\begin{proof}
We take $\ket{n}$ to be an eigenstate of $\hat h(a,b,\epsilon)$ with eigenvalue $\lambda_n$. 
Using equation \ref{eqn:Pars}
\begin{align*}
\hat P \hat h(a, -b, \epsilon) \hat P \ket{n} &= \hat h(a,b,\epsilon)\ket{n} = \lambda_n \ket{n} \\
\hat h(a, -b ,\epsilon) \hat P \ket{n} & = \lambda_n \hat P \ket{n}.
\end{align*}
This implies that $\hat P \ket{n}$ is an eigenstate of $\hat h(a, -b, \epsilon)$
with the same eigenvalue $\lambda_n$. 
\end{proof}

\begin{theorem} \label{th:hhalf}
For $k \in {\mathbb Z}$, remainder $0 \le \delta < \frac{\pi}{N}$ and operator $\hat h$ defined in equation \ref{eqn:h0abe}
\begin{align}
{\rm spectrum}\left[  \hat h\left(a,  \frac{(2k+1)\pi}{N} + \delta ,\epsilon\right) \right]  = \qquad \qquad \ \  \nonumber \\
\qquad  {\rm spectrum}\left[  \hat h\left(a, \frac{\pi}{N} - \delta ,\epsilon\right) \right].
\label{eqn:half}
\end{align}
\end{theorem}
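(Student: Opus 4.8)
The plan is to prove equation~\ref{eqn:half} purely by composing the two spectral symmetries already in hand, the $\tfrac{2\pi}{N}$-periodicity in $b$ of Theorem~\ref{th:h0r} and the reflection $b \mapsto -b$ of Theorem~\ref{th:bmb}, rather than by constructing a fresh conjugating operator $\hat V$. The whole identity should then reduce to arithmetic bookkeeping on the argument $b$.

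The key observation I would isolate first is that these two symmetries together generate a reflection of the spectrum about the half-period $b = \tfrac{\pi}{N}$. Concretely, Theorem~\ref{th:bmb} gives invariance of the spectrum under $b \mapsto -b$, and Theorem~\ref{th:h0r} gives invariance under $b \mapsto b + \tfrac{2\pi}{N}$; composing them yields invariance under $b \mapsto \tfrac{2\pi}{N} - b$, which is exactly reflection about $\tfrac{\pi}{N}$.

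With that in place, I would handle the left-hand side in two moves. First, writing $\tfrac{(2k+1)\pi}{N} + \delta = \tfrac{2\pi k}{N} + \big(\tfrac{\pi}{N} + \delta\big)$ and applying Theorem~\ref{th:h0r} to discard the integer number of periods $\tfrac{2\pi k}{N}$ reduces the argument to $\tfrac{\pi}{N} + \delta$. Second, applying the half-period reflection sends $\tfrac{\pi}{N} + \delta \mapsto \tfrac{2\pi}{N} - \big(\tfrac{\pi}{N} + \delta\big) = \tfrac{\pi}{N} - \delta$, which is the argument on the right-hand side of equation~\ref{eqn:half}. Since $0 \le \delta < \tfrac{\pi}{N}$, the reduced argument $\tfrac{\pi}{N} - \delta$ lies in $(0, \tfrac{\pi}{N}]$, so the chain stays within a clean fundamental range and no further reduction is needed.

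There is no serious obstacle here; the only points requiring care are bookkeeping ones. I would note explicitly that the conjugation identity underlying Theorem~\ref{th:h0r} (equation~\ref{eqn:ZZ2}) holds for an arbitrary real shift, so I may legitimately subtract the full $\tfrac{2\pi k}{N}$ even though that theorem is stated for a reduced representative, and I would verify the single arithmetic step $\tfrac{2\pi}{N} - \tfrac{\pi}{N} - \delta = \tfrac{\pi}{N} - \delta$ that closes the argument. This is essentially the same combination packaged in Corollary~\ref{th:A6}, except that the desired pair of arguments is related by reflection about $\tfrac{\pi}{N}$ rather than about $0$, which is why the explicit composition above is the most transparent route.
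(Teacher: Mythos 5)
Your proof is correct and follows essentially the same route as the paper: the paper likewise reduces $\frac{(2k+1)\pi}{N}+\delta$ to $\frac{\pi}{N}+\delta$ via Theorem~\ref{th:h0r} and then reflects $\delta \mapsto -\delta$ using the parity symmetry, the only difference being that it records the explicit unitary conjugator $\hat P \hat Z^\dagger$ (equation~\ref{eqn:PZhh}) rather than composing the two spectral statements abstractly. That explicit operator is then reused for Corollary~\ref{th:PZ_com} and the eigenstate corollary, but for the theorem itself your composition of Theorems~\ref{th:h0r} and~\ref{th:bmb} is a complete proof.
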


\begin{proof}
With a calculation similar to equation \ref{eqn:ZZ2}, we find that 
\begin{align}
\hat Z^{\dagger}  \hat h\left(a, \frac{\pi}{N} + \delta,\epsilon\right) \hat Z &= \hat h \left(a, - \frac{\pi}{N} + \delta,\epsilon\right).
\end{align}
Applying the parity operator 
 \begin{align}
\hat P \hat Z^\dagger  \hat h\left(a,\frac{\pi}{N} + \delta,\epsilon\right) \hat Z \hat P  &= \hat h \left(a,  \frac{\pi}{N} - \delta,\epsilon\right). \label{eqn:PZhh}
\end{align}
The product $\hat Z \hat P$  is invertible with inverse $\hat P\hat Z^{\dagger}$ as $\hat Z$ is unitary and $\hat P$ is its own inverse, so equation \ref{eqn:PZhh} 
 implies that the two operators have the same spectrum.   
We then apply theorem \ref{th:h0r} to show that the spectra of the two operators in equation \ref{eqn:half} 
 are equivalent.
\end{proof}

\begin{corollary} \label{th:PZ_com}
\begin{align}
\left[\hat P \hat Z^\dagger, \hat h(a,\frac{\pi}{N} ,\epsilon)\right]  = 0. \label{eqn:pzk}
\end{align} 
\end{corollary}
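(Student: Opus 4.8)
The plan is to specialize the conjugation identity already established in the proof of Theorem \ref{th:hhalf} to the case $\delta = 0$ and then recast it as a statement about a vanishing commutator. Setting $\delta = 0$ in equation \ref{eqn:PZhh} immediately yields
\begin{align}
\hat P \hat Z^\dagger \, \hat h\left(a, \frac{\pi}{N}, \epsilon\right) \hat Z \hat P = \hat h\left(a, \frac{\pi}{N}, \epsilon\right),
\end{align}
so the operator $\hat h(a, \pi/N, \epsilon)$ is left invariant under conjugation by $\hat P \hat Z^\dagger$. This is the only input needed, and it costs nothing beyond what was already done for the theorem.

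The next step is to identify $\hat Z \hat P$ as the two-sided inverse of $\hat P \hat Z^\dagger$. Using $\hat Z^\dagger \hat Z = \hat I$ and $\hat P^2 = \hat I$ (the parity operator is its own inverse), one checks that $(\hat P \hat Z^\dagger)(\hat Z \hat P) = \hat P \hat P = \hat I$ and likewise $(\hat Z \hat P)(\hat P \hat Z^\dagger) = \hat Z \hat Z^\dagger = \hat I$. Writing $\hat W = \hat P \hat Z^\dagger$, the displayed identity therefore reads $\hat W \hat h(a,\pi/N,\epsilon) \hat W^{-1} = \hat h(a,\pi/N,\epsilon)$, and multiplying on the right by $\hat W$ converts this invariance into $\hat W \hat h(a,\pi/N,\epsilon) = \hat h(a,\pi/N,\epsilon) \hat W$, which is exactly the commutator relation \ref{eqn:pzk}.

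There is essentially no obstacle here beyond bookkeeping: the real content was carried out in Theorem \ref{th:hhalf}, where the half-shift of $b$ by $\pi/N$ was matched against a reflection implemented by $\hat P$. The only point requiring a moment of care is confirming that $\hat P \hat Z^\dagger$ is genuinely invertible, so that being fixed under conjugation is equivalent to commuting; this follows from unitarity of $\hat Z$ and the self-inverse property of $\hat P$ as above. One could alternatively establish the corollary directly from the parity commutation rules $\hat P \hat Z \hat P = \hat Z^\dagger$, $\hat P \hat X \hat P = \hat X^\dagger$ (equation \ref{eqn:Par_com}) together with equations \ref{eqn:ZflipX} and the explicit form \ref{eqn:h0_short} of $\hat h$, but specializing the already-proven identity is by far the shortest route and is the one I would present.
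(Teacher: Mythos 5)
Your proof is correct and follows essentially the same route as the paper, which likewise obtains the corollary by setting $\delta=0$ in equation \ref{eqn:PZhh}; you have simply made explicit the bookkeeping step that $\hat Z \hat P$ inverts $\hat P \hat Z^\dagger$ so that invariance under conjugation becomes a vanishing commutator.
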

\begin{proof}
Equation \ref{eqn:pzk} follows directly from equation \ref{eqn:PZhh} using $\delta=0$. 

\end{proof}

\begin{corollary}
If $\ket{n}$ is an eigenstate of  $\hat h(a,  \frac{\pi}{N} + \delta ,\epsilon)$ 
with eigenvalue $\lambda_n$, then  $ \hat P \hat Z^\dagger \ket{n}$ 
is an eigenstate of  $\hat h(a,  \frac{\pi}{N} - \delta ,\epsilon)$ 
with the same eigenvalue. 
\end{corollary}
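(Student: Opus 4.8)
The plan is to read the result off directly from the conjugation identity in equation \ref{eqn:PZhh}, in exact parallel with the eigenstate corollaries that accompanied theorems \ref{th:h0r} and \ref{th:bmb}. The substance of \ref{eqn:PZhh} is that similarity transformation by the operator $\hat Z \hat P$ carries $\hat h(a, \frac{\pi}{N} + \delta, \epsilon)$ to $\hat h(a, \frac{\pi}{N} - \delta, \epsilon)$; the corollary merely tracks where an eigenvector is sent under this transformation, so no new analytic input is needed beyond \ref{eqn:PZhh} itself.

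Concretely, I would start from the hypothesis $\hat h(a, \frac{\pi}{N} + \delta, \epsilon) \ket{n} = \lambda_n \ket{n}$, then rewrite \ref{eqn:PZhh} in the solved form $\hat h(a, \frac{\pi}{N} - \delta, \epsilon) = \hat P \hat Z^\dagger \, \hat h(a, \frac{\pi}{N} + \delta, \epsilon) \, \hat Z \hat P$ and apply both sides to the candidate vector $\hat P \hat Z^\dagger \ket{n}$. The right-hand side then contains the product $\hat Z \hat P \, \hat P \hat Z^\dagger$, which I would collapse to the identity, leaving $\hat P \hat Z^\dagger \, \hat h(a, \frac{\pi}{N} + \delta, \epsilon) \ket{n} = \lambda_n \, \hat P \hat Z^\dagger \ket{n}$. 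That is precisely the assertion that $\hat P \hat Z^\dagger \ket{n}$ is an eigenstate of $\hat h(a, \frac{\pi}{N} - \delta, \epsilon)$ with the same eigenvalue $\lambda_n$.

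The single point requiring care, and the only step I would state explicitly, is the cancellation: one must know that $\hat P \hat Z^\dagger$ and $\hat Z \hat P$ are mutually inverse. This is already available from the proof of theorem \ref{th:hhalf}, where it is observed that $\hat P$ is its own inverse and $\hat Z$ is unitary, so that $(\hat Z \hat P)^{-1} = \hat P \hat Z^\dagger$. There is no genuine obstacle here; the only thing to get right is the operator ordering and invoking the inverse relation in the correct direction so that the middle factors telescope cleanly.
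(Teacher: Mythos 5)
Your proposal is correct and follows essentially the same route as the paper: both rest entirely on equation \ref{eqn:PZhh} together with the fact that $\hat Z\hat P$ and $\hat P\hat Z^\dagger$ are mutual inverses; the paper merely inverts the identity first and then cancels on the left, while you apply it directly to the candidate vector and cancel in the middle, which is the same algebra in a different order. No gap.
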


\begin{proof}
Using equation \ref{eqn:PZhh} 
\begin{align}
\hat Z \hat P    \hat h\left(a,\frac{\pi}{N} - \delta,\epsilon\right)\hat P  \hat Z^\dagger   &= \hat h \left(a,  \frac{\pi}{N} + \delta,\epsilon\right) \nonumber \\
\hat Z\hat P    \hat h\left(a,\frac{\pi}{N} - \delta,\epsilon\right) \hat P\hat Z^\dagger  \ket{n} 
& = \lambda_n \ket{n} \nonumber \\
 \hat h\left(a,\frac{\pi}{N} - \delta,\epsilon\right) \hat P\hat Z^\dagger \ket{n} & = \lambda_n 
 \hat P\hat Z^\dagger \ket{n}.
\end{align}
This  shows that $\hat P\hat Z^\dagger \ket{n}$ is an eigenstate of $ \hat h\left(a,\frac{\pi}{N} - \delta,\epsilon\right)$
with eigenvalue $\lambda_n$. 
\end{proof}

\begin{theorem} \label{th:sym1}  
for $k \in \mathbb{Z}$, 
\begin{align}
\left[ \hat P\hat Z^{-2k}, \hat h\left(a, \frac{2 \pi k}{N}, \epsilon\right) \right] = 0 . \label{eqn:Pz2k}
\end{align}
\end{theorem}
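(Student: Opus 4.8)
The plan is to recognize both operators appearing in the commutator as conjugates, by the unitary clock power $\hat Z^{k}$, of simpler operators that are already known to commute, and then to invoke the fact that conjugation by a unitary is an algebra automorphism and therefore preserves commutators.

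First I would recall from the proof of theorem \ref{th:A1} (equation \ref{eqn:ZZ}) that $\hat h(a,\tfrac{2\pi k}{N},\epsilon) = \hat Z^{k}\,\hat h(a,0,\epsilon)\,\hat Z^{-k}$, so the shifted Hamiltonian is the $\hat Z^{k}$-conjugate of the $b=0$ Hamiltonian. Next I would establish the companion fact that the reflection factor is likewise a $\hat Z^{k}$-conjugate, namely $\hat P\hat Z^{-2k} = \hat Z^{k}\,\hat P\,\hat Z^{-k}$. This follows from the parity--clock relation $\hat P\hat Z\hat P = \hat Z^{\dagger}$ of equation \ref{eqn:Par_com}: since $\hat Z$ is unitary with $\hat Z^{\dagger}=\hat Z^{-1}$ and $\hat P=\hat P^{-1}$, one gets $\hat Z^{m}\hat P = \hat P\hat Z^{-m}$ for every integer $m$, and applying this with $m=k$ to the rightmost factor yields $\hat Z^{k}\hat P\hat Z^{-k} = \hat Z^{k}(\hat Z^{k}\hat P) = \hat Z^{2k}\hat P = \hat P\hat Z^{-2k}$.

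With these two rewritings in hand, the commutator becomes the conjugate of a commutator that is already known to vanish:
\begin{align}
\left[\hat P\hat Z^{-2k},\,\hat h\!\left(a,\tfrac{2\pi k}{N},\epsilon\right)\right]
&= \left[\hat Z^{k}\hat P\hat Z^{-k},\,\hat Z^{k}\hat h(a,0,\epsilon)\hat Z^{-k}\right] \nonumber \\
&= \hat Z^{k}\left[\hat P,\,\hat h(a,0,\epsilon)\right]\hat Z^{-k}.
\end{align}
The inner commutator vanishes by equation \ref{eqn:Ph0_com}, which records that the parity operator commutes with the Hamiltonian at $b=0$, so the entire expression is zero, which is the asserted identity.

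The argument is short, and the only step requiring genuine care is the index bookkeeping in the parity--clock identity $\hat Z^{k}\hat P\hat Z^{-k}=\hat P\hat Z^{-2k}$: a sign slip in the exponent there would wrongly pair the reflection factor with the shifted Hamiltonian, so I expect that to be the main (if modest) obstacle. Everything else reduces to the elementary observation that conjugation by the unitary $\hat Z^{k}$ sends $[\hat P,\hat h(a,0,\epsilon)]=0$ to the conjugated commutator, preserving its vanishing.
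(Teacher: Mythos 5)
Your proof is correct and follows essentially the same route as the paper's: both reduce the claim to $[\hat P,\hat h(a,0,\epsilon)]=0$ via conjugation by $\hat Z^{k}$ (equation \ref{eqn:ZZ}) together with the relation $\hat P\hat Z=\hat Z^{\dagger}\hat P$ from equation \ref{eqn:Par_com}. Your packaging of the final step as ``conjugation by a unitary preserves commutators'' is a slightly cleaner way to organize the same manipulations the paper carries out explicitly.
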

\begin{proof}
Equation \ref{eqn:ZZ} gives 
\begin{align}
\hat Z^{-k} h\left( a,\frac{2 \pi k}{N}, \epsilon\right) \hat Z^k = \hat h(a,0,\epsilon) .
\end{align}
We use the fact that parity $\hat P$ commutes with $\hat h(a,0,\epsilon)$ (equation \ref{eqn:Ph0_com}) 
giving
\begin{align}
\hat P \hat Z^{-k} h\left( a,\frac{2 \pi k}{N}, \epsilon\right) \hat Z^k \hat P& = \hat h(a,0,\epsilon) \nonumber \\
& =  \hat Z^{-k} \hat h\left( a,\frac{2 \pi k}{N}, \epsilon\right)  \hat Z^k \end{align}
and \begin{align}
\hat Z^{k} \hat P Z^{-k} h\left( a,\frac{2 \pi k}{N}, \epsilon\right) \hat Z^k \hat P \hat Z^{-k} &=
\hat h\left( a,\frac{2 \pi k}{N}, \epsilon\right) .
\end{align}
We use the fact that $\hat P \hat Z = \hat Z^\dagger\hat P  $ (from equation \ref{eqn:Par_com}) to 
give 
\begin{align}
\hat P  Z^{-2k} h\left( a,\frac{2 \pi k}{N}, \epsilon\right) \hat Z^{2k} \hat P  &=
\hat h\left( a,\frac{2 \pi k}{N}, \epsilon\right) 
\end{align}
which gives the commutator of equation \ref{eqn:Pz2k}.
\end{proof}

This symmetry is related to the close approaches between eigenvalues at $b$ a multiple of $2 \pi/N$. 

\begin{theorem} \label{th:sym2}  
For $k \in \mathbb{Z}$, 
\begin{align}
\left[ \hat P\hat Z^{-(2k+1)}, \hat h\left(a, \frac{ \pi (2k+1)}{N}, \epsilon\right) \right] = 0 . \label{eqn:Pz1k}
\end{align}
\end{theorem}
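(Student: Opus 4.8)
The plan is to obtain Theorem \ref{th:sym2} by transporting the half-shift symmetry already recorded in Corollary \ref{th:PZ_com} from $b=\pi/N$ out to $b=(2k+1)\pi/N$, exactly mirroring how Theorem \ref{th:sym1} transports the $b=0$ parity symmetry out to $b=2\pi k/N$. The one structural change is that the base symmetry at $b=0$ (equation \ref{eqn:Ph0_com}) is replaced by the base symmetry at $b=\pi/N$, namely $[\hat P\hat Z^\dagger,\hat h(a,\pi/N,\epsilon)]=0$.

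First I would split the shift as $(2k+1)\pi/N = 2\pi k/N + \pi/N$ and read equation \ref{eqn:ZZ2} (from Theorem \ref{th:h0r}) with remainder $r=\pi/N$ in the form
\begin{align}
\hat h\!\left(a,\frac{(2k+1)\pi}{N},\epsilon\right) = \hat Z^{k}\,\hat h\!\left(a,\frac{\pi}{N},\epsilon\right)\hat Z^{-k}.
\end{align}
Thus $\hat Z^{k}$ is an intertwiner carrying $\hat h(a,\pi/N,\epsilon)$ to $\hat h(a,(2k+1)\pi/N,\epsilon)$.

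The key observation I would then use is purely algebraic: if an operator $\hat V_0$ commutes with $\hat h(a,\pi/N,\epsilon)$ and $\hat U$ conjugates that Hamiltonian to the one of interest, then $\hat U\hat V_0\hat U^{-1}$ commutes with the conjugated Hamiltonian. Taking $\hat U=\hat Z^{k}$ and $\hat V_0=\hat P\hat Z^\dagger$ (which commutes with $\hat h(a,\pi/N,\epsilon)$ by Corollary \ref{th:PZ_com}), this immediately gives
\begin{align}
\left[\hat Z^{k}\,\hat P\hat Z^\dagger\,\hat Z^{-k},\ \hat h\!\left(a,\frac{(2k+1)\pi}{N},\epsilon\right)\right] = 0.
\end{align}
It then remains to simplify the symmetry operator. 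Using $\hat P\hat Z=\hat Z^\dagger\hat P$ from equation \ref{eqn:Par_com}, hence $\hat Z^{k}\hat P=\hat P\hat Z^{-k}$, I would collapse
\begin{align}
\hat Z^{k}\,\hat P\hat Z^\dagger\,\hat Z^{-k} = \hat Z^{k}\hat P\,\hat Z^{-(k+1)} = \hat P\,\hat Z^{-(2k+1)},
\end{align}
which is exactly the operator appearing in equation \ref{eqn:Pz1k}.

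I expect no genuine obstacle here: the analytic content is entirely carried by the base case of Corollary \ref{th:PZ_com} and the conjugation relation \ref{eqn:ZZ2}, both already proved. The only place demanding care is the sign bookkeeping of the $\hat Z$-exponents as they are pushed through the parity operator, and a check that the conjugation direction ($\hat Z^{k}\cdots\hat Z^{-k}$ versus $\hat Z^{-k}\cdots\hat Z^{k}$) is consistent with the way the intertwiner appears in equation \ref{eqn:ZZ2}; getting either of these backwards would produce $\hat P\hat Z^{+(2k+1)}$ or a spurious sign and break the identity.
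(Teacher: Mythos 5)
Your proof is correct and is essentially the paper's own argument: both conjugate the base symmetry $[\hat P\hat Z^\dagger,\hat h(a,\pi/N,\epsilon)]=0$ of Corollary \ref{th:PZ_com} by $\hat Z^{k}$ using equation \ref{eqn:ZZ2} with $r=\pi/N$, and then collapse $\hat Z^{k}\hat P\hat Z^{-(k+1)}$ to $\hat P\hat Z^{-(2k+1)}$ via $\hat P\hat Z=\hat Z^{\dagger}\hat P$. Your packaging of the key step as ``conjugation transports a commuting operator'' is simply a cleaner statement of the paper's explicit multiplication by factors of $\hat Z$ on both sides.
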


\begin{proof}
Using equation \ref{eqn:ZZ2} with $r=\frac{\pi}{N}$  and with $k \in \mathbb{Z}$ 
\begin{align}
\hat Z^{-k} \hat h\left( a,\frac{ \pi(2 k+1)}{N}, \epsilon\right) \hat Z^{k} & = \hat h\left(a, \frac{\pi}{N}, \epsilon\right) .
\end{align}
Using Equation \ref{eqn:pzk}
\begin{align}
\hat P \hat Z^{-(k+1)} \hat  h \Big( a, & \frac{ \pi(2 k+1)}{N}, \epsilon\Big) \hat Z^{k+1} \hat P   = 
\hat h\left(a, \frac{\pi}{N}, \epsilon \right) \nonumber \\
& = \hat Z^{-k} \hat  h\left( a,\frac{ \pi(2 k+1)}{N}, \epsilon \right) \hat Z^{k} . 
\end{align}
We multiply both sizes by factors of $\hat Z$, 
\begin{align}
Z^k \hat P \hat Z^{-(k\!+\!1)}  \hat  h\left( a,\!\frac{ \pi(2 k\!+\!1)}{N}, \epsilon \right) & \hat Z^{k\!+\!1} \hat P \hat Z^{-k}  \nonumber \\
&\!=  \hat  h\left( a,\!\frac{ \pi(2 k\!+\!1)}{N}, \epsilon \right)\nonumber .
\end{align}
We commute the parity operator 
\begin{align}
\hat P \hat Z^{-(2k+1)}  \hat  h\left( a,\!\frac{ \pi(2 k\!+\!1)}{N}, \epsilon\right)  \hat Z^{2k\!+\!1} \hat P& = 
\hat  h\left( a,\frac{ \pi(2 k\!+\!1)}{N}, \epsilon\right)
\end{align} 
and this gives the commutator in equation \ref{eqn:Pz1k}.
\end{proof}

This symmetry is related to the close approaches between eigenvalues at $b$ at odd multiples of $\pi/N$. 

\begin{theorem} \label{th:Q}
The spectrum of $\hat h(a,0,\epsilon)$ is the same as the spectrum of 
$\hat h(\epsilon,0,a)$.  
\end{theorem}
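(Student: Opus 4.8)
The plan is to exhibit a single unitary conjugation that interchanges the two cosine terms, namely the discrete Fourier transform $\hat Q_{FT}$, and to read off equality of spectra from similarity. First I would rewrite both operators entirely in terms of clock and shift operators using $\cos\hat\phi = \frac{1}{2}(\hat Z + \hat Z^\dagger)$ and $\cos\hat p = \frac{1}{2}(\hat X + \hat X^\dagger)$, so that $\hat h(a,0,\epsilon) = \frac{a}{2}(\hat X + \hat X^\dagger) + \frac{\epsilon}{2}(\hat Z + \hat Z^\dagger)$ while $\hat h(\epsilon,0,a) = \frac{\epsilon}{2}(\hat X + \hat X^\dagger) + \frac{a}{2}(\hat Z + \hat Z^\dagger)$. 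The goal then reduces to showing that conjugation by $\hat Q_{FT}$ swaps $\cos\hat p$ and $\cos\hat\phi$.

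Next I would extract the full action of $\hat Q_{FT}$ on the four operators $\hat X,\hat X^\dagger,\hat Z,\hat Z^\dagger$ from the relations recorded in equation \ref{eqn:QFTstuff}. The relation $\hat Z = \hat Q_{FT}\hat X\hat Q_{FT}^\dagger$ gives directly $\hat X \to \hat Z$, and taking the adjoint gives $\hat X^\dagger \to \hat Z^\dagger$. For the reverse direction I would start from the companion relation $\hat Z^\dagger = \hat Q_{FT}^\dagger\hat X\hat Q_{FT}$; conjugating both sides and using unitarity $\hat Q_{FT}\hat Q_{FT}^\dagger = \hat I$ yields $\hat Q_{FT}\hat Z^\dagger\hat Q_{FT}^\dagger = \hat X$, whose adjoint is $\hat Q_{FT}\hat Z\hat Q_{FT}^\dagger = \hat X^\dagger$.

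With these four conjugation rules in hand, I would compute term by term that $\hat Q_{FT}\cos\hat p\,\hat Q_{FT}^\dagger = \frac{1}{2}(\hat Z + \hat Z^\dagger) = \cos\hat\phi$ and $\hat Q_{FT}\cos\hat\phi\,\hat Q_{FT}^\dagger = \frac{1}{2}(\hat X^\dagger + \hat X) = \cos\hat p$. The key point is that each cosine is a symmetric sum of an operator and its adjoint, so the fact that $\hat Z$ maps to $\hat X^\dagger$ rather than to $\hat X$ is harmless: the symmetrization absorbs the dagger. Consequently $\hat Q_{FT}\hat h(a,0,\epsilon)\hat Q_{FT}^\dagger = a\cos\hat\phi + \epsilon\cos\hat p = \hat h(\epsilon,0,a)$, and since $\hat Q_{FT}$ is unitary and therefore invertible, the two operators are similar and share the same spectrum.

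I do not anticipate a genuine obstacle, since every conjugation identity needed is already supplied in equation \ref{eqn:QFTstuff}. The only mild subtlety is the bookkeeping of adjoints in establishing $\hat Q_{FT}\hat Z\hat Q_{FT}^\dagger = \hat X^\dagger$ (as opposed to $\hat X$), but the evenness of the cosine then renders the distinction irrelevant, which is precisely what makes the $b=0$ hypothesis essential to the argument.
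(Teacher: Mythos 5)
Your proposal is correct and follows exactly the paper's route: conjugation by the discrete Fourier transform $\hat Q_{FT}$, using the relations of equation \ref{eqn:QFTstuff} to swap $\cos\hat p$ and $\cos\hat\phi$, with equality of spectra following from unitary similarity. You simply spell out the adjoint bookkeeping that the paper leaves implicit.
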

\begin{proof}
We apply the discrete Fourier transform (equation \ref{eqn:QFTstuff}), giving
\begin{align}
\hat Q_{FT} \hat h(a,0,\epsilon)\hat Q_{FT}^{-1} = \hat h(\epsilon,0,a) . 
\end{align}
This implies that the spectrum of $\hat h(a,0,\epsilon)$ is the same as that of 
 $\hat h(\epsilon,0,a)$. 
 \end{proof}

\subsection{The derivatives of the eigenvalues with respect to drift parameter $b$}
\label{ap:derivs}

We consider how the eigenvalues of $\hat h(a,b,\epsilon)$ (defined in equation \ref{eqn:h0abe})
 vary when $b$  varies.   
We take $\ket{n(b)}$ to be an eigenstate that satisfies 
$\hat h(a,b,\epsilon)\ket{n(b)} = \lambda_n(b) \ket{n(b)}$ 
with eigenvalue $\lambda_n(b)$. 

\begin{theorem}
For $k \in {\mathbb Z}$, each 
eigenvalue $\lambda_n(b) $ of $\hat h(a,b,\epsilon)$ (defined in equation \ref{eqn:h0abe}) that remains distinct from other  eigenvalues obeys 
\begin{align}
\frac{\partial}{\partial b} \lambda_n(b) \Bigg|_{b = \frac{k \pi}{N}} = 0 . 
\end{align}
\end{theorem}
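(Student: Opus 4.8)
The plan is to combine the Hellmann--Feynman theorem with a discrete symmetry that $\hat h$ acquires precisely at $b=k\pi/N$. Because $\lambda_n(b)$ is assumed to stay distinct from every other eigenvalue, the normalized eigenstate $\ket{n(b)}$ is unique up to phase and varies smoothly with $b$, so the Hellmann--Feynman theorem yields $\partial_b\lambda_n(b)=\bra{n(b)}(\partial_b\hat h)\ket{n(b)}$. Differentiating the clock--shift form (equation \ref{eqn:h0_short}) gives
\begin{align}
\partial_b\hat h=\frac{ia}{2}\left(\hat X e^{ib}-\hat X^\dagger e^{-ib}\right)=a\sin(\hat p-b),
\end{align}
a Hermitian operator, so the whole statement reduces to showing that its diagonal expectation in $\ket{n}$ vanishes when $b=k\pi/N$.

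The crucial point is that at $b=k\pi/N$ the operator $\hat S\equiv\hat P\hat Z^{-k}$ commutes with $\hat h(a,k\pi/N,\epsilon)$; this is theorem \ref{th:sym1} when $k$ is even and theorem \ref{th:sym2} when $k$ is odd. Using $\hat P\hat Z=\hat Z^\dagger\hat P$ (equation \ref{eqn:Par_com}) one checks that $\hat S$ is a Hermitian involution, $\hat S^\dagger=\hat S$ and $\hat S^2=\hat I$, so its eigenvalues are $\pm1$. I would then compute the conjugation of the derivative operator: from $\hat Z^{k}\hat X\hat Z^{-k}=\omega^k\hat X$ (equation \ref{eqn:ZXZk}) together with $\hat P\hat X\hat P=\hat X^\dagger$ (equation \ref{eqn:Par_com}) one gets $\hat S\hat X\hat S=\omega^{-k}\hat X^\dagger$ and $\hat S\hat X^\dagger\hat S=\omega^k\hat X$. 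Substituting these and using $\omega^{k}=e^{2\pi ik/N}=e^{2ib}$ at $b=k\pi/N$, the phases cancel and
\begin{align}
\hat S\,(\partial_b\hat h)\,\hat S=-\,\partial_b\hat h,\qquad b=\tfrac{k\pi}{N},
\end{align}
so that $\hat h$ is \emph{even} under $\hat S$ at this point while its $b$-derivative is \emph{odd}.

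To finish I would use non-degeneracy once more: since $\hat S$ commutes with $\hat h$ and $\lambda_n$ is simple, $\hat S\ket{n}$ is again an eigenstate with eigenvalue $\lambda_n$, forcing $\hat S\ket{n}=s\ket{n}$ with $s=\pm1$. Combining the two facts,
\begin{align}
\bra{n}(\partial_b\hat h)\ket{n}=\bra{n}\hat S\,(\partial_b\hat h)\,\hat S\ket{n}=-\bra{n}(\partial_b\hat h)\ket{n},
\end{align}
where the first equality uses $\hat S\ket{n}=s\ket{n}$ with $s^2=1$ and the second uses the oddness established above; hence the expectation value, and therefore $\partial_b\lambda_n$, vanishes. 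I expect the main obstacle to be bookkeeping the symmetry argument rather than any single calculation: the hypothesis of distinctness must be invoked twice --- first to guarantee that $\lambda_n(b)$ and $\ket{n(b)}$ are smooth so that Hellmann--Feynman applies, and again to pin $\ket{n}$ into a definite $\hat S$-parity sector. Were $\lambda_n$ a member of a (near-)degenerate pair, $\ket{n}$ need not be an $\hat S$-eigenstate and the cancellation would fail, which is consistent with the sensitive avoided-crossing behaviour occurring precisely among the degenerate circulating-region pairs.
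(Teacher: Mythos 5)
Your proposal is correct and follows essentially the same route as the paper's proof: the Hellmann--Feynman theorem reduces the claim to showing $\bra{n}a\sin(\hat p-b)\ket{n}=0$, and the vanishing follows because a parity-type operator commuting with $\hat h$ at $b=k\pi/N$ (theorems \ref{th:sym1}, \ref{th:sym2}) anticommutes with $\partial_b\hat h$, while simplicity of $\lambda_n$ pins $\ket{n}$ to a definite symmetry sector. The only difference is organizational --- you treat all $k$ uniformly via the Hermitian involution $\hat P\hat Z^{-k}$, whereas the paper proves the base cases $b=0$ and $b=\pi/N$ separately and extends to general $k$ by the spectral periodicity of theorem \ref{th:h0r}; your unified version is, if anything, slightly cleaner.
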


\begin{proof}

Using equation \ref{eqn:h0_short} 
\begin{align}
\frac{\partial}{ \partial b} \hat h(a,b,\epsilon) & = a \sin (\hat p - b) \\
\frac{\partial}{ \partial b} h(a,b,\epsilon)  \Bigg|_{b=0} & =   a \sin \hat p . \label{eqn:db0}
\end{align}
We apply the Hellman-Feynman theorem (e.g., \cite{Squillante_2023}) 
which gives the expression 
\begin{align}
 \bra{n(b)} \frac{ \partial}{\partial b}  \hat h(a,b,\epsilon) \ket{n(b)} &= \frac{\partial}{\partial b} \lambda_n(b).
 \label{eqn:HF} 
\end{align}
At $b=0$ 
\begin{align}
\bra{n(0)} \frac{ \partial}{\partial b}  \hat h(a,b,\epsilon) \Bigg|_{b=0} \ket{n(0)} &=
\frac{\partial}{\partial b} \lambda_n(b) \Bigg|_{b=0}.
\end{align}
Using equation \ref{eqn:db0} 
\begin{align} 
\bra{n(0) } a \sin \hat p \ket{n(0)} = \frac{\partial}{\partial b} \lambda_n(b) \Bigg|_{b=0} . \label{eqn:vertb}
\end{align}
We previously showed (equation \ref{eqn:Ph0_com}) that for $b=0$, $\hat h$ commutes with the 
parity operator $\hat P$.   
Hence  an eigenstate with a distinct eigenvalue must also be an eigenstate of the parity operator
which has eigenvalues of $\pm 1$.  That implies that 
\begin{align}
\bra{n(0) } \hat P a \sin \hat p  \hat P \ket{n(0)}   = \bra{n(0) }  a \sin \hat p   \ket{n(0)} .
\end{align} 
However we also showed previously 
that $\hat P \sin \hat p \hat P = - \sin \hat p$ (equation \ref{eqn:PsP}). 
This implies that 
\begin{align}
\bra{n(0) } \hat P a \sin \hat p  \hat P \ket{n(0)}  = - \bra{n(0) }  a \sin \hat p   \ket{n(0)}
\end{align}
Together these imply that 
\begin{align}
\bra{n(0) }  a \sin \hat p   \ket{n(0)}  = - \bra{n(0) }  a \sin \hat p   \ket{n(0)} = 0 . 
\end{align}
Hence equation \ref{eqn:vertb} gives 
\begin{align}
\frac{\partial}{\partial b} \lambda_n(b) \Bigg|_{b=0} =0. 
\end{align}
We extend this to $b$ a multiple of $2 \pi/N$ using theorem \ref{th:h0r}. 

It is convenient to compute 
\begin{align}
\hat Z \sin (\hat p-b) \hat Z^\dagger & = \sin ( \hat p + \frac{2\pi}{N} - b)  \label{eqn:Zsin}.
\end{align}

We find that $\hat P \hat Z^\dagger $ commutes with $\hat h(a,\frac{\pi}{N}, \epsilon)$
which means $\hat P \hat Z^\dagger $ and $\hat h(a,\frac{\pi}{N}, \epsilon)$ are simultaneously diagonalizable. 
Because $\hat P \hat Z^\dagger  $ is a unitary operator, its eigenvalues are complex
numbers with magnitude 1 so they cancel in the following expression 
\begin{align}
\bra{n} \hat Z \hat P \sin (\hat p + \frac{\pi}{N}) \hat P \hat Z^\dagger  \ket{n}& = \bra{n}\sin (\hat p + \frac{\pi}{N})  \ket{n}. \label{eqn:bbb1}
\end{align}
We apply the parity operator to the sine and equation \ref{eqn:Zsin} on the right hand side,  
\begin{align}
 -  \bra{n} \hat Z \sin (\hat p - \frac{\pi}{N}) \hat Z^\dagger   \ket{n} 
& = - \bra{n} \sin (\hat p + \frac{\pi}{N}) \ket{n} .  \label{eqn:bbb2}
\end{align}
Together  equations \ref{eqn:bbb1} and  \ref{eqn:bbb2} imply that 
\begin{align}
\bra{n} \sin (\hat p + \frac{\pi}{N}) \ket{n} = 0 . 
\end{align}
Hence the Feynman Hellman theorem (equation \ref{eqn:HF}) gives 
\begin{align}
\frac{\partial}{\partial b} \lambda_n(b) \Bigg|_{b= \frac{\pi}{N}} =0. 
\end{align}
We extend this relation to $b$ equal to odd multiples of $ \pi/N$ using theorem \ref{th:h0r}. 
\end{proof}

\subsection{Even dimension $N$} \label{ap:even}

\begin{theorem} \label{th:even}
If dimension $N$ is even, then for each eigenvalue $\lambda_n$ of $\hat h(a,b,\epsilon)$ (defined in equation \ref{eqn:h0abe}), 
$- \lambda_n$ is also an eigenvalue of $\hat h(a,b,\epsilon)$.  
\end{theorem}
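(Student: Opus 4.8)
The plan is to exhibit an invertible operator $\hat V$ that \emph{anticommutes} with the Hamiltonian, meaning $\hat V\,\hat h(a,b,\epsilon)\,\hat V^{-1} = -\,\hat h(a,b,\epsilon)$. Once such a $\hat V$ is available the statement is immediate: if $\hat h\ket{n}=\lambda_n\ket{n}$, then
\[
\hat h\,(\hat V\ket{n}) = -\,\hat V\,\hat h\ket{n} = -\lambda_n\,(\hat V\ket{n}),
\]
and since $\hat V$ is invertible the vector $\hat V\ket{n}$ is nonzero, so $-\lambda_n$ lies in the spectrum. Thus the whole problem reduces to constructing the right intertwiner.

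The heart of the argument is the choice $\hat V = \hat Z^{N/2}\hat X^{N/2}$, which is well defined exactly because $N$ is even. I would assemble it from two elementary sign-flips built on the Weyl relation $\hat Z\hat X = \omega\hat X\hat Z$ of equation \ref{eqn:ZflipX} together with the key identity $\omega^{N/2}=e^{i\pi}=-1$. Conjugation by $\hat Z^{N/2}$ sends $\hat X\mapsto\omega^{N/2}\hat X=-\hat X$ and $\hat X^\dagger\mapsto-\hat X^\dagger$ while fixing $\hat Z,\hat Z^\dagger$; dually, conjugation by $\hat X^{N/2}$ sends $\hat Z\mapsto-\hat Z$ and $\hat Z^\dagger\mapsto-\hat Z^\dagger$ while fixing $\hat X,\hat X^\dagger$. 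Because $\hat Z,\hat X$ are unitary, $\hat V$ is unitary (hence invertible), and taking adjoints of $\hat V\hat X\hat V^{-1}=-\hat X$ and $\hat V\hat Z\hat V^{-1}=-\hat Z$ handles the daggered operators automatically. Composing the two conjugations, $\hat V$ flips the sign of all four operators simultaneously, so applying it to the clock–shift form of the Hamiltonian in equation \ref{eqn:h0_short} gives
\[
\hat V\,\hat h\,\hat V^{-1} = -\frac{a}{2}\hat X e^{ib} - \frac{a}{2}\hat X^\dagger e^{-ib} - \frac{\epsilon}{2}\big(\hat Z+\hat Z^\dagger\big) = -\,\hat h(a,b,\epsilon),
\]
which is the desired relation.

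A useful sanity check, which also motivates the construction, is the tridiagonal picture of equation \ref{eqn:tridiag_ab}. In the conventional basis $\hat Z^{N/2}=\mathrm{diag}\big((-1)^j\big)$, whose conjugation flips the sign of the off-diagonal band $\beta=\frac{a}{2}e^{ib}$ (the kinetic part), while $\hat X^{N/2}$ shifts the index by $N/2$ and, using $\cos\!\big(\tfrac{2\pi(j+N/2)}{N}\big)=-\cos\!\big(\tfrac{2\pi j}{N}\big)$, flips the sign of every diagonal entry $\alpha_j$ (the potential part). The two sign changes together reproduce $-\hat h$, matching the operator computation.

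The only place where evenness of $N$ enters, and the only genuine obstacle, is the identity $\omega^{N/2}=-1$: for odd $N$ the exponent $N/2$ is not an integer and this anticommuting symmetry is unavailable, consistent with the fact that the spectrum's reflection about zero is observed only for even $N$. Everything else is a direct manipulation of the commutation relations already recorded in the excerpt, so I expect no additional difficulty beyond identifying $\hat V$.
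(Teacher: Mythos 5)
Your proof is correct and follows essentially the same route as the paper: both conjugate $\hat h$ by the unitary built from $\hat X^{N/2}$ and $\hat Z^{N/2}$, using $\omega^{N/2}=-1$ to flip the sign of the Hamiltonian and hence reflect the spectrum about zero. The only cosmetic difference is the ordering of the two factors in $\hat V$, which changes $\hat V$ by an overall phase and is irrelevant to the conjugation.
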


\begin{proof}
If $N$ is even then $k=N/2$ is an integer.  The factor $\omega^\frac{N}{2} = -1$. 
This and equation \ref{eqn:ZXZk} give 
\begin{align}
\hat X^\frac{N}{2} \hat Z^\frac{N}{2}  \hat h(a,b,\epsilon) \hat X^{-\frac{N}{2} } \hat Z^{-\frac{N}{2}} = - \hat h(a,b,\epsilon)
\end{align}
This implies that  $\hat h $ has the same spectrum as $- \hat h$ and consequently 
that if $\lambda_n$ is an eigenvalue of $\hat h$ then so is $- \lambda_n$. 
\end{proof}

\begin{theorem} \label{th:even2}
If dimension $N$ is even, then the spectrum of $\hat h(a,b,\epsilon)$ is equal to 
the spectrum of $\hat h(a,b,-\epsilon)$.   
\end{theorem}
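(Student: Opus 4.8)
The plan is to reproduce the strategy already used for Theorem \ref{th:even}, namely to exhibit a single invertible operator $\hat V$ that conjugates $\hat h(a,b,\epsilon)$ into $\hat h(a,b,-\epsilon)$; conjugation by an invertible operator preserves the spectrum, so this settles the claim. The natural candidate is $\hat V = \hat X^{N/2}$, which is well defined precisely because $N$ is even and hence $N/2 \in \mathbb{Z}$. This is a ``half'' of the operator $\hat X^{N/2}\hat Z^{N/2}$ used to flip the whole Hamiltonian in Theorem \ref{th:even}: there the $\hat Z^{N/2}$ factor was needed to negate the $\hat X$-terms as well, whereas here I only want to negate the $\epsilon$-term.

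To carry this out I would start from the short form in equation \ref{eqn:h0_short},
\begin{align}
\hat h(a,b,\epsilon) = \frac{a}{2}\hat X e^{ib} + \frac{a}{2}\hat X^\dagger e^{-ib} + \frac{\epsilon}{2}\left(\hat Z + \hat Z^\dagger\right),
\end{align}
and split its three terms into two groups. The two $a$-terms are powers of $\hat X$, so they commute with $\hat X^{N/2}$ and are fixed under conjugation; the entire effect of $\hat V$ is therefore concentrated on the $\epsilon$-term. For that term I would record how $\hat X^{N/2}$ acts on the clock operators, using $\omega^{N/2} = e^{\pi i} = -1$ (the one place where parity of $N$ enters), to obtain
\begin{align}
\hat X^{N/2}\,\hat Z\,\hat X^{-N/2} = \omega^{-N/2}\hat Z = -\hat Z, \qquad \hat X^{N/2}\,\hat Z^\dagger\,\hat X^{-N/2} = \omega^{N/2}\hat Z^\dagger = -\hat Z^\dagger.
\end{align}
Combining the two groups then gives at once
\begin{align}
\hat X^{N/2}\,\hat h(a,b,\epsilon)\,\hat X^{-N/2} = \frac{a}{2}\hat X e^{ib} + \frac{a}{2}\hat X^\dagger e^{-ib} - \frac{\epsilon}{2}\left(\hat Z + \hat Z^\dagger\right) = \hat h(a,b,-\epsilon),
\end{align}
and since $\hat X^N = \hat I$ (equation \ref{eqn:QFTstuff}) makes $\hat X^{N/2}$ unitary and hence invertible, the two operators share a spectrum.

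I do not expect a genuine obstacle here, as the argument is structurally identical to Theorem \ref{th:even}. The only care required is bookkeeping with the commutation relations: the paper states the $\hat Z$-conjugation-of-$\hat X$ identities in equation \ref{eqn:ZXZk}, so I would first transpose these into the $\hat X$-conjugation-of-$\hat Z$ identities actually needed, by extracting $\hat X\hat Z\hat X^{-1} = \omega^{-1}\hat Z$ from $\hat Z\hat X = \omega\hat X\hat Z$ and taking adjoints to get $\hat X\hat Z^\dagger\hat X^{-1} = \omega\,\hat Z^\dagger$. Keeping the exponents of $\omega$ straight is the one point where a sign error could creep in, but both $\hat Z$ and $\hat Z^\dagger$ acquire the same factor $-1$, so the $\epsilon$-term flips cleanly and the $a$-terms are genuinely untouched.
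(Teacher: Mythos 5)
Your proof is correct and is essentially the same as the paper's: both conjugate by $\hat X^{N/2}$ (the paper writes the conjugation in the opposite order, which is immaterial since $\omega^{\pm N/2}=-1$), observe that the $\hat X$-terms are untouched while $\hat Z+\hat Z^\dagger$ flips sign, and conclude via invertibility of $\hat X^{N/2}$. The sign bookkeeping you flag checks out.
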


\begin{proof}
If $N$ is even then $k=N/2$ is an integer and the factor $\omega^\frac{N}{2} = -1$. 
We find that 
\begin{align}
\hat X^{-\frac{N}{2}} \left(\hat Z + \hat Z^\dagger\right) \hat X^{\frac{N}{2}} = - \left(\hat Z + \hat Z^\dagger\right).
\end{align}
Hence 
\begin{align}
\hat X^{-\frac{N}{2}} \hat h(a,b,\epsilon) \hat X^\frac{N}{2} = \hat h(a,b, -\epsilon). \label{eqn:xhx}
\end{align}
The operator $\hat X^{\frac{N}{2}}$ is invertible, so equation \ref{eqn:xhx} 
 implies that the two operators $\hat h(a,b,\epsilon)$ and $\hat h(a,b,-\epsilon)$ have the same spectrum. 
\end{proof}

\subsection{Odd dimension $N$}

\begin{theorem}
If dimension $N$ is odd then for each eigenvalue $\lambda_n$ of $\hat h(a,b,\epsilon)$ (defined in equation \ref{eqn:h0abe}), 
$- \lambda_n$ is an eigenvalue of $\hat h(a,b + \frac{\pi}{N},-\epsilon)$. 
\end{theorem}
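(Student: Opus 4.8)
The plan is to realize the claim as a single unitary similarity transformation, in the same spirit as the even-dimension results (theorems \ref{th:even} and \ref{th:even2}) but folding the overall sign flip and the $\epsilon\to-\epsilon$ flip into one conjugation. Concretely, I would look for an invertible operator $\hat V$ with
\begin{align}
\hat V\, \hat h(a,b,\epsilon)\, \hat V^{-1} = -\,\hat h\!\left(a, b+\tfrac{\pi}{N}, -\epsilon\right),
\end{align}
since such a relation immediately shows that $\hat h(a,b,\epsilon)$ and $-\hat h(a,b+\pi/N,-\epsilon)$ share a spectrum, and hence that $-\lambda_n$ is an eigenvalue of $\hat h(a,b+\pi/N,-\epsilon)$ whenever $\lambda_n$ is one of $\hat h(a,b,\epsilon)$.

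The key observation is that for $N$ odd the integer $m=(N+1)/2$ satisfies $\omega^m = e^{\pi i(N+1)/N} = -e^{i\pi/N}$. I would therefore take $\hat V = \hat Z^m$. Writing $\hat h$ in clock-shift form (equation \ref{eqn:h0_short}) and applying the conjugation relations \ref{eqn:ZXZk}, namely $\hat Z^m \hat X \hat Z^{-m} = \omega^m \hat X$ and $\hat Z^m \hat X^\dagger \hat Z^{-m} = \omega^{-m} \hat X^\dagger$, while $\hat Z+\hat Z^\dagger$ is left unchanged, I expect
\begin{align}
\hat Z^m \hat h(a,b,\epsilon)\hat Z^{-m}
&= \tfrac{a}{2}\,\omega^m \hat X e^{ib} + \tfrac{a}{2}\,\omega^{-m}\hat X^\dagger e^{-ib} + \tfrac{\epsilon}{2}(\hat Z+\hat Z^\dagger) \nonumber\\
&= -\tfrac{a}{2}\hat X e^{i(b+\pi/N)} - \tfrac{a}{2}\hat X^\dagger e^{-i(b+\pi/N)} + \tfrac{\epsilon}{2}(\hat Z+\hat Z^\dagger).
\end{align}
The right-hand side is exactly $-\hat h(a,b+\pi/N,-\epsilon)$: the factor $\omega^m=-e^{i\pi/N}$ multiplying $\hat X e^{ib}$ supplies simultaneously the overall sign $-1$ and the shift $b\mapsto b+\pi/N$, while the two sign reversals on the $\hat Z$ part—one from the overall $-1$, one from $\epsilon\mapsto-\epsilon$—cancel, restoring the coefficient $+\epsilon/2$.

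Finally, because $\hat Z^m$ is unitary and therefore invertible, the similarity relation forces equality of spectra; reading off that $\lambda_n$ is an eigenvalue of $-\hat h(a,b+\pi/N,-\epsilon)$ yields $-\lambda_n$ as an eigenvalue of $\hat h(a,b+\pi/N,-\epsilon)$, as claimed. The only real subtlety, and the step I would verify most carefully, is the bookkeeping that the single choice $m=(N+1)/2$ produces \emph{both} the correct $\pi/N$ shift in $b$ \emph{and} the global sign flip through the identity $\omega^m=-e^{i\pi/N}$; once that phase accounting is confirmed, everything else is a direct application of the clock-shift algebra already established.
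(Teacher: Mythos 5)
Your proof is correct and takes essentially the same route as the paper: conjugation by a power of $\hat Z$ whose associated root of unity equals $-1$ times a phase $e^{\pm i\pi/N}$, which is exactly what requires $N$ odd. The paper uses the exponent $\tilde k = (N-1)/2$ (which lands on $b - \pi/N$, spectrally equivalent to $b+\pi/N$ by theorem \ref{th:h0r}), whereas your choice $m = (N+1)/2$ with $\omega^m = -e^{i\pi/N}$ lands exactly on the stated $b + \pi/N$; the phase bookkeeping you flag as the delicate step checks out.
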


\begin{proof}
If $N$ is odd we take $\tilde k=(N-1)/2$ which is an integer. 
Equation  \ref{eqn:ZXZk} gives 
\begin{align}
\hat Z^{\tilde k} \hat X \hat Z^{- \tilde k} &=  \hat X \omega^{\tilde k } = \hat X
e^{2 \pi i \frac{ (N-1)}{2 N} } = - \hat X  \omega^{- \frac{1}{2}} \\
\hat Z^{\tilde k} \hat X^\dagger \hat Z^{- \tilde k} &=  \hat X^\dagger \omega^{-\tilde k } = 
-  \hat X^\dagger   \omega^\frac{1}{2}. 
\end{align}
We find that 
\begin{align}
\hat Z^{\tilde k} \cos \hat p \hat Z^{-\tilde k}  = - \cos\left( \hat p + \frac{\pi}{N}\right) \\
\hat Z^{\tilde k} \sin \hat p \hat Z^{-\tilde k}  = - \sin\left( \hat p + \frac{\pi}{N}\right) .
\end{align}
This lets us compute 
\begin{align}
\hat Z^{\tilde k} \cos (\hat p-b) \hat Z^{-\tilde k}   & = 
\hat Z^{\tilde k} \cos \hat p \cos b \hat Z^{-\tilde k}  + \hat Z^{\tilde k} \sin \hat p \sin b \hat Z^{-\tilde k} \nonumber \\
& = - \cos (\hat  p -b + \frac{\pi}{N}) .
\end{align}
Consequently 
\begin{align}
\hat Z^{\tilde k} \hat h (a,b,\epsilon) \hat Z^{-\tilde k} = -\hat h\left(a, b+ \frac{\pi}{N} , -\epsilon\right)
\end{align}
for $\tilde k=(N-1)/2$. 
The spectrum of  $h (a,b,\epsilon)$ is the same as that of $-\hat h\left(a, b+ \frac{\pi}{N} , -\epsilon\right)$. 

\end{proof}

\subsection{Degeneracy of eigenvalues} \label{ap:degen} 

Studying the $\epsilon = a$ case, \citet{Dickinson_1982} conjectured that 
energy levels of $\hat h(a,0,a)$ (with $\hat h$ defined in equation \ref{eqn:h0abe})
are not degenerate except if $N$ is a multiple of 4 and in that case only for a pair of 
eigenstates with zero energy.   Via numerical calculations, we have confirmed this conjecture and extend it 
 to the case of $|\epsilon| \ne a$ with $a,\epsilon \ne 0$. 
The existence of a pair of zero eigenvalues in the case of $N$ a multiple of 4 is shown in appendix  
\ref{ap:det} below (lemma \ref{th:zeros}). 

When $\epsilon = 0, b=0, a \ne 0$, there are multiple pairs of degenerate eigenstates 
as the eigenvalues are $a \cos \frac{2 \pi k}{N}$ for $k \in \{0, 1, \ldots, N-1\}$. 
If $N$ is even and $\epsilon=0,b=0,a\ne 0$,  there are $N/2 - 1$ pairs of degenerate eigenvalues and 2 non-degenerate ones. 
If  $N$ is odd and $\epsilon=0,b=0,a\ne 0$, there are $(N-1)/2$  pairs of degenerate eigenvalues and 1 non-degenerate one. 

The Cauchy interlacing theorem was used  to show that the eigenvalues of the Harper operator $\hat h(a,0,a)$  have multiplicity at most 2 \citep{Dickinson_1982}.   We review this argument and extend it to the more
general case 
$\hat h(a,b,\epsilon)$  for $a \epsilon \ne 0$. 

A case of the 
the Poincar\'e separation theorem, also known as the Cauchy interlacing theorem,  (e.g, \cite{Bhatia_2007})
 is the following:
Let $A$ be an $n\times n$ Hermitian operator and let $B$ be a principal submatrix
of $A$.  
The eigenvalues of $A$ in decreasing order are  $\lambda_1 \ge \lambda_2 \ge \hdots \ge \lambda_n$
 and the  eigenvalues of $B$ are $\mu_1 \ge \mu_2 \ge \hdots \ge \mu_{n-1} $. 
Then for $j = 1,2,..., n-1$, 
\begin{align}
\lambda_j \ge \mu_j \ge \lambda_{j+1}.
\end{align}

To show that the eigenvalues of the Harper operator 
$\hat h(a,b,\epsilon)$ have at most a multiplicity of 2 (following \cite{Dickinson_1982}) we consider 
the principal submatrix (lacking the top row and left column) of $\hat h$. 
In the conventional basis, and if $\epsilon \ne 0$, 
this principal submatrix is a tridiagonal matrix with non-zero off-diagonal elements.
A real symmetric tridiagonal matrix with non-zero off-diagonal elements has distinct eigenvalues
\citep{Parlett_1998}.  Hence the principal submatrix of $\hat h(a,b,\epsilon)$ has distinct eigenvalues. 
Application of
the Cauchy interlacing theorem then implies that the eigenvalues of $\hat h$ have at most
a multiplicity of 2.  Furthermore if there is a pair of eigenvalues of multiplicity 2, then the 
principal submatrix also has (or inherits) this same eigenvalue.  

\begin{theorem} \label{th:distinct}
If $b$ is not a multiple of $ \pi/N$  and $a \epsilon \ne 0$, 
 then the eigenvalues of $\hat h (a,b,\epsilon)$ 
(defined in equation \ref{eqn:h0abe}) are distinct. 
\end{theorem}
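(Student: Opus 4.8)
The plan is to read the eigenvalue equation for $\hat h(a,b,\epsilon)$ as a second-order linear recurrence with periodic boundary conditions and to show that a repeated eigenvalue forces $b$ to be a multiple of $\pi/N$. Writing a candidate eigenvector as $v=\sum_j v_j\ket{j}$ in the conventional basis, the matrix of equation \ref{eqn:tridiag_ab} with coefficients \ref{eqn:cofs_j} gives, for every $j\in\mathbb{Z}/N\mathbb{Z}$, the periodic three-term relation $\beta v_{j-1}+\alpha_j v_j+\beta^\ast v_{j+1}=\lambda v_j$, with $\alpha_j=\epsilon\cos(2\pi j/N)$ and $\beta=\tfrac{a}{2}e^{ib}$. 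Since $a\neq 0$ we have $\beta^\ast\neq 0$, so this is a genuine second-order recurrence,
\begin{align}
v_{j+1}=\frac{\lambda-\alpha_j}{\beta^\ast}\,v_j-\frac{\beta}{\beta^\ast}\,v_{j-1},
\end{align}
whose solution space (before imposing periodicity) is two-dimensional, parametrized by $(v_0,v_{-1})=(v_0,v_{N-1})$.

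Next I would encode one period by a transfer matrix. Setting
\begin{align}
T_j=\begin{pmatrix} \dfrac{\lambda-\alpha_j}{\beta^\ast} & -\dfrac{\beta}{\beta^\ast} \\[4pt] 1 & 0 \end{pmatrix},
\end{align}
so that $(v_{j+1},v_j)^{\mathsf T}=T_j\,(v_j,v_{j-1})^{\mathsf T}$, the monodromy matrix $M=T_{N-1}\cdots T_1T_0$ advances the initial data once around the cycle. A solution of the recurrence descends to a genuine eigenvector of $\hat h$ precisely when it is $N$-periodic, i.e.\ when its initial data lies in $\ker(M-I)$; hence the geometric multiplicity of $\lambda$ equals $\dim\ker(M-I)$. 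The decisive computation is the determinant: since $\det T_j=\beta/\beta^\ast=e^{2ib}$ for each $j$, one obtains $\det M=e^{2ibN}$.

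The argument then closes by contraposition. An eigenvalue $\lambda$ has multiplicity $2$ exactly when every solution of the recurrence is periodic, i.e.\ when $M=I$; but $M=I$ forces $\det M=1$, that is $e^{2ibN}=1$, which holds only if $b\in\tfrac{\pi}{N}\mathbb{Z}$. Therefore, whenever $b$ is not a multiple of $\pi/N$ we have $M\neq I$, so $\dim\ker(M-I)\leq 1$ for every $\lambda$ (a $2\times 2$ matrix whose $1$-eigenspace is all of $\mathbb{C}^2$ must equal $I$), and all eigenvalues are simple. This also meshes with the Cauchy-interlacing bound established above: that route (which uses $\epsilon\neq 0$) already caps multiplicities at $2$, and the monodromy computation eliminates the remaining case. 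I expect the only real obstacle to be bookkeeping rather than conceptual, namely fixing the index conventions in $M$ together with the boundary identifications $v_{-1}=v_{N-1}$ and $v_N=v_0$ so that ``periodic solution'' and ``element of $\ker(M-I)$'' are matched correctly, and verifying that the per-step determinant is the constant phase $e^{2ib}$ arising from the complex off-diagonal entry $\beta$. Once $\det M=e^{2ibN}$ is secured, the conclusion is immediate.
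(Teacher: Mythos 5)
Your monodromy argument is correct. The eigenvalue equation in the conventional basis is indeed the cyclic three-term recurrence $\beta v_{j-1}+\alpha_j v_j+\beta^\ast v_{j+1}=\lambda v_j$ with $\beta=\tfrac{a}{2}e^{ib}\neq 0$, the bi-infinite solution space is two-dimensional and maps isomorphically to initial data $(v_0,v_{-1})$ (injectivity because two consecutive zeros force the zero solution, and the recurrence runs both ways since $\beta,\beta^\ast\neq0$), eigenvectors correspond exactly to $\ker(M-I)$, and since $\hat h$ is Hermitian the geometric and algebraic multiplicities agree, so $M\neq I$ does force simple spectrum. The key identity $\det M=(\beta/\beta^\ast)^N=e^{2ibN}$ uses only that $a$ is real and nonzero, and $M=I\Rightarrow e^{2ibN}=1\Rightarrow b\in\tfrac{\pi}{N}\mathbb{Z}$ closes the contrapositive. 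This is a genuinely different route from the paper's. The paper applies Molinari's explicit determinant formula for periodic almost-tridiagonal matrices to write the characteristic polynomial as $f(x)-2\cos(Nb)$ with $f$ monic of degree $N$ and independent of $b$, and then argues from the reality of the spectrum for every $b$ that $f$ can have no critical values in the open interval $(-2,2)$, so the horizontal line $y=2\cos(Nb)$ meets the graph in $N$ distinct points whenever $\cos(Nb)\neq\pm1$. The two arguments are Floquet-theoretic cousins (the $-2\cos(Nb)$ term in the paper is exactly your $\det M$ contribution), but yours is more algebraic and arguably tighter: it needs no real-analysis argument about the graph of $f$, it does not use $\epsilon\neq0$ at all (only $a\neq0$), and it delivers the multiplicity-at-most-$2$ bound for every $b$ for free from $\dim\ker(M-I)\leq 2$, which the paper obtains separately via Cauchy interlacing of a tridiagonal principal submatrix. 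What the paper's route buys in exchange is the explicit form $f(x)-2\cos(Nb)$ of the characteristic polynomial, which it reuses to exhibit the $2\pi/N$-periodicity and reflection structure of the spectrum in $b$ and as the starting point for the determinant and minimum-spacing estimates in the later appendices.
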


\begin{proof}
The periodic almost tridiagonal $n\times n$ matrix with two additional components, on the top right and lower left 
\begin{align}
\hat M = 
\begin{pmatrix}
a_1 & b_1    &            &   c_0       \\
c_1 & \ddots &  \ddots &           \\
       &  \ddots &  \ddots &  b_{n-1} \\
b_n    &              &  c_{n-1}  & a_n   \\
 \end{pmatrix} \label{eqn:tridiag_p}
\end{align}
has determinant that can be written in terms of a product of 2$\times$2 matrices; 
\begin{align}
\det \hat M&  =  \tr \left[  
	\begin{pmatrix}
	a_n & \!\! -b_{n\!-\!1} c_{n\!-\!1} \\
	1 & 0 
	\end{pmatrix}
	\hdots 
	\begin{pmatrix}
	a_2 & \!-b_{1} c_{1} \\
	1 & 0 
	\end{pmatrix}
	\begin{pmatrix}
	a_1 &\!-b_n c_0 \\
	1 & 0 
	\end{pmatrix}
  \right] \nonumber  \\
&  + (-1)^{n+1}\left( \prod_{j=1}^n b_j + \prod_{j=0}^{n-1} c_j \right) \label{eqn:detm}
\end{align}
\citep{Molinari_2008}.

We consider the matrix  
\begin{align}
\hat M = x \hat I - 2 \hat h(1,b,\epsilon). \label{eqn:hatM}
\end{align} 
Here $x$ is a number and $\hat I$ is the identity matrix. 
The matrix $\hat M$ (of equation \ref{eqn:hatM}) is a nearly tridiagonal matrix in 
the form of equation \ref{eqn:tridiag_p} with matrix components in the conventional basis 
(see equation \ref{eqn:cofs_j})
\begin{align}
a_j & =  x - 2\epsilon \cos ( \frac{2 \pi (j-1)}{N} )  \nonumber \\
b_j & = -e^{-ib} \nonumber   \\
c_j & = -e^{ib} 
\end{align}
for $ j  \in \{1, 2, .... , N\}$.  
With equation \ref{eqn:detm},  we compute the determinant for $\hat M$ of equation \ref{eqn:hatM}
\begin{align}
\det \left[x\hat I - 2 \hat h \right]& = 
\tr \Bigg[  
	\begin{pmatrix}
	 x \! -\! 2\epsilon \cos ( \frac{2 \pi (N\!-\!1)}{N} ) & \!\! -1 \\ 1 & 0 
	\end{pmatrix} \times \nonumber \\
	& \ \ \ \ \ \ \ \ \ \begin{pmatrix}
	 x \!-\! 2\epsilon \cos ( \frac{2 \pi (N\!-\!2)}{N} ) & \!\! -1 \\1 & 0 
	\end{pmatrix} \times \nonumber \\
	& \ \ \ \ \  \hdots 
	\begin{pmatrix}
	x- 2 \epsilon \cos ( \frac{2 \pi }{N} ) & -1 \\1 & 0 
	\end{pmatrix}  \times \nonumber \\
	& \ \ \ \ \ \ \ \ \ \ 
	\begin{pmatrix}
	x- 2 \epsilon &-1 \\1 & 0 
	\end{pmatrix}
  \Bigg] \nonumber  \\
 &\ \ \ \   - 2 \cos (Nb) .  \label{eqn:det1}
\end{align}

The traced term in equation \ref{eqn:det1} is a monic polynomial of degree $N$ that 
does not depend on $b$  (but does depend upon $\epsilon$) that we call $f(x)$ so that 
\begin{align}
\det \left[x \hat I - 2 \hat h(1,b,\epsilon) \right]& =  f(x) - 2 \cos (Nb).
\end{align}
The right hand side is the characteristic polynomial of the matrix $2 \hat h(1,b,\epsilon)$. 

Because the operator $\hat h$ is Hermitian, it must have $N$ real eigenvalues and this is true 
for any $b\in \mathbb{R}$.  

Eigenvalues of $2 \hat h(1,b,\epsilon)$ are values of $x$ that are roots 
of equation \ref{eqn:det1};  they satisfy $f(x) - 2 \cos (Nb) = 0 $. 

Following the argument presented by \citet{Molinari_lec6} in the context of a different matrix, we  
take $x$ real and consider the curve in the real $xy$ plane described by $y = f(x)$. 
At each eigenvalue $\lambda_k$ of $\hat h(1,b,\epsilon)$, the value of $x = 2 \lambda_k$ gives an 
intersection between the curve $y = f(x) $ and  the $y= 2 \cos (Nb)$ horizontal line. 
The quantity $2 \cos (Nb)$ is confined to the interval $[-2,2]$. 
A maximum or minimum  in the curve $f(x)$ can only exist at 
 $|y| \ge 2$.  Otherwise there would be a value of $b$ giving two fewer intersections of the curve 
$y=f(x)$ with the horizontal line $y=2 \cos (Nb)$ and this would give two fewer than $N$ roots of the characteristic polynomial. 
Because $f(x)$ cannot contain maxima or minima at $|f(x)| <2$, we find the following: 
As  long as $b$ is not a multiple 
of $ \pi/N$ (giving $\cos (Nb) = \pm 1$) then there must be $N$ intersections of $f(x)$ with the 
horizontal line $y=2 \cos (Nb)$.   This implies that the roots  of $f(x) - 2 \cos (Nb)$ 
must be distinct.  This in turn implies that the roots of 
$\hat h(1,b,\epsilon)$  are distinct for $b$ not equal to a multiple of $\pi/N$.   We can multiply 
by any $a \ne 0$ to find that the same is true for $\hat h(a,b,\epsilon)$. 

We check that we have no contradiction in the case that $\epsilon=0$ but $a\ne 0$. 
If $\epsilon = 0$ then the eigenvalues are $a\cos( \frac{2 \pi k}{N} - b) $ for $k \in \{0, 1, .... N-1\}$. 
There is no degeneracy as long as $b$ is not a multiple of $\pi/N$. 

\end{proof}

Note that multiplicity 2 eigenvalues of $\hat h$ could exist for $b$ a multiple of $ \pi/N$.    Numerically we have
found that a multiplicity 2 pair eigenvalues of $\hat h$ only exist for $N$ a multiple of 4, $b=0$, and when both eigenvalues are zero. 

We consider the spectrum of 
$\hat h(a,b,\epsilon)$ with $a\ne 0,\epsilon\ne 0 $ fixed and as a function of $b$.  Because they 
are distinct, 
energy levels do not cross in the intervals $\frac{\pi k}{N}   <b<\frac{\pi (k+1)}{N}$  for all integers $k$.  
Note that $b$ a multiple of $\pi/N$  are precisely the $b$ values where  $\frac{\partial \lambda_j(b) }{\partial b} =0$ as shown in appendix \ref{ap:derivs}. 
We suspect that close approaches (avoided crossings) between 
pairs of eigenvalues can only occur where $b$ is an multiple of $\pi/N$. 
This is likely related to the symmetries described by the commutators in equations \ref{eqn:Pz2k} 
 and \ref{eqn:Pz1k} 
which hold for $b$ equal to even and odd multiples of $\pi/N$. 
Numerically we find that eigenvalues are distinct for $b$ a multiple of $\pi/N$ and $\epsilon \ne 0$ 
except in the special case of $N$ a multiple of 4 and for two zero eigenvalues. However, a 
proof that the eigenvalues are distinct for $b$ multiple of $\pi/N$ (and excluding the special case) 
has eluded us.  It would be nice to show that avoided crossings (closest approaches between eigenvalues) 
only occur for $b$ equal to a multiple of $\pi/N$. 

\subsection{Other potentials} \label{ap:other}

Many of the relations given in appendices \ref{ap:bshift} -- \ref{ap:degen}  are not sensitive to the form of the potential term in the operator $\hat h$ of equation \ref{eqn:h0abe}.   
We summarize the relations that hold for potential functions that
differ from $\cos \hat \phi$. 

Consider a Hermitian operator in the form 
\begin{align}
\hat k(a,b,\epsilon) = a \cos (\hat p - b) + \epsilon V(\hat Z, \hat Z^\dagger) \label{eqn:khat}
\end{align}
where $V()$ is a polynomial in $\hat Z, \hat Z^\dagger$.  If $V()$ 
is a polynomial of the Hermitian operators $\cos \hat \phi = \frac{1}{2}(\hat Z + \hat Z^\dagger)$ 
and $\sin \hat \phi = \frac{1}{2i} ( \hat Z - \hat Z^\dagger)$, then it  would be periodic in $\hat \phi$ 
and Hermitian. 

Theorems \ref{th:A1},  \ref{th:h0r} and \ref{th:distinct} hold for 
a Hermitian operator $\hat k$ in the form of equation \ref{eqn:khat}. 

If the potential $V(\hat Z, \hat Z^\dagger)$ polynomial commutes with the parity operator $\hat P$ 
(equation \ref{eqn:parity}), then 
Theorems \ref{th:bmb}, \ref{th:hhalf},  \ref{th:sym1}, \ref{th:sym2}
 and corollaries \ref{th:A6}, and \ref{th:PZ_com} hold for a Hermitian 
 operator in the form of $\hat k$ of equation \ref{eqn:khat}. 

\subsection{Determinants} \label{ap:det}

We calculate 
the determinant $\det \hat h(1,0,\epsilon)$  (with operator defined in equation \ref{eqn:h0abe}).

The operator 
 $2 \hat h(1,0,0)$ in the conventional basis is a matrix that has a zero diagonal
and has 1s on the two off diagonals.   Equation \ref{eqn:detm} gives for the determinant 
\begin{align}
\det  [2 \hat h(1,0,0) ]= \tr \left[  \begin{pmatrix} 0 & -1 \\ 1 & 0 \end{pmatrix}^N \right]
+ 2 (-1)^{N+1}. \label{eqn:2hsimple}
\end{align}
The matrix \begin{align}
\hat A = \begin{pmatrix} 0 & -1 \\ 1 & 0 \end{pmatrix}  \label{eqn:hatA} 
\end{align}
obeys $\hat A^2 = - \hat I$ and is proportional to the Pauli Y matrix.  
If $N$ is even then $\tr (\hat A^N) = 2(-1)^{N/2}$.   If $N$ is a multiple of 4 then $\tr (\hat A^N) = 2$.
If $N$ is even but not a multiple of 4 then $\tr (\hat A^N) = -2$. 
If $N$ is odd then $\tr (\hat A^N) = 0$ because $\tr \hat A=0$. 
Putting these together gives 
\begin{align}
\det  [ 2 \hat h(1,0,0) ] = \begin{cases} 
0  & \text{ if } N  \text{ mod } 4 = 0  \\
2 & \text{ if } N  \text{ is odd} \\
-4 & \text{ if } N \text{ mod } 4 = 2  \\
\end{cases}.
\end{align}
We remove the factor of 2 from inside the determinant to find 
\begin{align}
\det [ \hat h(1,0,0)]  = 
\begin{cases} 
 0  & \text{ if } N  \text{ mod } 4 = 0  \\
 2^{1-N} & \text{ if } N  \text{ is odd} \\
 -2^{2-N} & \text{ if } N \text{ mod } 4 = 2  \\  
\end{cases}. \label{eqn:deth000}
\end{align}
Since the determinant of $\hat h(1,0,0)$ is equivalent to the product of its eigenvalues,
 the product of cosines it is handy to also write 
\begin{align}
\det [ \hat h(1,0,0)]   & = \!
\prod_{j=0}^{N-1}\!\! \cos \left(\frac{2 \pi j}{N} \right) 
  =\! \frac{1}{2^N}\! \prod_{j=0}^{N-1} \!\! \left( \omega^j + \omega^{-j} \right) 
 \label{eqn:prod_cosines}
\end{align}
where $\omega = e^{\frac{2 \pi i}{N}}$.   

We now compute the determinant for the Harper operator with $\epsilon \ne 0$. 
Again using equation \ref{eqn:detm} to compute the determinant  (and in the conventional basis) 
\begin{align}
\det \left[  2\hat h(1,0,\epsilon) \right] =  & 
\tr \Bigg[ 
\begin{pmatrix}2 \epsilon \cos \frac{2 \pi (N-1)}{N} & -1 \\ 1 & 0 
 \end{pmatrix} \times \nonumber \\
 & \ \ \begin{pmatrix}2 \epsilon \cos \frac{2 \pi (N-2)}{N} & -1 \\ 1 & 0 
 \end{pmatrix} \times \ldots \nonumber \\
 & \ \ \begin{pmatrix}2 \epsilon \cos \frac{2 \pi}{N} & -1 \\ 1 & 0 
 \end{pmatrix} \times \nonumber \\
 &\ \  \begin{pmatrix}2 \epsilon  & -1 \\ 1 & 0 
 \end{pmatrix} \Bigg] \nonumber \\
&+ (-1)^{N-1} 2 . \label{eqn:mess}
\end{align}
We define some 2$\times$2 matrices that lie within the trace term 
in equation \ref{eqn:mess} ;
\begin{align}
\det \left[ 2\hat h(1,0,\epsilon) \right] & =   
         \tr  \Bigg[ \prod_{\substack{\text{ordered} \\j = N-1 \\ \text{to } 0}} \! \! \hat S_j \Bigg] + (-1)^{N-1} 2 \nonumber \\
\hat S_j & = 
     \begin{pmatrix} 2\epsilon \cos \frac{2 \pi j}{N} & -1 \\ 1 & 0  
         \end{pmatrix}. \label{eqn:S_j}
\end{align}
\begin{align}
\frac{d \hat S_j}{d\epsilon}  & = 
     \begin{pmatrix} 2 \cos \frac{2 \pi j}{N} & 0 \\ 0 & 0 
     \end{pmatrix}  \equiv \hat B_j \label{eqn:hatBj} \\
 \frac{d^2\hat  S_j}{d\epsilon^2}  & =  0 \end{align} 
 \begin{align}
 \hat S_j \Bigg|_{\epsilon=0} & =  \begin{pmatrix} 0 & -1 \\ 1 & 0 
     \end{pmatrix} \equiv \hat A.  \label{eqn:hatA2}
 \end{align}
 
The determinant in equation \ref{eqn:mess} is a polynomial in $\epsilon$, 
 \begin{align}
 \det  [2\hat h(1,0,\epsilon)] = \sum_{j=0}^N g_j \epsilon^j.  \label{eqn:gg}
 \end{align}
 We aim to compute the coefficients $g_j$.  
 
 In the Fourier basis using 
 equation \ref{eqn:detm} the determinant  
\begin{align}
\det  \left[ 2  \hat h(1,0,\epsilon)\right] =  & 
\tr \Bigg[ 
\begin{pmatrix} 2   \cos \frac{2 \pi (N-1)}{N} & - \epsilon^2 \\ 1 & 0 
 \end{pmatrix} \times \nonumber \\
 & \ \ \begin{pmatrix}2  \cos \frac{2 \pi (N-2)}{N} & -\epsilon^2\\ 1 & 0
 \end{pmatrix} \times \ldots \nonumber \\
 & \ \ \begin{pmatrix}2  \cos \frac{2 \pi}{N} & -\epsilon^2\\ 1 & 0 
 \end{pmatrix} \times \nonumber \\
 &\ \  \begin{pmatrix}2  & -\epsilon^2 \\ 1 & 0
 \end{pmatrix} \Bigg] \nonumber \\
&+ (-1)^{N-1} 2  \epsilon^N. \label{eqn:mess2}
\end{align}
This shows that the coefficients $g_j$  of the polynomial in $\epsilon$  (for $j<N$)
 are only nonzero if $j$ is even as 
there are only terms proportional to $\epsilon^2$ within the trace operator in equation \ref{eqn:mess2}.  
 Hence 
 \begin{align}
 \text{for } j \text{ odd and } j < N, \ \   g_j = 0. 
 \end{align}
 The coefficient 
 \begin{align}
 g_0 = \det [2\hat h(1,0,0)]\end{align}
  is equal to the expression given 
 in equation \ref{eqn:deth000} times a power of 2.  
 
 In the Fourier basis using 
 equation \ref{eqn:detm} the determinant  
\begin{align}
\det  \left[ 2 \epsilon^{-1} \hat h(1,0,\epsilon)\right] =  & 
\tr \Bigg[ 
\begin{pmatrix} 2  \epsilon^{-1} \cos \frac{2 \pi (N-1)}{N} & - 1 \\ 1 & 0 
 \end{pmatrix} \times \nonumber \\
 & \ \ \begin{pmatrix}2 \epsilon^{-1} \cos \frac{2 \pi (N-2)}{N} & -1 \\ 1 & 0 
 \end{pmatrix} \times \ldots \nonumber \\
 & \ \ \begin{pmatrix}2\epsilon^{-1}  \cos \frac{2 \pi}{N} & -1 \\ 1 & 0 
 \end{pmatrix} \times \nonumber \\
 &\ \  \begin{pmatrix}2 \epsilon^{-1}  & -1 \\ 1 & 0 
 \end{pmatrix} \Bigg] \nonumber \\
&+ (-1)^{N-1} 2 . \label{eqn:mess3}
\end{align}
 In the large $\epsilon$ limit equation \ref{eqn:mess3} resembles \ref{eqn:2hsimple} for  
 the operator with $\epsilon=0$ 
 in the conventional basis giving 
 \begin{align} g_N = g_0.\end{align} 
 
Remaining to compute are coefficients $g_j$ for $j$ even and $0 < j < N$. 
By applying derivatives with respect to $\epsilon$ to equation \ref{eqn:mess},
 the coefficients of the polynomial 
in equation \ref{eqn:gg} depend upon 
an ordered product of $2 \times 2$ matrices 
 \begin{align}
 g_j & = \frac{d^j}{d \epsilon^j} 
  \tr  \Bigg[ \prod_{\substack{\text{ordered} \\j = N-1 \\ \text{to } 0}} \hat S_j \Bigg] \Bigg|_{\epsilon=0}.
  \label{eqn:g_j}
 \end{align}
 Since $\frac{d \hat S_j}{d \epsilon^2}  = 0$ when taking the derivatives, the result 
 is a sum of terms that contain a product of matrices $\hat B_j $ defined in equation 
 \ref{eqn:hatBj} and $\hat A$ (defined in equations \ref{eqn:hatA} or \ref{eqn:hatA2}).
 
 The $g_2$ coefficient contains 
 a sum of terms.  Each 
 term contains two matrices in
 the form of the matrix $\hat B_j$ defined in equation \ref{eqn:hatBj} 
 interspersed with $N-2$ matrices that are equal to $\hat A$.  
 It is convenient to compute  for $ j, j' \in \{ 0, 1, \ldots , N-1 \}$ and  $k\in \mathbb{Z}$.
 \begin{align}
 \hat A^{2k}& = (-1)^k \hat I \nonumber \\
 \hat A^{2k-1} &= (-1)^k \hat A\nonumber  \\
 \hat B_j \hat A \hat B_{j'} &= 0 \nonumber  \\
 \hat B_j \hat A^{2k+1} \hat B_{j'} & = 0 \nonumber\\
 \hat B_j \hat A^{2k}  \hat B_{j'}  &= (-1)^k \hat B_j \hat B_{j'}   .  \label{eqn:ABs}
 \end{align}
The trace of the product of a matrix with a single coefficient on the diagonal and $\hat A$ vanishes; 
 \begin{align}
 \tr \left[ \begin{pmatrix}1 & 0 \\ 0 & 0 \end{pmatrix} \hat A \right] = 0. 
 \end{align}
 This and equations \ref{eqn:ABs} imply that 
 if there is a odd number of 
 $\hat A$ matrices inside the trace term in equation \ref{eqn:g_j} then the resulting trace vanishes.   
 Hence if $N$ is odd then $g_2 = 0$.  
 The $g_k$ coefficient contains $k$  matrices 
 in the form of $\hat B_j$ and $N-k$ matrices that are equal to $\hat A$. 
 As $k$ must be even, this implies that if $N$ is odd, there are an odd number of $\hat A$
 matrices, again giving a trace of zero. Hence 
 \begin{align}
 g_k = 0 \text{ for } N \text{ odd, } 0<k<N. 
 \end{align}

 If $N$ is even then we find that 
 \begin{align}
 g_2 & = \sum_{\substack {j>j' \\ j'- j \text{ odd} }}
 \tr \left[ \hat B_j \hat B_{j'} (-1)^{ N/2 -1 } \right] . 
 \label{eqn:g_2}
 \end{align} 
 The condition that $j, j'$ differ by an odd number arises so that there are an even number 
 of $\hat A$ matrices between $\hat B_j, \hat B_{j'}$ within the product inside the trace in equation \ref{eqn:mess}.  The sum is over all possible combinations that satisfy this condition 
 with $j, j' \in \{0, 1, \ldots, N-1\}$.  Using equation \ref{eqn:hatBj} for the $\hat B$ matrices, 
 equation \ref{eqn:g_2} becomes 
 \begin{align}
 g_2 & =  \sum_{\substack {j>j' \\ j'- j \text{ odd} }} 4 \cos \left( \frac{2 \pi j}{N}\right) \cos \left( \frac{2 \pi j'}{N}\right) .
  \end{align} 
 Suppose we choose $j,  j'$. 
Because $j' - j$ is odd and $N$ is even,  we can always find another pair of indices, such as  
 $N/2+ j, j'$ or $j, N/2+j'$ that has the opposite sign in their product of cosines. 
Hence the sum over all possible $j, j'$ indices
vanishes.   An extension of this argument holds for $g_4$ and the other coefficients in the case of $N$ even. 
  Hence 
 \begin{align}
 g_k = 0 \text{ for } N \text{ even, } 0<k<N. 
 \end{align}
 
The arguments presented above imply that  the polynomial of equation \ref{eqn:gg} 
only contains two terms and 
\begin{align}
\det  \hat h(1,0,\epsilon) = \begin{cases} 
0  & \text{ if } N  \text{ mod } 4 = 0  \\
2^{1-N} (1 + \epsilon^N) & \text{ if } N  \text{ is odd} \\
-2^{2-N} (1 + \epsilon^N) & \text{ if } N \text{ mod } 4 = 2  \\
\end{cases}.
\label{eqn:deth_gen}
\end{align}

We have checked this relation numerically and it is probably exact.  

\begin{lemma} \label{th:zeros}
For $N$ a multiple of 4,  $a\ne 0$,  the operator $\hat h(a,0,\epsilon)$ has a zero eigenvalue of multiplicity 
 2. 
\end{lemma}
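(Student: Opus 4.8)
The plan is to pin down the multiplicity $m(0)$ of the zero eigenvalue by squeezing it between $2$ and $2$, using three ingredients already in hand: the value of the determinant, the reflection symmetry of the spectrum about $0$, and the Cauchy interlacing bound on multiplicities. First I would reduce to $a=1$: since $\hat h(a,0,\epsilon)=a\,\hat h(1,0,\epsilon/a)$, the eigenvalues merely scale by $a\ne 0$, so the multiplicity of the zero eigenvalue is unchanged and it suffices to analyze $\hat h(1,0,\epsilon)$. The whole argument then splits into a lower bound $m(0)\ge 2$ and an upper bound $m(0)\le 2$.

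For the lower bound I would show $m(0)\ge 1$ and that $m(0)$ is even. That $0$ is an eigenvalue at all is immediate from the determinant computed in equation \ref{eqn:deth_gen}: for $N$ a multiple of $4$ one has $\det\hat h(1,0,\epsilon)=0$, so $0$ is a root of the characteristic polynomial. To force $m(0)$ to be even I would invoke theorem \ref{th:even}, whose proof exhibits the unitary similarity $\hat X^{N/2}\hat Z^{N/2}\,\hat h\,\hat X^{-N/2}\hat Z^{-N/2}=-\hat h$ (valid because $N$ is even); this similarity shows $m(\lambda)=m(-\lambda)$ for every $\lambda$. Summing multiplicities gives $N=m(0)+2\sum_{\lambda>0}m(\lambda)$, so $m(0)\equiv N\pmod 2$, and since $N$ is even $m(0)$ is even. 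Together with $m(0)\ge 1$ this yields $m(0)\ge 2$.

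For the upper bound I would reuse the Cauchy interlacing argument of appendix \ref{ap:degen}. With $b=0$ the matrix of $\hat h(1,0,\epsilon)$ in the conventional basis (equation \ref{eqn:tridiag_ab} with coefficients \ref{eqn:cofs_j}) is real symmetric with off-diagonal entries $\beta=\tfrac12\ne 0$; deleting the first row and column removes both corner entries and leaves a genuine real symmetric tridiagonal matrix with nonzero off-diagonals, which therefore has simple eigenvalues. Interlacing then caps every eigenvalue of $\hat h$ at multiplicity $2$, so $m(0)\le 2$, and combined with the lower bound this gives $m(0)=2$. I expect the only delicate point to be the parity/counting step that rules out an \emph{odd} number of zero eigenvalues; the determinant formula and the interlacing bound are direct appeals to results already proved, but the conclusion $m(0)\ge 2$ (rather than merely $m(0)\ge 1$) hinges entirely on combining $\det\hat h=0$ with the spectral $\pm$ symmetry.
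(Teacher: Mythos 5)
Your proof is correct and follows essentially the same route as the paper's: the vanishing determinant from equation \ref{eqn:deth_gen} supplies a zero eigenvalue, the spectral reflection symmetry of theorem \ref{th:even} forces its multiplicity up to at least 2, and the Cauchy interlacing argument of appendix \ref{ap:degen} caps it at 2. Your explicit parity count $N = m(0) + 2\sum_{\lambda>0} m(\lambda)$ makes the ``at least 2'' step more precise than the paper's one-line assertion, and working in the conventional basis (off-diagonals $a/2 \ne 0$) lets the interlacing bound go through without assuming $\epsilon \ne 0$.
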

\begin{proof}
For $N$ a multiple of 4, via equation \ref{eqn:deth_gen} calculated above, 
the determinant $\det h(a,0,\epsilon) = 0$.  That implies that there is at least one
eigenvalue that is zero. 
For $N$ even, eigenvalues must come in pairs $\lambda, -\lambda$ (theorem \ref{th:even}).
Consequently  if there is a zero eigenvalue, it must have multiplicity of at least 2. 
As long as $a \epsilon \ne 0$, eigenvalues cannot have multiplicity greater 2, as discussed in appendix \ref{ap:degen} where we applied the Cauchy interlacing theorem.
This implies that if $N$ is a multiple of 4, there must be two zero eigenvalues. 
\end{proof}

\subsection{Estimate of the minimum eigenvalue spacing}
\label{ap:min_spacing}

We estimate the minimum distance between any the two eigenvalues of $\hat h(1,0,\epsilon)$    (equation \ref{eqn:h0abe}) 
(with $a=1, b=0$) that are closest to zero,  
and with dimension $N$ not equal to a multiple of 4.   We ignore the case $N$ a multiple of 4 because
in this case there are two zero eigenvalues (see lemma \ref{th:zeros}). 


We use $p_{\hat h}$ to represent the characteristic polynomial 
\begin{align}
p_{\hat h}(x) \equiv \det[ x \hat I -  \hat h(a,0,\epsilon)] . 
\end{align}
Using equation \ref{eqn:detm} and the Fourier basis  (equation \ref{eqn:cofs_k})
the characteristic polynomial of $\hat h(1,0,\epsilon)$   can be written in terms of
the trace of a product of 2$\times$2 matrices
\begin{align}
 p_{\hat h}(x) = \tr \Bigg[&
 \begin{pmatrix} x- \cos \frac{2 \pi (N\!-\!1)}{N}  & - \frac{\epsilon^2}{4} \\ 1 & 0   \end{pmatrix}\times\nonumber \\
 & \begin{pmatrix} x- \cos \frac{2 \pi (N\!-2\!)}{N}  & - \frac{\epsilon^2}{4} \\ 1 & 0   \end{pmatrix} \times \ldots \nonumber \\
&  \begin{pmatrix} x- \cos \frac{2 \pi }{N}  & - \frac{\epsilon^2}{4} \\ 1 & 0   \end{pmatrix} \times\nonumber \\
& \begin{pmatrix} x- 1 & - \frac{ \epsilon^2}{4} \\ 1 & 0   \end{pmatrix} \Bigg] \nonumber \\
& - 2 \left(\frac{\epsilon}{2} \right)^N
. \label{eqn:detM1}
\end{align}

We insert $x = \lambda_* + \delta$ into the characteristic polynomial of equation \ref{eqn:detM1} 
where $\lambda_*$ is the smallest near 0 degenerate eigenvalue of $\hat h(1,0,0)$.  
Here $\delta$, assumed small,
 will give us an estimate for the spacing between nearly degenerate eigenvalues as a function of $\epsilon$. 
The eigenvalue 
 \begin{align} 
\lambda_* = \cos \frac{2 \pi j_*}{N} = \cos \frac{2 \pi j_*'}{N}
\end{align}
 for $j_*$ near $ N/2$  and $j_*'$ near $-N/2$. 
 We assume that $\lambda_*$ is small (near 0) and that $|\delta| < |\lambda_*|$ is even smaller. 
 To second order in $\delta$ 
Terms containing factors of $\lambda_* - \cos \frac{2 \pi j_*}{N}$ and $\lambda_* - \cos \frac{2 \pi j_*'}{N}$  are zero,  so assuming that $\delta$ is small, the characteristic polynomial (equation \ref{eqn:detM1}) is approximately  
\begin{align}
p_{\hat h}(\lambda_* \!+ \!\delta) \sim  \tr & \Bigg[  
     \begin{pmatrix} \lambda_* - \cos \frac{2 \pi (N-1)}{N} & -\frac{\epsilon^2}{4} \\ 1 & 0 \end{pmatrix} \ldots  \nonumber \\
& \times \begin{pmatrix} \lambda_* - \cos \frac{2 \pi (j_*'+1)}{N} & -\frac{\epsilon^2}{4} \\ 1 & 0 \end{pmatrix} \nonumber \\
&  \times \begin{pmatrix}  \delta  & -\frac{\epsilon^2}{4} \\ 1 & 0 \end{pmatrix}   \nonumber  \\
& \times \begin{pmatrix}  \lambda_*- \cos \frac{2 \pi (j_*'-1)}{N} & -\frac{\epsilon^2}{4} \\ 1 & 0 \end{pmatrix} 
\ldots \nonumber \\
& \times \begin{pmatrix}  \lambda_*- \cos \frac{2 \pi (j_*+ 1)}{N} & -\frac{\epsilon^2}{4} \\ 1 & 0 \end{pmatrix}  \nonumber \\
& \times  \begin{pmatrix}  \delta  & -\frac{\epsilon^2}{4} \\ 1 & 0 \end{pmatrix} \nonumber \\
& \times \begin{pmatrix}  \lambda_*- \cos \frac{2 \pi (j_* - 1)}{N} & -\frac{\epsilon^2}{4} \\ 1 & 0 \end{pmatrix}  \ldots \nonumber \\
& \times  \begin{pmatrix}  \lambda_* - 1  & -\frac{\epsilon^2}{4} \\ 1 & 0 \end{pmatrix} \Bigg]
- 2^{1-N} \epsilon^N \label{eqn:plambda}.
\end{align}
Each 2$\times$2 matrix inside the trace of equation \ref{eqn:plambda} is associated with an integer index 
corresponding to a $N$-th complex
root of unity, as shown in Figure \ref{fig:circ}.   The 2$\times$2 matrices inside the trace can be cyclicly permuted so  we can think of each one as being located on the unit circle. 
If $\lambda_*$ is small compared to other roots of $\hat h(1,0,0)$ then in many cases we can approximate 
\begin{align}
\lambda_* - \cos \frac{2 \pi k}{N} \sim   - \cos \frac{2 \pi k}{N}  \text{ for } k \ne j_*,  j_*'. \label{eqn:approx}
\end{align} 

\begin{figure}[htbp]\centering
\includegraphics[width=3truein]{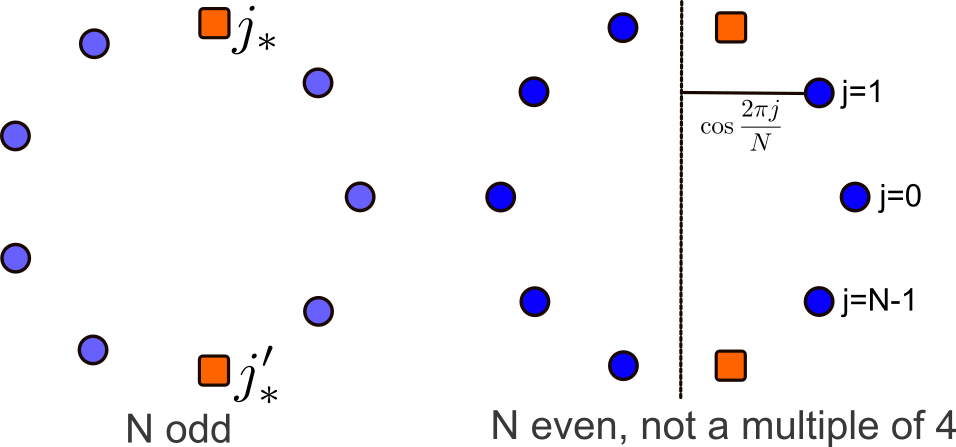}
\caption{Complex roots of unity displayed on the unit circle. On the left we show dimension $N$ odd (in this case $N=9$) 
and on the right $N$ even but not a multiple of 4 (in this case $N=10$).  
Two indices (denoted $j_*$ and $j_*'$ that have the same 
and lowest value of $\cos \frac{2\pi j_*}{N} = \cos \frac{2\pi j_*'}{N}$ are shown with orange squares. 
For $N$ odd, remaining indices form two disjoint sets,  one with an even number of the elements and the other with an odd number of elements.   For $N$ even (but not a multiple of 4) the remaining indices form two disjoint sets, both containing odd numbers of elements.   The 2$\times$2 matrices within the trace operator in equation \ref{eqn:plambda2} can be cyclicly permuted so 
we can think of each 2$\times$2 matrix as located on one of the nodes on the circle.   \label{fig:circ}}
\end{figure}

We define operators
\begin{align}
\hat C &=  \begin{pmatrix}  0  & -\frac{\epsilon^2}{4} \\ 1 & 0 \end{pmatrix}  \label{eqn:C_def} \\
\hat D_j &= \left( \lambda_*  -\cos \frac{2 \pi j}{N} \right) \ket{0}\bra{0} .  \label{eqn:D_j_def}
\end{align}
These operators obey relations similar to those in equation \ref{eqn:ABs}.  
For $ j, j' \in \{ 0, 1, \ldots , N-1 \}$ and  $k\in \mathbb{Z}$.
\begin{align}
 \hat C^{2k}& = (-1)^k \left(\frac{\epsilon}{2}\right)^{2k} \hat I \nonumber \\
 \hat C^{2k+1} &= (-1)^k \left(\frac{\epsilon}{2}\right)^{2k}  \hat D \nonumber  \\
 \hat D_j \hat C \hat D_{j'} &=  \ket{0}\bra{0} \hat C \ket{0}\bra{0}  = \hat D_j \hat C^{2k+1} \hat D_{j'} = 0 \nonumber  \\
 \hat D_j \hat C^{2k}  \hat D_{j'}  &= (-1)^k \left(\frac{\epsilon}{2}\right)^{2k}  \hat D_j \hat D_{j'}   .  \label{eqn:CDs}
 \end{align}
Using the $\hat C, \hat D_j$ operators equation \ref{eqn:plambda} becomes 
\begin{align}
p_{\hat h}(\lambda_* \!+\! \delta) \sim  \tr &\Bigg[ ( \hat D_{N-1} + \hat C)  ( \hat D_{N-2} + \hat C) \ldots  
 \nonumber \\
&\times ( \hat D_{j_*' + 1} + \hat C) ( \delta \ket{0}\bra{0} + \hat C)( \hat D_{j_*' -1} + \hat C) \ldots  \nonumber \\
& \times ( \hat D_{j_* + 1} + \hat C) ( \delta \ket{0}\bra{0} + \hat C)( \hat D_{j_* -1} + \hat C) \ldots  \nonumber \\
&\times  (\hat D_1 + \hat C)  (\hat D_0 + \hat C) \Bigg]
- 2^{1-N} \epsilon^N \label{eqn:plambda2}.
\end{align}

We consider two terms, one proportional to $\delta^2$ and the other independent of $\delta$, 
and write equation \ref{eqn:plambda2} as a polynomial in $\epsilon$ 
\begin{align}
p_{\hat h}(\lambda_* + \delta) \sim \sum_{k=0}^{N} ( c_k \epsilon^k + \delta^2 d_k \epsilon^k ) . 
\end{align}
We ignore a possible term proportional to $\delta$ as this would more likely to 
cause an $\epsilon$ dependent shift in an eigenvalue 
rather than split a degeneracy.  
For $k < N$, and $k$ odd $c_k = d_k = 0$.  This follows as the trace term in equation \ref{eqn:plambda2} 
only contains factors of $\epsilon^2$.  

We first consider the $d_k$ coefficients.  For these the product inside the trace 
must consist of a product of $N$ 2$\times$2 matrices which includes two operators 
$\delta \ket{0}\bra{0}$ at the location of indices $j_*$ and $j_*'$.  The remaining 
matrices are either $\hat D_j$ or $\hat C$ operators. 
 For each location on the unit circle that is not associated with $j_*$ or $j_*'$ 
(shown in Figure \ref{fig:circ} with blue dots) either $\hat D_j$ 
or $\hat C$ is chosen.       At the location of $j_*$ and $j_*'$, shown by
the orange scales in Figure \ref{fig:circ} the operator is $\delta \ket{0}\bra{0}$.  
To be non-zero, 
equation \ref{eqn:CDs} implies that there must be consecutive pairs of $\hat C$ operators. 
Each $\hat D_j$ operator gives a cosine whereas each pair of $\hat C$ operators gives
a factor of $-\epsilon^2/4$. 
For $k=0$ all indices that are not $j_*, j_*'$ give cosine factors; 
\begin{align}
d_0 &  \sim \prod_{k \ne j_*, j_*} \left( - \cos \frac{2 \pi k}{N} \right) \nonumber \\
& \sim (-1)^{N-2} \prod_{k=0}^{N-1} \frac{ \cos \frac{2 \pi k}{N}}{ \lambda_*^2}  \nonumber \\
& \sim \begin{cases} - 2^{1-N} \lambda_*^{-2} & N \text{ odd} \\
                              - 2^{2-N} \lambda_*^{-2} & N \text{ even, not a multiple of 4} 
                              \end{cases} \label{eqn:d0}
\end{align}
where we have used equation \ref{eqn:prod_cosines} for the product of cosines. 

Because there are two factors of $\delta$, and $k$ must be even, 
the maximum $k$ with $d_k$ non-zero
must be $d_{N-3}$ if $N$ is odd and $d_{N-2}$ if $N$ is even. 
In the case that $N$ is odd  all indices must give $\hat C$ operators except 1 
so that there are a even number. The lone $\hat D_j$ operator must be on the side of the 
circle containing an odd number of indices, otherwise  $\hat C$ operators
would not be in consecutive pairs;  
\begin{align}
d_{N-3} & \sim (-1)^{(N-3)/2} 2^{3-N} 
\sum_{\substack{k \ne j_*, j_*' \\ \text{one side} \\ k-j_* \text{odd} }} \left( - \cos \frac{2 \pi j}{N} \right) 
 \nonumber \\
&  \sim  (-1)^{(N-3)/2}2^{3-N}  \frac{N}{2\pi} 
\end{align}
where we have approximated the sum with an integral.  
The magnitude of the coefficient $|d_{N-3} |\epsilon^{N-3}$ is likely to be smaller in magnitude than $|d_0|$.   

In the case that $N$ is even (all indices must have $\hat C$ operators) 
\begin{align}
d_{N-2} & \sim 2^{(N-2)}(-1). 
\end{align}
Again we expect $|d_{N-2}|\epsilon^{N-2} < d_0$ 
for $\epsilon <1$.  
We guess that intermediate terms $|d_k| \epsilon^k$ 
with $0<k<N-1$ would also 
be smaller than $|d_0|$ if $\epsilon <1$.   Hence we assume that we can neglect coefficients $d_k$ for $k>0$. 

We now consider $c_k$ coefficients. 
The coefficient $c_0=0$ 
as $p_{\hat h}(\lambda_*)= 0$ if $\epsilon = 0$ (in this case $\lambda_*$ must be a root). 

For the $c_k$ coefficients (no factors of $\delta$), the trace in equation 
\ref{eqn:plambda2} contains a product of 2$\times $2 matrices that includes two separated $\hat C$ 
matrices at the location of $j_*$ and $j_*'$.   
Equation \ref{eqn:CDs} implies that there must be consecutive pairs of separated $\hat C$ operators, 
hence there must be at least 4 of them inside the trace for a non-zero result. Hence 
\begin{align} 
c_2 = 0. \end{align}  

\begin{figure}[htbp]\centering
\includegraphics[width=3truein]{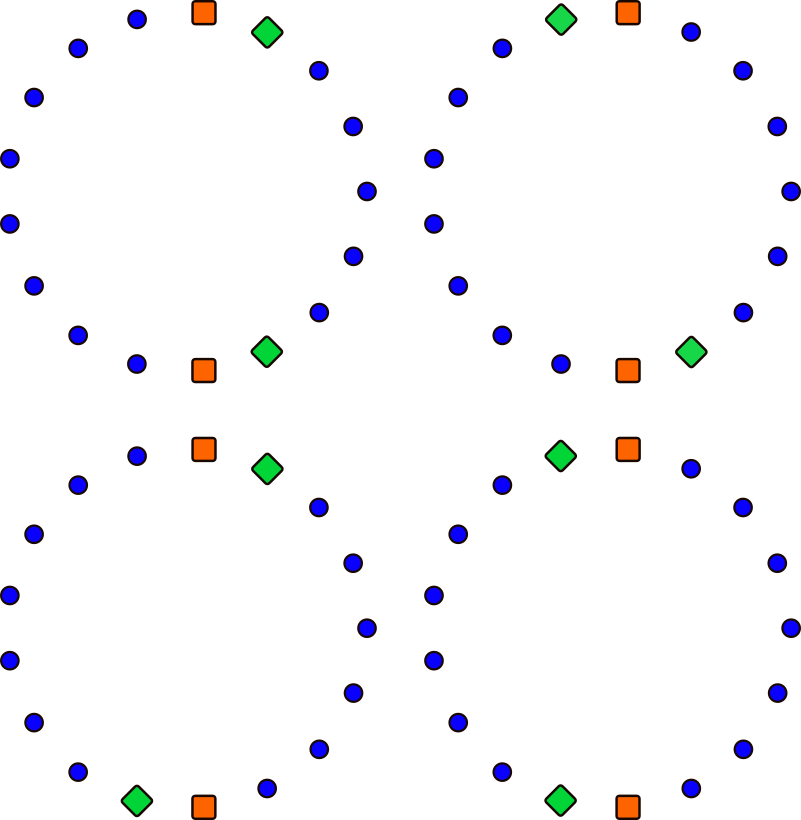}
\caption{The coefficient $c_4$ is a sum of 4 terms.  The terms 
inside the trace operator for each term have $\hat C$ operators at the 
locations of the orange squares and green diamonds and $\hat D_j$ operators at the locations 
of the blue circles.   The 4 terms approximately cancel, giving $c_4 \sim 0$. 
\label{fig:circ_c4}}
\end{figure}

We estimate the coefficient $c_4$ which requires two consecutive pairs of $\hat C$ operators, with location 
as shown in Figure \ref{fig:circ_c4}; 
\begin{align}
c_4 = & \tr \Bigg[ \hat D_{N-1} \hat D_{N-2} \ldots \hat D_{j_*'-2} [ D_{j_*' - 1} \hat C^2 + \hat C^2 D_{j_*'+1}] \times \nonumber \\
&  \ \ \ \ \hat D_{j_*' +2} \ldots \hat D_{j_* - 2} [ D_{j_* - 1} \hat C^2 + \hat C^2 D_{j_*+1} ] \times \nonumber \\
& \ \ \ \ \ \hat  D_{j* + 2}\ldots \hat  D_1 \hat  D_0 \Bigg]. \label{eqn:c4_11}
\end{align}
The remaining operators inside the trace are in the form of $\hat D_j$. 
Because $\hat C^2$ and $\hat D_j$ are diagonal, 
all the 2$\times$2 matrices in the product inside the trace in equation \ref{eqn:c4_11} are diagonal and commute, giving 
\begin{align}
c_4 = &  - \frac{\epsilon^2}{4}  \prod_{ 
\substack{ j \ne j_*', j_*'  \pm 1 \\  j\ne j_*, j_* \pm 1} } \left( \lambda_* - \cos \frac{2 \pi j}{N} \right) S
\end{align} 
with 
\begin{align}
S  =& \Bigg[ \left( \lambda_* - \cos \frac{2 \pi (j_*' + 1) }{N} \right)  + 
\left( \lambda_* - \cos \frac{2 \pi (j_*' - 1) }{N} \right)  \Bigg]  \times \nonumber \\ & 
\Bigg[  \left( \lambda_* - \cos \frac{2 \pi (j_* + 1) }{N} \right) +
\left( \lambda_* - \cos \frac{2 \pi (j_* - 1) }{N} \right)  \Bigg] . 
\end{align}
Each factor contains 
 terms that are approximately equidistant (on the $x$ axis) 
about $\lambda_* $ so they approximately cancel.  
As $ \lambda_* = \cos \frac{2 \pi j_*}{N} = \lambda_* = \cos \frac{2 \pi j_*'}{N}$,  
\begin{align}
S & \sim \left[ \sin \left( \frac{2 \pi j_*'}{N} \right) \left( \frac{2 \pi }{N} - \frac{2 \pi}{N}  \right) \right]^2
 \nonumber \\ & \sim 0 + {\cal O} (N^{-4}).
\end{align}
Consequently we can neglect the coefficient $c_4$.  
The coefficient $c_6$ can be written in terms of two set sets, one that is proportional to $c_4$
and so cancel, and another set.  The additional set can be written in terms of pairs 
that also cancel, so $c_6 \sim 0$. 

For $k=N$ we can use the determinant (equation \ref{eqn:deth_gen}) to find that 
\begin{align}
c_N  & =   \begin{cases}  -2^{2-N} &  N \text{ even, not a multiple of 4} \\
-2^{1-N} & N \text{ odd} \end{cases} .  \label{eqn:cN}
\end{align}
   
If $N$ is odd there cannot be $N$ (odd) factors of $\hat C$ inside the trace operator of equation \ref{eqn:plambda2} (otherwise the result is zero because of equation \ref{eqn:CDs}).  
We consider the coefficient $c_{N-1}$ which requires a single $\hat D_j$ operator within 
the trace of equation \ref{eqn:plambda2}.  Equation  \ref{eqn:plambda2}
has operators $\hat C$ at index $j_*$ and $j_*'$.    The trace can be cyclicly permuted so 
we can think of the indices as lying on a circle as shown in Figure \ref{fig:circ}. 
There is only a single $\hat D_j$ operator that can be located at any location other 
than $j_*$ or $j_*'$.  The remaining $N-1$ operators give a product 
of an even number of $ \hat C$ operators. 
This gives 
\begin{align}
c_{N-1} & \sim  (-1)^\frac{(N-1)}{2} 2^{1-N}\sum_{\substack{ k \ne j_*, j_*' }} 
 \left(\lambda_* - \cos \frac{2 \pi k}{N}  \right)  \nonumber \\
& \sim (-1)^\frac{(N-1)}{2} 2^{1-N}  (N-2) \lambda_*   \nonumber \\
&  \sim   (-1)^\frac{(N-1)}{2} 2^{1-N}  \pi, \label{eqn:cN1}
\end{align}
where we have used the fact that the sum of cosines is approximately zero. 
The factor of $\pi$ causes $|c_{N-1}|\epsilon^{N-1} > |c_N| \epsilon^N$ (as long as $\pi>\epsilon$) so 
the coefficient $c_{N-1}$ can be neglected. 

For $N$ even (not a multiple of 4), we compute $c_{N-2}$ giving 2 operators $\hat D_k, \hat D_{k'}$ operators
and the rest of the indices corresponding to $\hat C$ operators inside the trace in equation \ref{eqn:plambda2}.   The indices $j_*, j_*'$ must have $\hat C $ operators. 
 Due to equation \ref{eqn:CDs}, $\hat C$ operators must 
 be grouped in consecutive pairs.   That means that  $k-k'$ must be an 
 odd number so that even numbers of $\hat C$ can fit on both arcs of the unit circle; 
 \begin{align}
c_{N-2}  &= (-1)^{\frac{(N-2)}{2}} 2^{2-N}  \times \nonumber \\
 & \ \ \  \sum_{\substack{ k, k' \ne j_*, j_*'\\   k - k' \text{odd}}} 
\left(\lambda_* - \cos \frac{2 \pi k}{N}\right) \left(\lambda_* - \cos \frac{2 \pi k'}{N}\right).  \label{eqn:c2sum}
\end{align}
 Even though there are of order $N^2$ possible locations for $k,k'$ many of them can be 
 grouped in pairs 
that approximately cancel.  Given $k$, and $N$ even but not a multiple of 4, $N/2$ is odd and 
 the index $N/2 -k$ is separated from $k$ by an odd 
 number.   For  $k \ne N/2 -k' $ there is a term with indices $k, N/2 -k'$ that approximately 
has the opposite sign as the term with $k, k'$ (where we have used the approximation in equation \ref{eqn:approx}).
The terms that would not cancel would have 
 $k, k' = N/2-k$ on opposite sides of the unit circle. 
The largest terms that could be present could have $k, k'$ near $0, N/2$ 
which would give a factor of order 1 (as the product of the cosines is about $-1$).  
The coefficient should be somewhat  (a factor of order unity) larger than 
\begin{align}
c_{N-2} \sim2^{2-N}  \label{eqn:cN2}
\end{align}
as additional and smaller pairs of cosines would contribute to the sum. 
For $\epsilon <1$,  the coefficient $|c_{N-2}| \epsilon^{N-2}> |c_N|\epsilon^N$ 
 so the $c_N$ term can be neglected for $N$ even and not a multiple of 4. 

For $N$ even, not a multiple of 4, 
using equation \ref{eqn:cN2} 
for $c_{N-2}$ and equation \ref{eqn:d0} for $d_0$ and setting $p_{\hat h}(\lambda_* + \delta) = 0$ 
so that we find an eigenvalue, (and neglecting $c_k$ for $k<N-2$), we find a splitting 
of the eigenvalues near $\lambda_*$ 
\begin{align}
\delta & \sim \epsilon^{\frac{N-2}{2}} \lambda_*  \sim  \epsilon^{\frac{N-2}{2}} \frac{\pi}{N}  \nonumber \\
& \text{ for } N \text{ even, not a multiple of 4}.  \label{eqn:matche}
\end{align} 
which is approximately consistent with the minimum spacing we found numerically in section
\ref{sec:spacing} (equation  \ref{eqn:min_dist}). 
For $N$ odd using equation \ref{eqn:cN1} for $c_{N-1}$ and $d_0$ from equation 
\ref{eqn:d0} and setting $p_{\hat h}(\lambda_* + \delta) = 0$ (and neglecting $c_k$ for $k<N-1$), we find that 
\begin{align}
\delta & \sim \epsilon^{\frac{N-1}{2}} \frac{\pi}{N}   & \text{ for } N \text{ odd}  \label{eqn:matcho}
\end{align}
which is approximately consistent with the minimum spacing we found numerically in section
\ref{sec:spacing} (equation \ref{eqn:min_dist}). 

Note that 
we neglected to show  that $c_k$ can be neglected for $k < N-2$ (for $N$ even) or $k < N-1$ (for $N$ odd). 
Spacing constraints between
the cosine factors (due to the relations in equation \ref{eqn:CDs}) reduce the numbers of possible combinatorial factors and the cosines themselves are less than 1,  consequently many of these terms might  
 be relatively small.   Many terms cancel or approximately cancel because the cosines can be 
 both positive and negative and can be grouped in pairs that have similar size but opposite sign. 
 The approximation of equation \ref{eqn:approx} 
 can affect possible cancellations or near cancellations in these terms. 
 For example, when estimating $c_4$ (above), 
the terms nearly cancel each other, for either $N$ even (not a multiple of 4) or  $N$ odd, 
but to see this, we used $\hat D_j = (\lambda_* - \cos \frac{2 \pi j}{N}) \ket{0}\bra{0}$ 
and did not use the approximation of equation \ref{eqn:approx}. 
So far, we have failed to find a simple way (meaning one not involving combinatorics for each case) to show if $c_k$ for $k = 6$ to $N-3$ or $N-4$ (depending 
upon whether $N$ is even or odd) can be neglected, even though 
the correspondence between equations \ref{eqn:matche} and \ref{eqn:matcho} 
and our numerical estimates suggest that it might be possible. 
 A small change of coefficients of a polynomial may induce a dramatic change in its roots
(notoriously Wilkinson's polynomial; \cite{Wilkinson_1963}).  
Thus the rough estimates given here are not rigorous even though
 though they do approximately 
give the correct scaling we found numerically for the minimum distance between eigenvalues 
at the center of the circulating region.  

\section{Application of the Landau-Zener model} \label{ap:LZ}

For a drifting two state system with an avoided crossing in energy levels, the Landau-Zener model gives the probability of a diabatic transition 
\begin{align}
P_D = e^{-2 \pi \Gamma} \label{eqn:P_D}
\end{align}
with dimensionless quantity 
\begin{align}
\Gamma = \frac{A^2}{\hbar \alpha} \label{eqn:Gamma}
\end{align}
\citep{Zener_1932,Wittig_2005}
that depends on two numbers $\alpha$ and $A$. 
During the transition, the 
two eigenvalues of the Hamiltonian have a minimum separation $dE_{min} = 2A$ 
and the rate that the energy levels approach each 
other is $\dot E = \alpha$. 

If $\Gamma \gg 1$, then the drift is adiabatic.  When begun in an initial eigenstate, it remains 
 in it.  If $\Gamma \ll 1$, then the drift is diabatic.    The probability of a transition $P_D $ approaches 1. 
If initially begun in an eigenstate, the system transitions with high probability to the opposite eigenstate. 
For both $\Gamma \gg 1$ and $\Gamma \ll 1$, if the system begins in an eigenstate, 
then it is in an eigenstate after the avoided crossing.  A transition into 
 a superposition state is likely only if $\Gamma \sim 1$. 

If the probability of a diabatic transition is 1/2 then equations \ref{eqn:P_D} and \ref{eqn:Gamma} give 
the minimum value for the distance between the energy eigenvalues; 
\begin{align}
dE_{min,p1/2} &=2 \sqrt{  \frac{ \hbar\alpha \ln 2 }{2 \pi} } . \label{eqn:phalf}
\end{align}

For our drifting Harper model we draw upon Figure \ref{fig:cc3} to estimate the rate  $\alpha$ 
(in equation \ref{eqn:Gamma})
that energy levels approach each other if parameter $b$ (of $\hat h$) varies. 
In the circulating region $\alpha$ can be estimated from  the slopes in the spectral lattice shown in Figure \ref{fig:cc2}. 
The maximum and minimum energies in the spectrum 
are at $\sim \pm (|a| + |\epsilon|)$.  If the energy levels were equally spaced, they would be
separated by about $\frac{2(|a| + |\epsilon|)}{N}$.   Because the eigenstates are in nearly degenerate
pairs within  the separatrix, the distance between pairs of nearly degenerate 
eigenvalues is about twice this,  so we take $\Delta E \sim \frac{4(|a| + |\epsilon|)}{N}$. 
The parameter $b$ need only drift a distance of $2 \pi/N$ for an energy level 
to cover the distance between two levels initially at $b=0$ (see Figures \ref{fig:cc2} and \ref{fig:cc3}). 
This means that a single energy level drifts at a rate 
\begin{align}
\dot E & \sim  \frac{4(|a|+|\epsilon|) }{N} \frac{\dot b}{2 \pi/N}  \label{eqn:DotE}
\end{align} 
where $\dot b$ is the rate that $b$ varies.  
The relative drift rate between two crossing energy levels could be twice as large as $\dot E$ or 
\begin{align}
\alpha \approx \frac{4(|a| +| \epsilon|)}{\pi} \frac{\Delta b}{T},  \label{eqn:alpha}
\end{align} 
where $\Delta b$ is the total change of $b$ after a drift of duration  $T$ at a rate $\dot b$. 
We insert this into equation \ref{eqn:phalf} to estimate the distance 
between eigenvalues that gives a diabatic transition probability of 1/2 
\begin{align}
dE_{min,p1/2} 
& \approx 2 \sqrt{ \frac{ \ln 2}{ \pi^2}   (|a| +| \epsilon|)  \frac{\Delta b}{T/\hbar} }.
\label{eqn:de_min_half}
\end{align} 

We used equation \ref{eqn:de_min_half} to estimate the energy level spacing 
giving probability of 1/2 for diabatic transition at the different drift rates 
(of $b$) shown in Figure \ref{fig:drift}. 


\section{Comparing the eigenvalues of the Harper model to those of the Mathieu equation}
\label{ap:mathieu}

In this section we compare the eigenvalues of the Harper operator, with phase space  
 confined to a torus, to 
those of the quantized pendulum which is confined in angle $\phi$ but has momentum $p$ that
can go to infinity. The quantized pendulum is described by continuous wave functions $\psi(\phi)$ 
in an infinite dimensional Hilbert space, whereas state vectors for the Harper model are finite dimensional.  

Schr\"odinger's equation for the quantized pendulum \citep{Condon_1928}
\begin{align}
\left(\frac{\hat p^2}{2 I} - V_0 \cos \phi\right)\psi(\phi)  &= 
\left(-\frac{ \hbar^2}{2 I}  \frac{d^2 }{d\phi^2} - V_0 \cos \phi\right) \psi(\phi)\nonumber \\
& = E_\text{Sch}\psi(\phi) \label{eqn:Sch}
\end{align}
with moment of inertia $I$ and potential strength $V_0$,  
can be put in the form of the Mathieu equation with $z=\phi/2$. 
The Mathieu equation depends on a perturbation strength $q$, 
(usually taken to be real and non-negative) 
\begin{align}
\frac{d^2\psi(z)}{dz^2} + \left[\lambda - 2q \cos (2z)\right]\psi(z) = 0, 
\end{align}
where $\lambda$ is a real eigenvalue (also called a characteristic). 
Following \citet{Doncheski_2003},  
the eigenfunctions for the pendulum must exhibit $2 \pi$ periodicity in the variable
$\phi$, consequently we restrict eigenfunctions to those with  period $\pi$ in the variable $z$.  
The associated eigenvalues are those denoted $a_{2m}(q)$, for $m \ge 0$,  
with eigenfunction that is the even or cosine-like Mathieu function of the first kind 
$ce_{2m}(z,q)$ and eigenvalues denoted $b_{2m}(q)$, for $m >0$,  
with odd eigenfunctions that are odd or sine-like Mathieu functions of the first kind; $se_{2m}(z,q)$. 

The values of $q$ and $\lambda$  in Mathieu's equation that are 
consistent with Schr\"odinger's equation  for the quantized pendulum (equation \ref{eqn:Sch})  
\begin{align}
 q = \frac{4 I V_0}{\hbar^2} \qquad  \lambda = \frac{8 I E_\text{Sch}}{\hbar^2}
  \label{eqn:qdon}
 \end{align}
  (equation 13 by \citet{Doncheski_2003}).  

The number of quantum states that fits within a torus 
within phase space depends on Planck's constant. 
The Harper model was quantized with the effective value of Planck's constant 
$\hbar = \frac{2 \pi}{N}$ (following \citet{Quillen_2025} section II.A). 
Using this value of Planck's constant, and equation \ref{eqn:qdon} we relate 
the parameters $a,\epsilon$ of the Harper model 
 to the parameter $q$ in Mathieu's equation 
\begin{align}
q = \frac{\epsilon}{a} \left( \frac{N}{\pi} \right)^2. \label{eqn:scale_q} 
\end{align}
In the limit of low momentum the Harper equation has kinetic term $a p^2/2 $ 
matches a momentum of inertia $I=1/a$ in the associated quantized pendulum of equation \ref{eqn:Sch}. 
Again using $\hbar = \frac{2 \pi}{N}$ (following \citet{Quillen_2025} section II.A) for the quantization 
of the Harper model, and equation \ref{eqn:qdon}
we relate an eigenvalue $E$ of the Harper model to 
the eigenvalue $\lambda$ of the Mathieu equation 
\begin{align}
E &= \frac{\lambda}{2a} \left( \frac{ \pi }{N}\right)^2  + {\rm constant}. \label{eqn:scale_E}
\end{align}

In Figure \ref{fig:Mathieu} we compare the eigenvalues of the Harper operator
$\hat h(a,b,\epsilon) = a \cos (\hat p-b) + \epsilon \cos \phi $ 
with $a=1, b=0$,  $\epsilon = 0.4$ and dimension $N=50$ to the 
eigenvalues of the Mathieu equation that are computed using the scaling 
factors in equations
\ref{eqn:scale_q} and \ref{eqn:scale_E}.   
The characteristics $a_{2m}(q)$ (for $m \ge 0$) and  $b_{2m}(q)$ (for $m >0$) 
were computed with special functions available in python's \texttt{scipy} package. 
Both sets of eigenvalues were put in increasing order for this comparison, and the index $j$ 
on the $x$ axes in Figure \ref{fig:Mathieu} refer to the indices in this order, not $m$. 
After rescaling (with equation \ref{eqn:scale_E}), 
we shifted the spectrum of the Mathieu equation so that the lowest
eigenvalue matched that of the Harper model.  

Figure \ref{fig:Mathieu}a illustrates that within the potential well 
(at energies below that of the lower separatrix in the Harper model), 
the eigenvalues for both models are well separated and similar. 
Above the separatrix and in the region we called circulating, 
the eigenvalues in both systems are nearly degenerate.   The eigenvalues of the Mathieu equation 
diverge as the spectrum approaches that of a free rotor, whereas that of the Harper model has a 
symmetry which gives it an additional libration region at higher energies. 
The difference in behavior is because 
the kinetic energy for the pendulum is $\propto \hat p^2$ where as that in the Harper model
depends on $ \cos \hat p$. 

Figure \ref{fig:Mathieu}a shows that the spacing between eigenvalues at energy levels 
below 0 is similar.  While the Harper model has the smallest spacing in the middle of the circulating
region at an energy near 0, the spacing between pairs of nearly degenerate eigenvalues 
continues to drop rapidly as the index $j$ (or $m$) goes to infinity.  
Eigenvalues for the Mathieu equation are distinct unless $q=0$ (a result known as Ince's theorem) 
where as those of the Harper model are probably distinct unless $\epsilon=0$ or for the special 
case of the pair of zero eigenvalues when $N$ is a multiple of 4. 
Because the Harper model is finite dimensional, there is a minimum spacing between 
eigenvalues in the spectrum.  For a drifting system (for example with  $b$ slowly varying) 
there would be a drift rate that is slow enough that  a system begun in any 
eigenstate would remain in an eigenstate.   The quantized pendulum, modified so that 
it has kinetic energy $\propto (\hat p - b)^2$, can be similarly drifted by varying $b$.  
However the separation between pairs of 
eigenstates approaches zero as $m \to \infty$ for the quantized pendulum. 
No matter how slowly the quantized pendulum  is drifted, 
 there is always an energy above which 
avoided energy level crossings would be diabatic.   The drop in distances 
between energy levels is so rapid that for most laboratory settings 
the transition between adiabatic and diabatic behavior would not occur at 
a very high energy level.  

\begin{figure}[htbp]\centering
\includegraphics[width=1.67 truein]{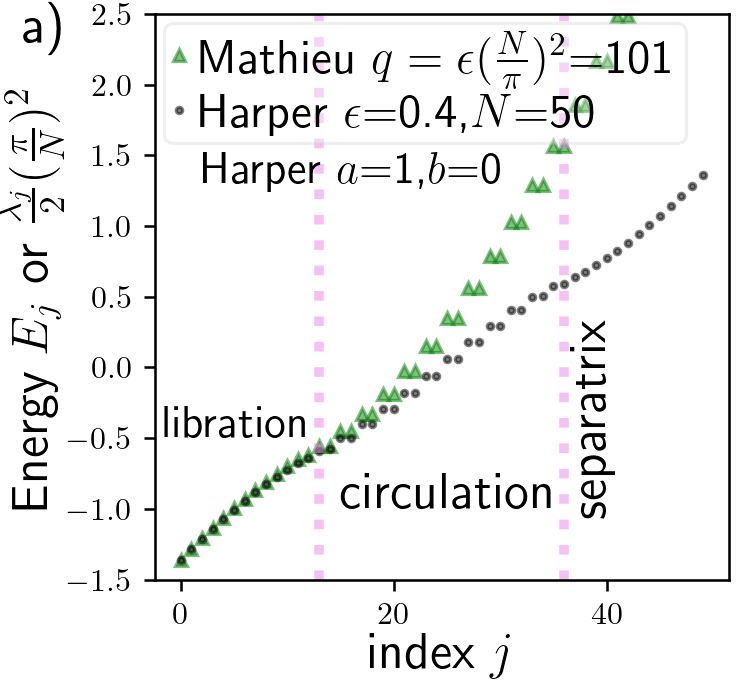}
\includegraphics[width=1.67 truein]{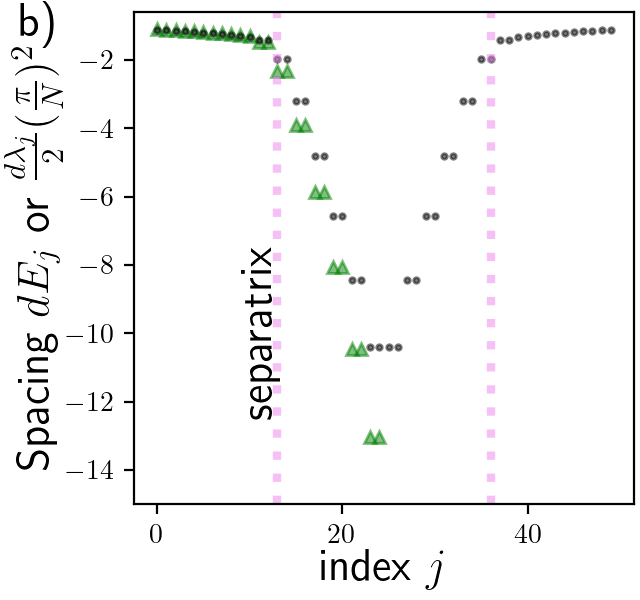}
\caption{a) We plot eigenvalues of a Harper operator (equation \ref{eqn:h0abe}) 
and those of a matching Mathieu equation. 
Scaling factors relating the two models are in equations \ref{eqn:scale_q} and \ref{eqn:scale_E}.   
The eigenvalues of the Mathieu equation match the quantized Harper model
within the libration region but diverge from the Harper model within the circulation region. 
Both systems have well separated eigenvalues 
in the libration region and nearly degenerate eigenvalues within the circulation region. 
b) Spacing between eigenvalues is plotted for both models.  For the Mathieu equation, 
as $j \to \infty$, the spacing between nearly degenerate pairs of eigenvalues 
continually decreases; $dE_j \to 0$.  
\label{fig:Mathieu}}
\end{figure}

Figure \ref{fig:Mathieu} suggests that a quantum system that resembles the pendulum 
can be approximated by a finite dimensional quantized Harper model. 
To compare the quantized Harper and pendulum operators, we compute the ratio of the energy level spacing 
at the bottom of the potential well to the width of the cosine potential for both systems.
For the Harper model this ratio  is 
\begin{align}
\frac{\Delta E}{2 \epsilon} &\sim\frac{ \hbar \omega_0}{2 \epsilon} \sim \sqrt{\frac{a}{\epsilon}} \frac{\pi}{N} 
\label{eqn:C6}
\end{align}
where we have used equation \ref{eqn:omega_0} for the libration frequency and $\hbar = 2\pi/N$
resulting from quantization.  
For the pendulum of equation \ref{eqn:Sch}, the dimensionless ratio is 
\begin{align}
\frac{\Delta E_\text{Sch}}{2 V_0} & \sim \hbar \frac{\sqrt{V_0/I} }{2 V_0} = 
\frac{\hbar}{\sqrt{IV_0}}.
\end{align}
The parameters of the Harper model can be chosen so that 
\begin{align} 
\frac{\Delta E}{2 \epsilon} &\approx   \frac{\Delta E_\text{Sch}}{2 V_0}  \to 
 \sqrt{\frac{a}{\epsilon}} \frac{\pi}{N}  = \frac{\hbar}{\sqrt{IV_0}}. \label{eqn:limit}
\end{align}  

To improve the approximation of the Harper model to the pendulum, $N$ can be increased, while 
maintaining the ratio of energy spacing to 
potential width so that it matches that of the pendulum.  In this limit,  the volume of the phase 
space torus of the Harper model increases with the number of states $N$, while the 
number of bound states within the potential well remains fixed. 
The resonant behavior remains at low momentum 
where the kinetic term which is proportional to $(1 - \cos \hat p) \sim \hat  p^2/2$. 
The larger the number of states, the better the kinetic energy of the Harper model 
approximates that of the pendulum in the lower energy phase space region near and 
containing librating states. 
To approximate the quantum pendulum with a Harper model with large $N$ 
equations \ref{eqn:C6} and \ref{eqn:limit} imply that different values of 
$N$ can be used  as long as $\epsilon N^2/a$ is chosen so that it  
 maintains the ratio of energy spacing to 
potential width in the potential well of the associated pendulum model,  via equation \ref{eqn:limit}.    
This condition is consistent with maintaining 
the value of $q$ (see equation \ref{eqn:scale_q}) for the associated Mathieu equation.  

\end{document}